\definecolor{sqlbgcolor}{cmyk}{0, 0, 0.4, 0}
\newcommand{\sys}{Membrane}
\newcommand{\sqlbg}[1]{{\setlength{\fboxsep}{1pt} \colorbox{sqlbgcolor}{$#1$}}}
\newcommand{\sqlview}{AC view}
\newcommand{\asqlview}{an AC view}
\newcommand{\sqlviews}{AC views}
\newcommand{\sqlViews}{AC Views}
\newcommand{\viewfamily}{AC view family}
\newcommand{\aviewfamily}{an AC view family}
\newcommand{\viewfamilies}{AC view families}
\newcommand{\Viewfamilies}{AC view families}
\newcommand{\secref}[1]{\S{}\ref{#1}}
\newcommand{\appref}[1]{Appendix~\ref{#1}}
\newcommand{\figref}[1]{Fig.~\ref{#1}}
\newcommand{\tabref}[1]{Table~\ref{#1}}
\newcommand{\queryref}[1]{Query~\ref{#1}}
\newcommand{\parhead}[1]{\noindent\textbf{#1.}}
\newcommand{\rwe}{\textsf{RWE}}
\newcommand{\yellow}{\textsf{yellow}}
\newcommand{\storesales}{\textsf{store\_sales}}
\newcommand{\rwestate}{\textbf{rwe\_state}}
\newcommand{\medication}{\textbf{medication}}
\newcommand{\diagnosis}{\textbf{diagnosis}}
\newcommand{\dropoffpickup}{\textbf{dropoff\_pickup}}
\newcommand{\salesstore}{\textbf{sales\_store}}
\newcommand{\salesdate}{\textbf{sales\_date}}
\newcommand{\rweobsstate}{\textbf{rwe\_obs\_state}}
\newcommand{\rweineqobs}{\textbf{rwe\_ineq\_obs}}
\newcommand{\rweineqstate}{\textbf{rwe\_ineq\_state}}
\newcommand{\rweineqor}{\textbf{rwe\_ineq\_or}}
\newcommand{\rweineqand}{\textbf{rwe\_ineq\_and}}
\newtheorem{theorem}{Theorem}
\newtheorem{definition}{Definition}
\newcommand{\prf}[2]{\mathsf{PRF}(#1,\,#2)}
\newcommand{\enc}[2]{\mathsf{Enc}(#1,\,#2)}
\newcommand{\encsymb}[0]{\mathsf{Enc}}
\newcommand{\dec}[2]{\mathsf{Dec}(#1,\,#2)}
\newcommand{\onetimeenc}[2]{\mathsf{OTE}(#1,\,#2)}
\newcommand{\onetimeencsymb}[0]{\mathsf{OTE}}
\newcommand{\concatnospace}[0]{||}
\newcommand{\concat}[0]{\,\concatnospace\,}
\newcommand{\definedas}[0]{\coloneqq}
\newcommand{\projkey}[1]{k^{\mathsf{proj}}_{#1}}
\newcommand{\selkey}[1]{k^{\mathsf{sel}}_{#1}}
\newcommand{\selkeyset}[1]{K^{\mathsf{sel}}_{#1}}
\newcommand{\view}[0]{\mathsf{view}}
\newcommand{\viewkeycomp}[1]{k^{\view{}}_{#1}}
\newcommand{\viewkey}[0]{k^{\view{}}}
\newcommand{\viewkeyset}[0]{K^{\mathsf{view}}}
\newcommand{\family}[0]{\mathsf{fam}}
\newcommand{\famkey}[0]{k^{\family{}}}
\newcommand{\tabkey}[0]{k^{\mathsf{tab}}}
\newcommand{\counter}[1]{\mathsf{count}_{#1}}
\newcommand{\predkey}[1]{k^\mathsf{pred}_{#1}}
\newcommand{\decrightarrow}[1]{\mathrel{\raisebox{-3pt}{$\xrightarrow{#1}$}}}
\newcommand{\rowcolor}[1]{{\color{teal} #1}}
\newcommand{\colcolor}[1]{{\color{olive} #1}}
\newcommand{\predcolor}[1]{{\color{red} #1}}
\newcommand{\sqlwildcard}[0]{\mathsf{*}}
\newcommand{\sqlselect}[0]{{\color{blue}\texttt{SELECT }}}
\newcommand{\sqlnsselect}[0]{{\color{blue}\texttt{SELECT}}}
\newcommand{\sqlfrom}[0]{{\color{blue}\texttt{ FROM }}}
\newcommand{\sqlwhere}[0]{{\color{blue}\texttt{ WHERE }}}
\newcommand{\sqlrswhere}[0]{{\color{blue}\texttt{WHERE }}}
\newcommand{\sqlnswhere}[0]{{\color{blue}\texttt{WHERE}}}
\newcommand{\sqlin}[0]{{\color{blue}\texttt{ IN }}}
\newcommand{\sqlnsin}[0]{{\color{blue}\texttt{IN}}}
\newcommand{\sqlnotin}[0]{{\color{blue}\texttt{ NOT IN }}}
\newcommand{\sqlnsnot}[0]{{\color{blue}\texttt{NOT}}}
\newcommand{\sqland}[0]{{\color{blue}\texttt{ AND }}}
\newcommand{\sqlnsand}[0]{{\color{blue}\texttt{AND}}}
\newcommand{\sqlor}[0]{{\color{blue}\texttt{ OR }}}
\newcommand{\sqlnsor}[0]{{\color{blue}\texttt{OR}}}
\newcommand{\sqllower}[1]{{\color{blue}\texttt{LOWER}}\texttt{(}#1\texttt{)}}
\newcommand{\sqlend}[0]{{\color{blue}{\texttt{;}}}}
\newcommand{\encfs}{blaze1993cfs,kallahalla2003plutus,cloudproof,goh2003sirius,keybase, tresorit,spideroak,sync,icloud_security}
\newcommand{\encdbsys}{hacigumus2002executing, popa2011cryptdb,  pappas2014blindseer, papadimitriou2016seabed, shafagh2015talos, shafagh2017pilatus, poddar2019arx, cash2014dynamic, kamara2018sql, burkhalter2020timecrypt,dauterman2022waldo,bater2017smcql,kamara2020optimal,cash2021improved}
\newcommand{\encsearchsys}{song2000searchable,jarecki2013ospir,cash2014dynamic,curtmola2006improved,dauterman2020dory,stefanov2014dsse,jutla2022efficient,chase2010structured,pappas2014blindseer}
\newcommand{\cryptacsys}{wang2016sieve, kumar2019jedi, burkhalter2020timecrypt, hang2015enki, sarfraz2015dbmask}
\newcommand{\adv}{\mathcal{A}}
\newcommand{\advsel}{\mathcal{A}_{\mathsf{sel}}}
\newcommand{\advstatic}{\mathcal{A}_{\mathsf{static}}}
\newcommand{\chl}{\mathcal{C}}
\newcommand{\hyb}[1]{\mathcal{H}_{#1}}
\newcommand{\secp}{\lambda{}}
\newcommand{\reltup}[1]{$\mathcal{R}_{#1}$}
\DeclareMathAlphabet{\mathcal}{OMS}{cmsy}{m}{n}
\definecolor{codegreen}{rgb}{0,0.6,0}
\titlespacing*{\section}{0pt}{4pt plus 1pt minus 0pt}{2pt plus 0pt minus 0pt}
\titlespacing*{\subsection}{0pt}{3pt plus 1pt minus 0pt}{1pt plus 0pt minus 0pt}
\titlespacing*{\subsubsection}{0pt}{2pt plus 1pt minus 0pt}{1pt}
\begin{document}

\setlength{\abovedisplayskip}{1pt}
\setlength{\belowdisplayskip}{1pt}
\setlength{\abovedisplayshortskip}{0pt}
\setlength{\belowdisplayshortskip}{0pt}

\setlength{\jot}{0pt}

\date{}

\title{\Large \bf \sys{}: A Cryptographic Access Control System for Data Lakes}

\author{
{\rm Sam Kumar}\\
\textit{UCLA}
\and
{\rm Samyukta Yagati}\\
\textit{UC Berkeley}
\and
{\rm Conor Power}\\
\textit{UC Berkeley}
\and
{\rm David E. Culler}\\
\textit{Google}
\and
{\rm Raluca Ada Popa}\\
\textit{UC Berkeley}
}

\maketitle

\begin{abstract}
Organizations use data lakes to store and analyze sensitive data.
But hackers may compromise data lake storage to bypass access controls and access sensitive data.
To address this, we propose \sys{}, a system that (1) cryptographically enforces data-dependent access control \emph{views} over a data lake, (2) without restricting the analytical \emph{queries} data scientists can run.
We observe that data lakes, unlike DBMSes, disaggregate computation and storage into separate trust domains, making at-rest encryption sufficient to defend against remote attackers targeting data lake storage, even when running analytical queries in plaintext.
This leads to a new system design for \sys{} that combines encryption at rest with SQL-aware encryption.
Using block ciphers, a fast symmetric-key primitive with hardware acceleration in CPUs, we develop a new SQL-aware encryption protocol well-suited to at-rest encryption.
\sys{} adds overhead only at the \emph{start} of an interactive session due to decrypting views, delaying the first query result by up to $\approx 20\times$; subsequent queries process decrypted data in plaintext, resulting in low amortized overhead.

\end{abstract}

\section{Introduction}\label{sec:intro}

Data lakes have emerged as a central paradigm for data analysis.
Their key innovation, compared to DBMSes, is to \emph{separate} compute resources (e.g., EC2) from storage resources (e.g., S3)~\cite{melnik2010dremel}.
This has two benefits.
First, it allows compute and storage to be scaled independently.
Second, it allows data scientists to analyze data using their framework of choice (e.g., Pandas, Spark, etc.).
Companies like Microsoft, Google, and Databricks provide data lake platforms~\cite{azuredatalake, biglake, deltalake} based on open file formats (e.g., Parquet~\cite{parquet}) in cloud storage.

Data lakes are increasingly used for sensitive data, like financial data~\cite{financialsummittalks2022} or healthcare data~\cite{healthcaresummittalks2022}.
Thus, each data lake user (e.g., data scientist) must be granted access to only the data that she needs.
The state of the art is to use cell-level, \emph{data-dependent} access policies.
For example, the pharmaceutical company Eisai demonstrated granting each data scientist access to patient data in only certain US States~\cite{esaisummittalk2021}.
This access policy is described by a SQL query/view such as:
\begin{equation}
\hspace{-1.4ex}\sqlbg{\sqlselect{}\sqlwildcard{}\sqlfrom{}\mathsf{RWE}\sqlwhere{}\mathsf{State}\sqlin{} (\mathsf{``IA"}, \mathsf{``IL"})\sqlend{}}\hspace{-0.5ex}
\label{query:intro_example}
\end{equation}
We refer to access control (AC) policies based on SQL views as \emph{\sqlviews{}}.
For data lakes, such access control mechanisms are provided directly by data lake platforms~\cite{securitysummittalk2021} and by third-party companies like Immuta~\cite{immutadatabricks, immutasummittalk2021} and Privacera~\cite{privaceradatabricks}.

Alas, data theft from cloud storage is common~\cite{s3hacks2017, verizonbreach2017, airportbreach2022}.
Access control mechanisms help, but are not foolproof.
In 2019, for example, an attacker
bypassed CapitalOne's Web Application Firewall (WAF) and gained access to sensitive data in Amazon S3, including 140,000 Social Security numbers and 80,000 bank account numbers~\cite{capitalonebreach2019}.

In light of such attacks, it is desirable to encrypt the data lake.
This would keep the data protected even if an attacker breaks into the storage.
Na\"ively using encryption, however, is weak; since all data scientists have the secret key, a \emph{single} compromised data scientist would undermine encryption for the entire data lake.
Instead, we want to \emph{cryptographically enforce access control}~\cite{\cryptacsys}, so each data scientist's key can decrypt only data she is allowed to access.

\textbf{How can we design a cryptographic access control system for data lakes?}
We answer this question with our system \sys{}.
\sys{} encrypts data at the storage servers to remove storage from the Trusted Computing Base (TCB), and ensures that each data scientist can only decrypt and analyze data matching the \sqlviews{} she is granted.
To achieve this, \textit{\sys{} is the first system to combine encryption at rest with SQL-aware encryption}.
At-rest encryption gives data scientists \ul{flexibility} to run arbitrary analytics tools like Pandas or SQL, and SQL-aware encryption is used to cryptographically enforce fine-grained, data-dependent \ul{access control} views.

\smallskip
\parhead{New system model}
One may try using an encrypted file system (EFS)~\cite{\encfs}, encrypted search system (ESS)~\cite{\encsearchsys}, or encrypted database (EDB)~\cite{\encdbsys}.
However, data lakes have two requirements, \ul{flexibility} and \ul{access control}, that render such designs unsuitable.

First, data scientists expect the \ul{flexibility} to analyze data using \emph{arbitrary} SQL queries or data science/ML frameworks.
In contrast, practical EDBs/ESSes limit users to only the small subset of SQL supported cryptographically, and EFSes do not support data-dependent queries (SQL) at all.

\sys{} achieves flexibility via at-rest encryption.
Data scientists first download encrypted table(s) from storage to their compute nodes, then decrypt the portion of the encrypted table(s) matching \sqlviews{} that they are granted, and, finally, run analytics on decrypted data \emph{in cleartext} at their compute nodes.
Crucially, data analysis is done over \emph{unencrypted} data, so data scientists have the flexibility to issue \emph{arbitrary} SQL queries and use any analytics framework (Pandas, Spark, etc.).
Only \sqlviews{}, usually simpler than analytical queries, are constrained by cryptography.
We call this the \textbf{Encrypted Data Lake (EDL) model}.

In using at-rest encryption, our insight is that data lakes' \emph{separation between compute and storage} can enhance the value of at-rest encryption.
To understand how, let us contrast data lakes with DBMSes.
In a traditional DBMS, at-rest encryption protects against attackers who only steal a disk drive; in the context of a remote attacker who compromises software, compute and storage are \emph{in the same trust domain} (e.g., the same server).
Protecting against such remote attackers when processing data in the cloud requires also removing compute from the TCB, like an EDB; this requires data analytics to compute on encrypted data, limiting EDBs' flexibility to only queries supported via cryptography~\cite{papadimitriou2016seabed}.
But in a data lake, compute resources are physically and logically distinct from storage resources, as they are often developed/administered by different engineering teams~\cite{power2021cosmos}.
For example, a remote attacker who gains access to an organization's storage (e.g., S3 buckets) has not necessarily compromised its compute (e.g., EC2 VMs).
Thus, the separation of compute and storage enables \sys{}'s at-rest encryption to protect against remote attackers who compromise software.
Specifically, \sys{} (1) fully removes storage from the TCB, and (2) trusts a data scientist's compute with \emph{only} data she is granted access to.

In a sense, EFSes/EDBs/ESSes have stronger security than the EDL model, because they hide data from both storage and compute at the cloud provider.
So, it may seem natural to adapt EFSes/EDBs/ESSes  to the EDL model by weakening their security.
Concretely, one could run an EFS/EDB/ESS server in storage and the client in compute, and have data scientists ``query'' storage for data in their \sqlviews{}.
This fails due to data lakes' second requirement: \ul{access control}.
Many EDBs/ESSes do not support access control.
EFSes, and the few EDBs/ESSes with access control, only allow access policies based on \emph{public} attributes like file paths, not \emph{private} data like the \textsf{State} field (\queryref{query:intro_example}).
Further, this approach requires running a cryptographic protocol in storage, not supported by current cloud storage offerings (e.g., AWS S3).

\sys{} delivers access control via a \textbf{new SQL-aware encryption protocol} compatible with existing cloud storage offerings.
This is possible because \sys{} solves a different problem than EDBs.
While EDBs compute the \emph{encrypted} result of a SQL query, \sys{}'s protocol produces the \emph{plaintext} data of a SQL \sqlview{}, using a decryption key for the view.
\sys{}'s access pattern leakage is limited to which table partitions are fetched from storage.
As data lakes often use fast intra-datacenter networks, one can fetch all of a table's partitions to hide access patterns, if needed (\secref{s:linear_scan}).

\smallskip
\parhead{Designing the protocol}
There are several major challenges in designing \sys{}'s new SQL-aware protocol.

First, there can be many \sqlviews{}, so it is untenable to add per-row data for each one.
Our insight is that we can add a \emph{single} unit of per-row cryptographic material for \emph{a large set of \sqlviews{}}.
We refer to such sets of \sqlviews{} as \emph{\viewfamilies{}} and represent them as query templates.
For example, the \viewfamily{} $\sqlbg{\sqlselect{}\mathsf{*}\sqlfrom{}\mathsf{RWE}\sqlwhere{}\mathsf{State}\sqlin{}?x\sqlend{}}$ includes all $2^{50}$ \sqlviews{} for sets of States.
Many \sqlviews{} are often captured by a few \viewfamilies{}.

Second, we must map SQL to cryptography.
Existing EDBs compute a query plan and have a subprotocol for each step, but this reveals at which step (e.g., which predicate in the \sqlnswhere{} clause) each row is filtered out.
To avoid leaking partial results, \sys{} instead \emph{rewrites} SQL queries into a single monolithic operator called \emph{\sys{}-canonical form}.

Third, due to the large scale of data lakes, cryptographic processing must be fast, ideally gigabytes per second.
This restricts us to fast symmetric-key tools like the block cipher AES, which, on its own, cannot support SQL and is limited to equality checks.
Our insight is to apply an arbitrary function $g$ to cell data \emph{before} applying the block cipher, enabling clauses of the form $\sqlbg{g(\mathsf{row})\sqlin{}?x}$.
This may still seem limiting on first glance, but by carefully choosing $g$, we can actually rewrite \emph{inequalities} (e.g., $<$, $\neq$), \sqlnsand{}s, and \sqlnsor{}s into this form.

Fourth, because \sys{} supports access policies based on \emph{private, encrypted} data, the client cannot easily know which rows of a table they can decrypt.
To solve this, we develop \emph{key-hiding tags} that allow a user to identify rows to decrypt \emph{up to 50,000$\times$} faster than na\"ively trying each row.

\smallskip
A 52-core server can decrypt \asqlview{} over 200 GB of in-memory patient data in 2--15 seconds using \sys{}.
Key-hiding tags provide a speedup of up to 50,000$\times$ and are crucial for achieving ``big data'' speeds.
The full process of downloading and decrypting a view takes 30--100 seconds.
A limitation of \sys{} is that size overheads are up to an order of magnitude, due almost entirely to losing compression when encrypting data.
\sys{} decrypts views at the \emph{start} of an interactive session; in a PySpark-SQL setup in Databricks, it increases the time to completion for the first analytical query by $\approx 20\times$ compared to non-cryptographic \sqlviews{}.
\sys{} is designed so that subsequent analytical queries in the session can use the already-decrypted view, with \emph{no} overhead from \sys{}.
Thus, its amortized overhead for an interactive session can be small.

\section{System Overview}
\label{sec:model}
\label{s:datalakes}
\label{s:lifecycle}
\label{s:cleaning}

\begin{figure}[t]
  \centering
  \includegraphics[width=\linewidth, trim=0.8cm 0cm 0.7cm 0cm, clip]{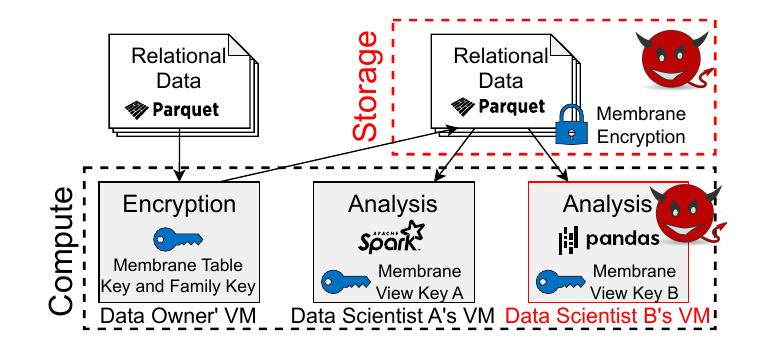}
  \caption{A data lake and how \sys{} integrates with it. \sys{} components are in blue, and our threat model is in red.}
  \label{fig:data_lake}
\end{figure}

We consider a system model where a \emph{data owner} has a sensitive dataset and wishes to allow \emph{data scientists} to run analytics jobs against the dataset.
The data owner specifies what part of the dataset each data scientist can access as an \emph{AC view}.
\sys{} enables the data owner to (1) encrypt the dataset, and (2) grant each data scientist cryptographic \emph{view key(s)} that can decrypt only data described by her AC view(s).
The data owner can later add rows without re-generating view keys.

The data owner places the encrypted dataset in a data repository, and data scientists run analytics jobs against the data by allocating compute resources distinct from storage.
For example, if the data repository is in cloud storage, data scientists may spawn VMs in the same cloud region for analytics jobs.
Data scientists analyze data using an \emph{analytics framework} (e.g., Spark-SQL, Pandas, etc.).
To process a compute job (e.g., SQL query for Spark-SQL), the framework downloads the relevant data from storage to the compute servers and processes the data there.
It can keep data in memory for an interactive session to avoid rereading it on each job.

The overall data lake consists of multiple \sys{} deployments, each with a different owner.
Each deployment contains some data, and the data owner is a privileged employee (or team) that determines who can access what for that deployment.
The data scientists to whom access is granted may be in different teams, or even different organizations altogether.

\subsection{Applying \sys{} to Data Lakes}
\label{sec:background}

Data lakes are varied in applications and deployment models.
For example, data lakes are used with unstructured data, such as raw logs, or multimodal data including images and video.

\sys{} requires the dataset to be in a structured, relational form.
Any unstructured data must first be converted to a structured form (i.e., \emph{cleaned}) before the data owner can apply \sys{}.
This requirement is fundamental to access control---unstructured data are not in a consistent format, so it is difficult to programmatically enforce access control, cryptographically or otherwise.
Moreover, this requirement is consistent with many data lake use cases.
Data lake platforms~\cite{deltalake, azure_synapse_parquet, power2021cosmos} are built on file formats for structured, relational data like Parquet~\cite{parquet} and ORC, and data lake access control offerings~\cite{immutasummittalk2021, privaceradatabricks, esaisummittalk2021, databricksAccessControl} target structured (e.g., tabular) data.
The established \emph{medallion architecture}~\cite{deltapipeline} involves cleaning data ahead of time, as \sys{} requires.

Because \sys{} only protects structured data, the process of converting unstructured data to structured data, if applicable to a deployment, must be protected via other means.
For example, one can place the unstructured data in a staging area separate from the data lake (e.g., an on-premises cluster).
Alternatively, one can use the data lake for this process, but encrypt any unstructured data with a symmetric key held only by the data owner or the party carrying out the conversion.

\parhead{Strawman \#1}
To better motivate \sys{}'s system model, we compare it to a strawman based on a trusted AC service.
The strawman is to protect data in storage with symmetric-key encryption, placing the key at the AC service.
The AC service decrypts data and computes AC views for data scientists.
It is trusted to see data contents and enforce access control.

Such an AC service is a central point of attack, which \sys{} eliminates.
This is because an AC service must accept requests from untrusted parties and be highly available, making it much harder to firewall and harden than a data owner, who does not need to host an online service with the secret key.
This is the same reason that an EFS design, in which clients store their keys~\cite{goh2003sirius}, is preferable to having all clients access the file system via a single trusted proxy holding the secret keys.
Similar arguments motivate delegable access control~\cite{kumar2019jedi} and HTTPS, in which Certificate Authorities access root keys rarely to lessen the risk of compromise.

To preserve the benefits of cloud storage, an AC service would need to be engineered to have very high availability and scalability to match cloud storage offerings.
The alternative, to integrate the AC service into the storage endpoint, would make it a central point of attack \emph{for the entire data lake}.

\subsection{Threat Model and Security Guarantees}
\label{s:informal_security}

Against a malicious adversary $\adv{}$ who has compromised the storage servers and some data scientists (red, \figref{fig:data_lake}), \sys{} guarantees that $\adv{}$ cannot see data, except what compromised data scientists are permitted to see by their \sqlviews{}.
We provide a \ul{\em formal cryptographic security definition and proof} in \appref{app:formalism}.

In a data lake, large tables (e.g, those gigabytes in size or larger) are usually \emph{partitioned} into multiple Parquet/ORC files, each storing a subset of the table's rows.
If a data scientist fetches only some of a table's partitions, then $\adv{}$ learns which partitions were fetched.
As discussed in \secref{s:linear_scan}, one can hide row-level access patterns from $\adv{}$ by having data scientists fetch \emph{all} rows in a table when calling \textsf{RevealView}.
Even that does not hide \emph{which tables} they access, or \emph{when} they access those tables.
\sys{} does not hide a table's schema, the number of rows in a table or partition, or size of each cell.
\sys{} does not provide anonymity.
\sys{} does not hide the \emph{positions} of cells accessible to compromised data scientists.
To limit leakage via cell positions, one can shuffle rows in a table before encrypting with \sys{}.

\sys{} is designed to be used with existing techniques for strong integrity guarantees~\cite{li2004sundr, mahajan2010depot, hu2020ghostor}.
For example, the data owner may sign updates to files to prevent $\adv{}$ from changing them arbitrarily and sign the entire data repository using a Merkle tree to prevent $\adv{}$ from selectively rolling back files.
Such techniques are orthogonal to \sys{}'s core contributions and are easy to integrate with \sys{}.

\subsection{AC View Families}
\label{s:view_families}

The decryption keys given to data scientists in \figref{fig:data_lake} grant access to \sqlviews{} specified as SQL.
How can we craft such decryption keys?
While functional encryption~\cite{boneh2011functional} enables this in theory, it is impractically slow for general functions.

To circumvent this, \sys{} slightly relaxes the model: Before \sys{} can generate a decryption key for \asqlview{}, the encrypted table must first be augmented with some cryptographic material.
However, this must be designed carefully; simple approaches requiring adding per-AC-view data to each row are undesirable because there can be many views.

Our insight is that we can add a single unit of per-row cryptographic material for \emph{a large set of \sqlviews{}}.
We refer to such sets of \sqlviews{} as \emph{\viewfamilies{}} and refer to the process of adding this material as \emph{instantiating} a view family.
We represent \viewfamilies{} as SQL queries with constants replaced by wildcards.
For example, the \viewfamily{} $\sqlbg{\sqlselect{}\sqlwildcard{}\sqlfrom{}\mathsf{customers}\sqlwhere{}\mathsf{Location}\sqlin{}?x\sqlend{}}$ includes \sqlviews{} where $?x$ is replaced by any set of strings (e.g., $\sqlbg{\sqlrswhere{}\mathsf{Location}\sqlin{}(\mathsf{``Phoenix"},\mathsf{``Mesa"})}$).

Once \aviewfamily{} is instantiated for a table, it is possible to generate a decryption key for any \sqlview{} in that \viewfamily{}.
This is better than adding per-view state to each row because a single \viewfamily{} can describe many  \sqlviews{} (e.g., many values for $?x$).
This idea, to group views into a small number of patterns (i.e., view families), is inspired by non-cryptographic attribute-based access control~\cite{immutasummittalk2021}.

\parhead{Strawman \#2}
To better motivate view families, we consider a strawman design that materializes each \sqlview{} as a separate Parquet file, and then uses file-level encryption.
This has two drawbacks.
(1) It requires maintaining \emph{many copies of data} in the data lake.
Data lake providers generally avoid this because of the risks of some copies becoming stale and the costs of keeping multiple copies of data up to date~\cite{fansummittalk2022}.
(2) Materializing \sqlviews{} can have a large storage footprint, particularly if there is a large overlap among \sqlviews{}.

Instantiating \viewfamilies{} in \sys{} also requires space.
Unlike the strawman, \sys{}'s extra space scales in the number of \emph{view families}, not \emph{views}.
This is a large reduction, as one view family describes many possible views.

\subsection{\sys{}'s API and Workflow}\label{s:workflow}

\begin{figure}[t]
    \setlength{\FrameSep}{2pt}
    \begin{framed}
        \small
        \begin{itemize}[leftmargin=*,topsep=0ex]
            \item $\mathsf{EncryptTable}(t) \rightarrow t', \tabkey{}$
            \begin{itemize}[leftmargin=*,topsep=0ex,noitemsep]
                \item Input $t$: table to encrypt
                \item Outputs $t'$, $\tabkey{}$: encrypted table $t'$ and its \emph{table key} $\tabkey{}$
            \end{itemize}
            \item $\mathsf{AddFamily}(t, \tabkey{}, \family{}) \rightarrow t', \famkey{}$
            \begin{itemize}[leftmargin=*,topsep=0ex,noitemsep]
                \item Inputs $t$, $\tabkey{}$: an encrypted table $t$ and its \emph{table key} $\tabkey{}$
                \item Input $\family{}$: SQL describing a view family
                \item Output $t'$: new version of $t$, with $\mathsf{fam}$ instantiated
                \item Output $\famkey{}$: \emph{family key} for $t'$'s instantiation of $\family{}$
            \end{itemize}
            \item $\mathsf{ViewGen}(\view{}, \famkey{}) \rightarrow \viewkey{}$
            \begin{itemize}[leftmargin=*,topsep=0ex,noitemsep]
                \item Input $\view{}$: SQL describing a view in $\family{}$
                \item Input $\famkey{}$: \emph{family key} for an instantiation of $\family{}$
                \item Output $\viewkey{}$: \emph{view key} corresponding to $\view{}$
            \end{itemize}
            \item $\mathsf{RevealView}(t, \viewkey{}, \mathsf{fil}) \rightarrow t'$
            \begin{itemize}[leftmargin=*,topsep=0ex,noitemsep]
                \item Inputs $t$, \textsf{fil}: encrypted table $t$ and partition filter \textsf{fil}
                \item Input $\viewkey{}$: \emph{view key} from a family instantiated in $t$
                \item Output: $\view{}$ applied to $t$'s unencrypted data
            \end{itemize}
        \end{itemize}
    \end{framed}
    \vspace*{-1ex}
    \caption{Summary of \sys{}'s API.}
    \label{fig:api}
\end{figure}

\figref{fig:api} summarizes \sys{}'s API and describes the EDL model, which we formalize in \appref{app:formalism}.
Upon obtaining a table $t$ (e.g., a set of Parquet files), the data owner (1) \textsf{Encrypt}s $t$ and puts the result in storage; (2) uses \textsf{AddFamily} to instantiate their desired \viewfamilies{} in the encrypted table; and (3) calls \textsf{ViewGen} to generate \emph{view keys} for desired \sqlviews{}, and gives view keys to data scientists to grant them access.
The data owner can call \textsf{AddFamily} and \textsf{ViewGen} at any time to instantiate additional \viewfamilies{} and grant access to additional \sqlviews{}.
To run analysis jobs, data scientists call \textsf{RevealView} to decrypt \sqlviews{} they are granted.

We envision that data scientists will call \textsf{RevealView} \emph{once per table at the start of an interactive session} to materialize the views locally at the compute servers.
Then, they can run compute jobs on these materialized views in plaintext using whatever framework they wish.
Essentially, they incur the overhead of \textsf{RevealView} once and can then analyze the decrypted view in plaintext indefinitely with no overhead.
That said, it may be unavoidable to call \textsf{RevealView} again in some cases, so we still designed \textsf{RevealView} to be performant.

\sys{} does not support modifying data in place; modern data lakes (e.g., Lakehouse systems) often implement logical modifications as physical appends~\cite{deltalake, armbrust2021lakehouse}. 

Access revocation can be achieved, in principle, via \emph{lazy revocation}~\cite{kallahalla2003plutus}.
The principle is that one cannot make a data scientist ``forget'' data she was previously granted, but one can hide future rows added after revoking access.
Specifically, one can use new family keys when encrypting future rows, and generate fresh view keys from those family keys to grant to users who were not revoked.
With this design, old data remain visible to a revoked party, but new data are not.

\subsection{System Architecture}
\label{s:architecture}

\begin{figure}[t]
    \centering
    \includegraphics[width=\linewidth]{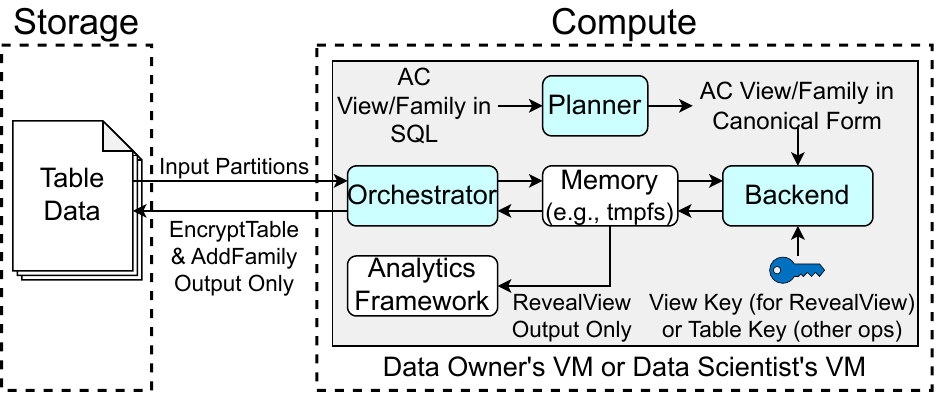}
    \caption{\sys{}'s architecture; its components are in blue.}
    \label{fig:architecture}
\end{figure}

\sys{} has a \emph{planner}, \emph{backend}, and \emph{orchestrator} (\figref{fig:architecture}).

\subsubsection{Planner (\secref{sec:planner})}
\label{s:motivation_canonical}

A critical design decision in \sys{} is to use a \emph{single} cryptographic protocol that supports only views/families of a particular form, which we call \emph{\sys{}-canonical form}.
\sys{} has a \ul{planner} that rewrites SQL views/families into \sys{}-canonical form, off the critical path.
This departs from prior systems, which directly support SQL by composing \emph{multiple} subprotocols for different subexpressions within a SQL query.

To understand why we design \sys{} this way, consider EDBs.
EDBs like CryptDB have cryptographic subprotocols for each operator (e.g., subexpressions of the \sqlnswhere{} clause) and compose them to execute a query.
Consider \queryref{query:edb_example}:
\begin{equation}
\hspace*{-0.25ex}
\sqlbg{\sqlselect{}\sqlwildcard{}\sqlfrom{}t\sqlwhere{}a\texttt{ = "foo"}\sqland{}b\texttt{ < 150};}
\hspace*{-0.25ex}
\label{query:edb_example}
\end{equation}
An EDB like CryptDB has separate subprotocols to check if a row matches $\sqlbg{a\texttt{ = "foo"}}$ (e.g., deterministic encryption) and if a row matches $\sqlbg{b\texttt{ < 150}}$ (e.g., order-preserving encryption), and would run both at the server to filter out rows.
Unfortunately, this does not work for \sys{} because a client would learn \emph{which predicates match even for rows outside of the \sqlview{}}, leaking information about $a$ and $b$ for those rows.

\sys{} avoids this issue because \sys{}-canonical form is supported in cryptography as a \emph{single monolith} that does not leak the results of subexpressions.

Choosing \sys{}-canonical form is tricky because it must be cryptographically efficient, yet general enough that complex SQL forms can be rewritten to it.
We identify the appropriate canonical form to have a \sqlnswhere{} clause as a disjunction of predicates $\sqlbg{g(\mathsf{row})\sqlin{}?x}$, where $g$ is an \emph{arbitrary function}.
This lets \sys{} support \textbf{$\sqlnsand{}$s and inequalities} and can be implemented with only fast, symmetric-key cryptography.

\subsubsection{Backend (\secref{sec:backend})}
\label{s:protocol_intuition}

The \ul{backend} executes \sys{}'s cryptographic protocol on in-memory data.
In \textsf{EncryptTable}/\textsf{AddFamily}/\textsf{RevealView}, \sys{}'s backend can process partitions in parallel on multiple CPUs, producing one output partition for each input partition.

To convey the essence of \sys{}'s protocol, we present the following \emph{highly simplified example}.
Take the view family $\sqlbg{\sqlselect{}\mathsf{bname}}\allowbreak\sqlbg{\sqlrswhere{}\allowbreak\mathsf{color}}\allowbreak\sqlbg{\sqlin{}?x\sqlend{}}$ for the table in \figref{fig:example_relation}.
In \textsf{AddFamily}, (1) for each color (blue, red, green) we sample a \emph{selection key} ($k_\mathsf{blue}$, $k_\mathsf{red}$, $k_\mathsf{green}$), and (2) we encrypt each row's \textsf{bname} with the key corresponding to its \textsf{color}.
In \textsf{ViewGen}, we map each element of $?x$ to its selection key.
For example, for the view $\sqlbg{\sqlselect{}\mathsf{bname}\sqlwhere{}\allowbreak\mathsf{color}\sqlin{}\allowbreak(\mathsf{red}, \mathsf{green})\sqlend{}}$, the view key is $\{k_\mathsf{red}, k_\mathsf{green}\}$.
In \textsf{RevealView}, we use $k_\mathsf{red}$ to decrypt rows 2 and 4 and $k_\mathsf{green}$ to decrypt row 3.

Systems like SiRiUS~\cite{goh2003sirius} and CryptDB~\cite[\S{}4]{popa2011cryptdb} associate encryption keys with filenames and principals, respectively---the core insight in the simplified protocol above is to associate keys with the \textsf{color} field as if it were a filename/principal.
\sys{}'s actual protocol is more complex, as it adds optimizations and levels of indirection to efficiently support multiple view families, multiple $\sqlnsor{}$ predicates in the $\sqlnswhere{}$ clause, etc.

\textbf{Two additional ideas} in \sys{}'s protocol are of particular importance.
First, instead of having a selection key for each value of a field (e.g., \textsf{color}),
we have a selection key for each output of an \emph{arbitrary function} $g$ applied to row contents.
This allows $\sqlnswhere{}$ clauses like $\sqlbg{g(\mathsf{row})\sqlin{}?x}$.
Second, whereas filepaths/principals in CryptDB are public, the \textsf{color} field in the above example is hidden.
Thus, users need a way in \textsf{RevealView} to quickly identify which rows to decrypt and which keys to use.
To achieve this, we use ideas from a network middlebox protocol~\cite{sherry2015blindbox} to develop \emph{key-hiding tags}, which allow data scientists to identify rows to decrypt \emph{up to 50,000$\times$ faster} than na\"ively trying to decrypt each row.

\subsubsection{Orchestrator (\secref{sec:orchestrator})}

The \ul{orchestrator} fetches partitions of a table from storage to compute and invokes \sys{}'s backend on them.
For \textsf{EncryptTable}/\textsf{AddFamily}, it writes the output partitions back to storage.
For \textsf{RevealView}, the output partitions contain decrypted, plaintext data; they are kept at the compute nodes (e.g., in \texttt{tmpfs}).
A data scientist can then load the decrypted data into an analytics framework on those compute nodes.

\subsection{Limitations}

\sys{} supports only a subset of SQL forms in its \sqlviews{} (\secref{s:supported_sql}).
In \secref{sec:evaluation}, we show that the SQL forms \sys{} supports can capture a number of access policies based on realistic use cases.
Further, in \sys{}, the restrictions on SQL apply only to \emph{AC views}, not \emph{analytical queries}.
This is important because analytical queries may be more complex than \sqlviews{} (see \appref{app:analytical_queries}).

Differential privacy (DP)~\cite{dwork2006differentialprivacy} is often applied to aggregates.
Because \sys{}'s views do not support aggregates, they do not support DP.
Still, if the table itself contains aggregates, then the data owner can apply DP before calling \textsf{EncryptTable}.
Similarly, a data scientist who calls \textsf{RevealView} and trains an ML model on the result can apply DP to the model.

\section{\sys{}'s Orchestrator}
\label{sec:orchestrator}

The orchestrator determines the set of partitions to process (\secref{s:linear_scan}) and divides it into \emph{batches}, processed in a streaming fashion.
For each batch, it (1) downloads the partitions in that batch from data lake storage to a compute node's memory, (2) invokes \sys{}'s backend to compute the output partitions using multiple CPU cores, and (3) if needed, writes the output partitions back to storage.
To overlap computation and I/O, the orchestrator \emph{pipelines} the above stages, processing batches in parallel.
Parallel processing of partitions within a batch (\secref{s:architecture}) serves a different role---to use multiple CPU cores.

\parhead{Choosing which partitions to fetch}
\label{s:linear_scan}
For \textsf{RevealView}, the orchestrator may need to fetch only a subset of a table's partitions.
For example, \asqlview{} may be fully contained within a subset of a table's partitions.
Or, a data scientist may only wish to analyze part of her \sqlview{}.
Fetching only some partitions, however, results in \emph{access pattern leakage}---the storage servers learn which rows were or were not fetched.
This is not unique to \sys{}; most EDBs and ESSes leak which rows/documents their clients query.
Sadly, prior research shows that this seemingly innocuous metadata leakage can imply leakage of actual encrypted data ~\cite{naveed2015inference, grubbs2016breaking, grubbs2017why, grubbs2018breaking, cash2015leakage, zhang2016fileinjection}.
Merely partitioning tables with respect to certain columns, as is typical (e.g., so all rows in a partition have the same value for those columns), could leak data via the sizes of partitions.

To reduce such leakage, one can partition tables independently of their contents, and \emph{fetch all partitions} in a table for \textsf{RevealView}.
Because data lakes use \emph{local-area} or \emph{intra-datacenter} networks (e.g., within a cloud region) where bandwidth is plentiful and cheap, the network cost of fetching all partitions is less significant in data lakes than in EDBs/ESSes.

Still, fetching all partitions may be undesirable due to storage I/O costs.
Thus, some users may wish to incur access pattern leakage for better efficiency.
The decision may depend on the data semantics and partitioning scheme.
For example, consider the NYC Taxi Dataset~\cite{yellow-cab-dataset}, partitioned by time; access pattern leakage may be acceptable if partitioned at month granularity, but not if partitioned at minute granularity.

\sys{} allows data scientists/owners/administrators to choose the best option for each application.
Data owners choose how to partition a dataset, and data scientists choose which partitions to fetch via a filter ``\textsf{fil}'' (\figref{fig:api}), a range of partition IDs in our implementation.
If \textsf{fil} is specified, then the orchestrator fetches only matching partitions.

\section{\sys{}'s Planner and Canonical Form}
\label{sec:planner}

\subsection{Supported SQL Forms}
\label{s:supported_sql}
\sys{} supports \sqlnswhere{} clauses with conjunctions (\sqlnsand{}) and disjunctions (\sqlnsor{}) of \emph{predicates}.
Each predicate consists of a field, an operation, and a wildcard.
The field can be \emph{any deterministic function} applied to the contents of a row.
The operation can be $=$, $<$, $\leq$, $>$, $\geq$, or $\neq$ for integer types and $=$ or $\neq$ for strings.
For example, for a table with column names $a$ (string), $b$ (integer), and $c$ (integer), valid predicates are $\sqlbg{\sqllower{a} =\ ?x}$, $\sqlbg{b \geq\ ?x}$, or $\sqlbg{b^2 + c \neq\ ?x}$.

In general, the \sqlnsselect{} clause must include fields used in the \sqlnswhere{} clause.
The reason is that the process of decrypting a row reveals which disjunctive predicates in canonical form are true for that row, leaking information about those fields.
Having those fields in the \sqlnsselect{} clause makes this explicit.

\subsection{\sys{}-Canonical Form}
\label{s:canonical_form}

\sys{}-canonical form is as follows:
\begin{multline}
\sqlbg{\sqlselect{}\mathsf{columns}\sqlfrom{}t\sqlwhere{}} \\
\sqlbg{g_1(\mathsf{row})\sqlin{}?x_1\sqlor{}g_2(\mathsf{row})\sqlin{}?x_2\sqlor{}\ldots\sqlend{}}
\label{query:canonical_form}
\end{multline}
Here, $g_1, g_2, \ldots$ are arbitrary functions.
For example, suppose that $a$ (int), $b$ (int), and $c$ (string) are columns of the table $t$.
Then, the following is in \sys{}-canonical form:
\begin{multline}
\sqlbg{\sqlselect{}a\texttt{, }b\texttt{, }c\sqlfrom{}t\sqlwhere{}}\\
\sqlbg{b\sqlin{}?x_1\sqlor{}\sqllower{c}\sqlin{}?x_2\sqlor{}a\texttt{ + }b\sqlin{}?x_3\sqlend{}}
\label{query:view_family_example}
\end{multline}
Here, $\mathsf{columns}$ is the list ``$a\texttt{, }b\texttt{, }c$,'' $g_1$ is a function that returns field $b$ from a record, $g_2$ is a function that returns field $c$ from a record transformed to lowercase, and $g_3$ is a function that returns the sum of fields $a$ and $b$ from a record.

A canonical-form \viewfamily{} describes \sqlviews{} obtained by replacing wildcards $?x_i$ with \emph{sets} of values.
For example, the view below belongs to the view family in \queryref{query:view_family_example}:
\begin{multline}
\sqlbg{\sqlselect{}a\texttt{, }b\texttt{, }c\sqlfrom{}t\sqlwhere{}}\\
\sqlbg{b\texttt{ = }7\sqlor{}b\texttt{ = }8\sqlor{}\sqllower{c}\texttt{ = "hello"}\sqlend{}}
\label{query:view_example}
\end{multline}
Here, $x_1$ was replaced with $\{7, 8\}$, $x_2$ was replaced with $\{\texttt{"hello"}\}$, and $x_3$ was replaced with $\varnothing$.

\subsection{Rewriting \sqlViews{} into Canonical Form}
\label{s:rewriting}
\label{s:ands}
\label{s:ranges}
\label{s:planner}

To support \sqlnsand{} we use ``secure concatenation''---that is, concatenation that unambiguously delimits the concatenated items.
We denote the secure concatenation of $a$ and $b$ as $a \concat b$.
Our idea is to transform a conjunction of $\sqlnsin{}$ clauses into a single $\sqlnsin{}$ clause of secure concatenations.
For example, $\sqlbg{b\sqlin{}(7)\sqland{}\sqllower{c}\sqlin{}(\texttt{"ab"})}$ becomes $\sqlbg{b \concat \sqllower{c}\sqlin{}(7 \concat \texttt{"ab"})}$. 
The result grows multiplicatively with the sizes of the input clauses.
For example, $\sqlbg{b\sqlin{}(7, 8)\sqland{}\sqllower{c}\sqlin{}(\texttt{"ab"}, \texttt{"cd"})}$ becomes $\sqlbg{b \concat \sqllower{c}\sqlin{}(7 \concat \texttt{"ab"}, 7 \concat \texttt{"cd"}, 8 \concat \texttt{"ab"}, 8 \concat \texttt{"cd"})}$.
We call this as the \emph{combination effect} and measure it in \secref{sec:evaluation}.

\begin{figure}
  \centering
  \includegraphics[width=\linewidth]{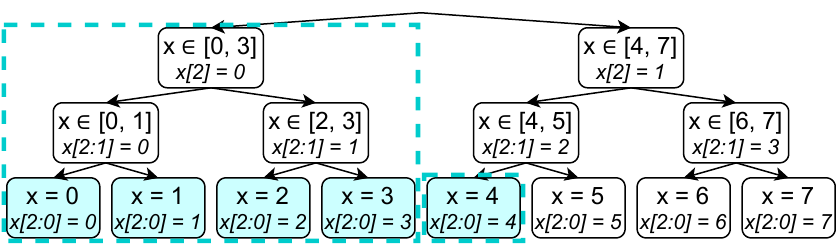}
  \caption{A binary tree covering all 3-bit integers. The range $\sqlbg{0\leq x\leq 4}$ is covered by subtrees at $\sqlbg{x[2] = 0}$ and $\sqlbg{x[2:0] = 4}$.}
  \label{fig:tree}
\end{figure}

To support inequalities, we first observe that inequality predicates (\texttt{<}, \texttt{>}, \texttt{<=}, \texttt{>=}, and \texttt{!=}) can all be transformed into \emph{range} predicates of the form $\sqlbg{a\leq x\leq b}$.
For example, $\sqlbg{x\texttt{ < }a}$ can be transformed into $\sqlbg{\ell \leq x \leq a - 1}$, where $\ell$ is the lowest integer.
Similarly, $\sqlbg{x\texttt{ != }a}$ (equivalently, $\sqlbg{x\sqlnotin{}(a)}$) can be transformed into $\sqlbg{\ell \leq x \leq a - 1\sqlor{}a + 1 \leq x \leq h}$, where $h$ is the highest integer.
Next, we rewrite each range ($\sqlbg{a\leq x\leq b}$) into a disjunction of $\sqlnsin{}$ clauses.
A na\"ive approach is to list each value in the range.
Instead, our approach, inspired by prior bit tree techniques~\cite{kumar2019jedi}, is to imagine a tree over the domain of the integer type.
We represent a range as a list of subtrees \emph{logarithmic in the length of the range}, and represent each subtree as an $\sqlnsin{}$ clause.
\figref{fig:tree} depicts the example $\sqlbg{0 \leq x \leq 4}$ over a 3-bit integer.
The full canonical-form expression for $\sqlbg{0 \leq x \leq 4}$ is $\sqlbg{x[2:0]\sqlin{}(4)\sqlor{}x[2]\sqlin{}(0)}$.

Increasing the branching factor makes the tree shallower; this means fewer predicates in the \viewfamily{}, but more values per wildcard in the \sqlview{}.
We discuss this trade-off in \secref{sec:evaluation}.
We can also support \texttt{!=} for strings by mapping strings to integers (e.g., with collision-resistant hashing).

\section{\sys{}'s Backend}
\label{sec:backend}

\sys{}'s cryptographic protocols have the syntax:
\begin{itemize}[leftmargin=*,topsep=0ex,noitemsep]
    \item $\mathsf{EncryptTable}(t, \tabkey{}, p) \rightarrow t'$
    \item $\mathsf{AddFamily}(t, \tabkey{}, \family{}, \famkey{}, p) \rightarrow t'$
    \item $\mathsf{ViewGen}(\view{}, \famkey{}) \rightarrow \viewkeyset{}$
    \item $\mathsf{RevealView}(t, \viewkeyset{}, p) \rightarrow t'$
\end{itemize}
These are the operations in \figref{fig:api}, with three minor differences.
First, the above protocols operate on a single \emph{partition}, not an entire table; $p$ is the partition ID, counting upwards starting at $1$ (so $p \neq 0$).
Second, while users of \sys{} think of $\viewkey{}$ as a key, it is actually a \emph{set} of keys.
So, in this section, we write $\viewkey{}$ as $\viewkeyset{}$ (capital letter denotes that it is a set).
Third, $\tabkey{}$ and $\famkey{}$ are \emph{inputs}, not \emph{outputs}, as these are \emph{partition-level} operations, and the same table/family key is used for each partition of the table.
For the table-level \textsf{EncryptTable} operation (in \figref{fig:api}), the backend samples $\tabkey{}$ uniformly at random, and then uses that $\tabkey{}$ for all partitions of the table.
For the table-level \textsf{AddFamily} in \figref{fig:api}, the backend accepts $\tabkey{}$ (the table key used to encrypt $t$) as an argument, samples $\famkey{}$ uniformly at random, and then uses those $\tabkey{}$ and $\famkey{}$ for all partitions.

\subsection{Cryptographic Primitives}
\label{s:preliminaries}

A \emph{pseudorandom function} (PRF) is a deterministic function that takes as input a key $k$ and a message $x$.
We denote its application as $\prf{k}{x}$.
To a party who does not know $k$, a PRF's output, for each $x$, appears uniformly random.
The key $k$ is a $\lambda$-bit string; $\lambda = 128$ in our implementation.
We typically assume that the message $x$ and output are also $\lambda$-bit strings, but sometimes allow $x$ to have arbitrary length.

We denote symmetric-key encryption of message $m$ with key $k$ as $\enc{k}{m}$.
The scheme must have two properties.
First, it must be CPA-secure~\cite{boneh2020cpa}: Informally, to any party who does not know $k$, $\enc{k}{m_1}$ and $\enc{k}{m_2}$, for $m_1$ and $m_2$ of equal length, appear to be identically distributed.
Second, it must be key-private~\cite{abadi2000twoviews, bellare2001keyprivacy}: Informally, to any party who does not know $k_1$ and $k_2$, $\enc{k_1}{m}$ and $\enc{k_2}{m}$, for any $m$, appear to be identically distributed.
Our $\mathsf{PRF}$ and $\mathsf{Enc}$ instantiations use the AES block cipher, which has hardware acceleration in commodity x86-64 CPUs via AES-NI.

$\mathsf{OTE}$ (``one-time enc'') denotes an encryption scheme optimized for encrypting only a single value (e.g., one-time pad for small messages).
As described in \secref{s:ands}, $a \concat b$ denotes concatenation of $a$ and $b$ with unambiguous delimitation.

\subsection{Protocol Summary}
\label{s:single_table_architecture}

\sys{}'s backend consists of three layers: \emph{projection}, \emph{selection}, and \emph{tagging}.
\textsf{AddFamily} adds a column for each layer in order; \textsf{RevealView} works in the opposite order.

The \emph{projection layer} handles the $\sqlnsselect{}$ clause.
For \textsf{AddFamily}, this layer adds a \emph{projection column} to the table and computes a \emph{projection key}, $\projkey{r}$, for each row (index $r$).
In \textsf{RevealView}, the projection key for a row is used with the projection column to decrypt the $\sqlnsselect{}$ed fields for that row.

The \emph{selection layer} handles the $\sqlnswhere{}$ clause.
For \textsf{AddFamily}, this layer adds a \emph{selection column} to the table, containing an encryption of $\projkey{r}$ for each row (index $r$).
For each row, it computes a set $\selkeyset{r}$ of \emph{selection keys}.
$\viewkeyset{}$ contains one key per wildcard value $?x_j$ in the view.
Crucially, they are computed such that the rows matching the view are exactly the rows for which $\viewkeyset{} \cap \selkeyset{r} \neq \varnothing$.
In \textsf{RevealView}, the user decrypts the selection column for each matching row using a key in $\viewkeyset{} \cap \selkeyset{r}$, to get $\projkey{r}$.
$\projkey{r}$ is used to decrypt the projection column to get \sqlnsselect{}ed fields in matching rows.

$\viewkeyset{}$ may contain many keys.
The \emph{tagging layer} adds a \emph{tagging column}, used in \textsf{RevealView} to quickly determine which key(s) in $\viewkeyset{}$ are in $\selkeyset{r}$.
The column contains, in each row, a tag computed from each key in $\selkeyset{r}$.
Checking if a key in $\viewkeyset{}$ matches a tag can be far more efficient than na\"ively  trying to decrypt the row using each key in $\viewkeyset{}$.

We provide a \textbf{full protocol description} in \appref{app:full_protocol_description}.
Below, we explain our protocol using the \viewfamily{} in \queryref{query:family_running_example} and the table in \figref{fig:example_relation} as a running example.
\begin{multline}
\sqlbg{\sqlselect{}\mathsf{bname}\texttt{, }\mathsf{color}\sqlwhere{}}\\
\sqlbg{\mathsf{bname}\sqlin{}?x_\predcolor{1}\sqlor{}\mathsf{color}\sqlin{}?x_\predcolor{2}\sqlend{}}
\label{query:family_running_example}
\end{multline}

\begin{figure}[ht]
    \small
    \centering
    \begin{tabular}{c|c|c|c|}
    \multicolumn{1}{c}{$r$} &
    \multicolumn{1}{c}{{\bf bid} ($c = \colcolor{1}$)} & \multicolumn{1}{c}{{\bf bname} ($c = \colcolor{2}$)} & \multicolumn{1}{c}{{\bf color} ($c = \colcolor{3}$)}\\\cline{2-4}
    \rowcolor{1} & 101 & Interlake & blue\\\cline{2-4}
    \rowcolor{2} & 102 & Interlake & red\\\cline{2-4}
    \rowcolor{3} & 103 & Clipper & green\\\cline{2-4}
    \rowcolor{4} & 104 & Marine & red\\\cline{2-4}
    \end{tabular}
    \caption{Relation of boats' IDs, names, and colors~\cite{ramakrishnan2003relational}.}
    \label{fig:example_relation}
\end{figure}

\subsection{The \textsf{EncryptTable} Operation}

\textsf{EncryptTable} encrypts each cell of a table with a separate key, called a \emph{cell key}.
Cell keys are a layer of indirection---\textsf{AddFamily} encrypts the cell keys according to the view family, and \textsf{RevealView} first decrypts the cell keys for cells matching \asqlview{} and then uses the cell keys to decrypt cell data.

The party running \textsf{EncryptTable} (i.e., the data owner) need \emph{not} remember the cell keys.
The data owner only stores $k$ ($\secp{}$ bits) for each table; \textsf{EncryptTable} uses a PRF to derive cell keys from the table key $k$ on the fly.
Concretely, each cell in a partition of $t$ is identified by its row index $r$ within the partition and its column index $c$.
For each row in partition $p$, we derive a \emph{row key}\footnote{Each partition has its own space of keys. To make this explicit, we could have denoted row keys as $k_{p, r}$ instead of $k_r$, denoted cell keys as $k_{p, r, c}$ instead of $k_{r, c}$, and similarly carried an additional $p$ subscript throughout.} from $k$ as $k_r \definedas{} \prf{k}{p \concat r}$.
For each cell in row $r$, we derive a \emph{cell key} from $k_r$ as $k_{r, c} \definedas{} \prf{k_r}{c}$.
Then, we encrypt each cell using its cell key.
The cell keys are not used to encrypt anything else (\sys{} does not allow edits in place), so we use $\mathsf{OTE}$.
See \figref{fig:example_encrypt}.

\begin{figure}[ht]
    \small
    \centering
    \setlength\tabcolsep{4pt}
    \begin{tabular}{c|c|c|c|}
    \multicolumn{1}{c}{$r$} &
    \multicolumn{1}{c}{bid ($c = \colcolor{1}$)} & \multicolumn{1}{c}{bname ($c = \colcolor{2}$)} & \multicolumn{1}{c}{color ($c = \colcolor{3}$)}\\\cline{2-4}
    \rowcolor{1} & $\onetimeenc{k_{\rowcolor{1}, \colcolor{1}}}{\text{101}}$ & $\onetimeenc{k_{\rowcolor{1}, \colcolor{2}}}{\text{Interlake}}$ & $\onetimeenc{k_{\rowcolor{1}, \colcolor{3}}}{\text{blue}}$\\\cline{2-4}
    \rowcolor{2} & $\onetimeenc{k_{\rowcolor{2}, \colcolor{1}}}{\text{102}}$ & $\onetimeenc{k_{\rowcolor{2}, \colcolor{2}}}{\text{Interlake}}$ & $\onetimeenc{k_{\rowcolor{2}, \colcolor{3}}}{\text{red}}$\\\cline{2-4}
    \rowcolor{3} & $\onetimeenc{k_{\rowcolor{3}, \colcolor{1}}}{\text{103}}$ & $\onetimeenc{k_{\rowcolor{3}, \colcolor{2}}}{\text{Clipper}}$ & $\onetimeenc{k_{\rowcolor{3}, \colcolor{3}}}{\text{green}}$\\\cline{2-4}
    \rowcolor{4} & $\onetimeenc{k_{\rowcolor{4}, \colcolor{1}}}{\text{104}}$ & $\onetimeenc{k_{\rowcolor{4}, \colcolor{2}}}{\text{Marine}}$ & $\onetimeenc{k_{\rowcolor{4}, \colcolor{3}}}{\text{red}}$\\\cline{2-4}
    \end{tabular}
    \caption{Result of \textsf{EncryptTable} applied to \figref{fig:example_relation}.}
    \label{fig:example_encrypt}
\end{figure}

An alternative design is to not have row keys, and instead derive cell keys directly from the table key as $k_{r, c} \definedas{} \prf{k}{p \concat r \concat c}$.
Row keys, however, enable a space-saving optimization in the projection layer, as we shall see next.

\subsection{Projection Layer}
\label{s:single_table_projection_layer}

In \textsf{AddFamily}, \sys{} samples a random $\lambda$-bit projection key $\projkey{r}$ for each row (in the general case---see optimizations below).
The projection column contains the cell keys for the columns in the $\sqlnsselect{}$ clause (i.e., in the \texttt{projection\_fields} list), encrypted using $\projkey{r}$.
More formally, if $c_1, \ldots, c_p$ are the indices of $\sqlnsselect{}$ed columns, then the projection column contains $\enc{\projkey{r}}{k_{r, c_1} \concat \ldots \concat k_{r, c_p}}$.
In \textsf{RevealView}, a user with $\projkey{r}$ can decrypt the cell keys in the projection column and use those cell keys to decrypt the fields in row $r$ corresponding to the $\sqlnsselect{}$ed columns.

A user whose view does not include a row $r$ may, in \textsf{RevealView}, arrive at the projection layer with the wrong value for $\projkey{r}$.
Therefore, \textsf{AddFamily} also includes $\enc{\projkey{r}}{0}$ in the projection column.
This lets \textsf{RevealView} identify if $\projkey{r}$ is incorrect and omit row $r$ from the output if so.
See \figref{fig:example_projection}.

\begin{figure}[ht]
    \small
    \centering
    \begin{tabular}{cc|c|}
        $r$ & \multicolumn{1}{c}{projection key} & \multicolumn{1}{c}{projection column (p. col.)}\\\cline{3-3}
        \rowcolor{1} & $\projkey{\rowcolor{1}}$ samp. rand. & $\enc{\projkey{\rowcolor{1}}}{k_{\rowcolor{1}, \colcolor{2}} \concat k_{\rowcolor{1}, \colcolor{3}}} \concat \enc{\projkey{\rowcolor{1}}}{0}$\\\cline{3-3}
        \rowcolor{2} & $\projkey{\rowcolor{2}}$ samp. rand. & $\enc{\projkey{\rowcolor{2}}}{k_{\rowcolor{2}, \colcolor{2}} \concat k_{\rowcolor{2}, \colcolor{3}}} \concat \enc{\projkey{\rowcolor{2}}}{0}$\\\cline{3-3}
        \rowcolor{3} & $\projkey{\rowcolor{3}}$ samp. rand. & $\enc{\projkey{\rowcolor{3}}}{k_{\rowcolor{3}, \colcolor{2}} \concat k_{\rowcolor{3}, \colcolor{3}}} \concat \enc{\projkey{\rowcolor{3}}}{0}$\\\cline{3-3}
        \rowcolor{4} & $\projkey{\rowcolor{4}}$ samp. rand. & $\enc{\projkey{\rowcolor{4}}}{k_{\rowcolor{4}, \colcolor{2}} \concat k_{\rowcolor{4}, \colcolor{3}}} \concat \enc{\projkey{\rowcolor{4}}}{0}$\\\cline{3-3}
    \end{tabular}
    \caption{Projection layer for \viewfamily{} in \queryref{query:family_running_example}.}
    \label{fig:example_projection}
\end{figure}

There are two space-saving optimizations.
If only one column is $\sqlnsselect{}$ed, say $c_1$, then $\projkey{r}$ is set to the cell key for that column, $k_{r, c_1}$.
If all columns are $\sqlnsselect{}$ed ($\sqlbg{\sqlselect{}\mathsf{*}}$), then $\projkey{r}$ is set to the row key, $k_r$.
In these cases, encrypted cell keys are omitted from the projection column, saving space, and $\enc{\projkey{r}}{0}$ is replaced with $\prf{\projkey{r}}{0}$.

\subsection{Selection Layer}
\label{s:single_table_selection_layer}

In a \sys{}-canonical view family, the $\sqlnswhere{}$ clause has $n$ predicates $\sqlnsor{}$ed together.
The $j$th predicate has the form $\sqlbg{g_j(\mathsf{row})\sqlin{}?x_j}$ (see \queryref{query:canonical_form}).
For each row (index $r$) and predicate (index $j$), we choose a \emph{selection key} $\selkey{r, j}$.
In \textsf{AddFamily}, we encrypt each row's projection key once per predicate as $\enc{\prf{\selkey{r, j}}{0}}{\projkey{r}}$ and put the $n$ ciphertexts in the selection column.
For example, see \figref{fig:example_selection}.
The selection keys $\selkey{r, j}$ are chosen such that, if row $r$ satisfies predicate $j$ for a view, then $\selkey{r, j} \in \viewkeyset{}$.
Thus, in \textsf{RevealView}, we can decrypt the ciphertext for that predicate and get $\projkey{r}$.

\begin{figure}[ht]
    \small
    \centering
    \begin{tabular}{c|c|}
        \multicolumn{1}{c}{$r$} & \multicolumn{1}{c}{selection column (s. col.)}\\\cline{2-2}
        \rowcolor{1} & $\enc{\prf{\selkey{\rowcolor{1}, \predcolor{1}}}{0}}{\projkey{\rowcolor{1}}} \concat \enc{\prf{\selkey{\rowcolor{1}, \predcolor{2}}}{0}}{\projkey{\rowcolor{1}}}$\\\cline{2-2}
        \rowcolor{2} & $\enc{\prf{\selkey{\rowcolor{2}, \predcolor{1}}}{0}}{\projkey{\rowcolor{2}}} \concat \enc{\prf{\selkey{\rowcolor{2}, \predcolor{2}}}{0}}{\projkey{\rowcolor{2}}}$\\\cline{2-2}
        \rowcolor{3} & $\enc{\prf{\selkey{\rowcolor{3}, \predcolor{1}}}{0}}{\projkey{\rowcolor{3}}} \concat \enc{\prf{\selkey{\rowcolor{3}, \predcolor{2}}}{0}}{\projkey{\rowcolor{3}}}$\\\cline{2-2}
        \rowcolor{4} & $\enc{\prf{\selkey{\rowcolor{4}, \predcolor{1}}}{0}}{\projkey{\rowcolor{4}}} \concat \enc{\prf{\selkey{\rowcolor{4}, \predcolor{2}}}{0}}{\projkey{\rowcolor{4}}}$\\\cline{2-2}
    \end{tabular}
    \caption{Selection layer for \viewfamily{} in \queryref{query:family_running_example}.}
    \label{fig:example_selection}
\end{figure}

How are selection keys and view keys generated?
For each \viewfamily{}, \sys{} uses a random $\lambda$-bit view family key $\famkey{}$.
For each predicate in the view family (index $j$), we derive a \emph{predicate key} as $\predkey{j} \definedas{} \prf{\famkey{}}{j}$.
In \textsf{AddFamily}, selection keys are derived from the predicate key as $\selkey{r, j} \definedas{} \prf{\predkey{j}}{g_j(\mathsf{row})}$.
(See \secref{s:canonical_form} for an explanation of $g_j$.)
\textsf{ViewGen} generates a view key $\viewkeyset{}$ as follows.
A view assigns a list of wildcard values to each predicate in the view family; for each wildcard value $x$ assigned to predicate $j$, the view key $\viewkeyset{}$ contains the key $\viewkeycomp{j, x} \definedas{} \prf{\predkey{j}}{x}$.
Observe that if a wildcard value $x$ for predicate $j$ equals $g_j(\mathsf{row})$ for a row, then $\selkey{r, j} = \viewkeycomp{j, x}$, allowing the user to decrypt the $j$th ciphertext in the selection column for row $r$ and obtain $\projkey{r}$, as desired.
This works because selection keys and view keys are both derived from $\famkey{}$.
See \figref{fig:example_selection_key}.

\begin{figure}[ht]
    \small
    \centering
    \begin{tabular}{cc}
        $r$ & \multicolumn{1}{c}{selection keys (for \figref{fig:example_selection})}\\
        \rowcolor{1} & $\selkey{\rowcolor{1}, \predcolor{1}} \definedas{} \prf{\predkey{\predcolor{1}}}{\text{Interlake}}$,\:$\selkey{\rowcolor{1}, \predcolor{2}} \definedas{} \prf{\predkey{\predcolor{2}}}{\text{blue}}$\\
        \rowcolor{2} & $\selkey{\rowcolor{2}, \predcolor{1}} \definedas{} \prf{\predkey{\predcolor{1}}}{\text{Interlake}}$,\:$\selkey{\rowcolor{2}, \predcolor{2}} \definedas{} \prf{\predkey{\predcolor{2}}}{\text{red}}$\\
        \rowcolor{3} & $\selkey{\rowcolor{3}, \predcolor{1}} \definedas{} \prf{\predkey{\predcolor{1}}}{\text{Clipper}}$,\:$\selkey{\rowcolor{3}, \predcolor{2}} \definedas{} \prf{\predkey{\predcolor{2}}}{\text{green}}$\\
        \rowcolor{4} & $\selkey{\rowcolor{4}, \predcolor{1}} \definedas{} \prf{\predkey{\predcolor{1}}}{\text{Marine}}$,\:$\selkey{\rowcolor{4}, \predcolor{2}} \definedas{} \prf{\predkey{\predcolor{2}}}{\text{red}}$\\
    \end{tabular}
    \caption{Selection keys for \figref{fig:example_selection}. Note that $\predkey{\predcolor{1}} \definedas{} \prf{\famkey{}}{\predcolor{1}}$ and $\predkey{\predcolor{2}} \definedas{} \prf{\famkey{}}{\predcolor{2}}$.}
    \label{fig:example_selection_key}
\end{figure}

\begin{figure*}[ht]
    \small
    \centering
    \begin{tabular}{cccc}
    $r$ & $\viewkeyset{} \cap \selkeyset{r}$ & decryption flow & $\counter{k}$ after processing row\\
    \rowcolor{1} & $\{\viewkeycomp{\predcolor{1}, \text{Interlake}}\}$ & $\viewkeycomp{\predcolor{1}, \text{Interlake}} \decrightarrow{\text{s. col}} \projkey{\rowcolor{1}} \decrightarrow{\text{p. col.}} k_{\rowcolor{1}, \colcolor{2}},\:k_{\rowcolor{1}, \colcolor{3}} \decrightarrow{\text{cells}} \text{Interlake},\:\text{blue}$ & $\counter{\viewkeycomp{\predcolor{1}, \text{Interlake}}} = 1,\:\counter{\viewkeycomp{\predcolor{2}, \text{red}}} = 0$\\
    \rowcolor{2} & $\{\viewkeycomp{\predcolor{1}, \text{Interlake}},\:\viewkeycomp{\predcolor{2}, \text{red}}\}$ & $\viewkeycomp{\predcolor{1}, \text{Interlake}}\text{ or }\viewkeycomp{\predcolor{2}, \text{red}} \decrightarrow{\text{s. col.}} \projkey{\rowcolor{2}} \decrightarrow{\text{p. col.}} k_{\rowcolor{2}, \colcolor{2}},\:k_{\rowcolor{2}, \colcolor{3}} \decrightarrow{\text{cells}} \text{Interlake},\:\text{red}$ & $\counter{\viewkeycomp{\predcolor{1}, \text{Interlake}}} = 2,\:\counter{\viewkeycomp{\predcolor{2}, \text{red}}} = 1$\\
    \rowcolor{3} & $\varnothing$ & Cannot decrypt s. col.; $\counter{k}$ values and NETs are unchanged & $\counter{\viewkeycomp{\predcolor{1}, \text{Interlake}}} = 2,\:\counter{\viewkeycomp{\predcolor{2}, \text{red}}} = 1$\\
    \rowcolor{4} & $\{\viewkeycomp{\predcolor{2}, \text{red}}\}$ & $\viewkeycomp{\predcolor{2}, \text{red}} \decrightarrow{\text{s. col.}} \projkey{\rowcolor{4}} \decrightarrow{\text{p. col.}} k_{\rowcolor{4}, \colcolor{2}},\:k_{\rowcolor{4}, \colcolor{3}} \decrightarrow{\text{cells}} \text{Marine},\:\text{red}$ & $\counter{\viewkeycomp{\predcolor{1}, \text{Interlake}}} = 2,\:\counter{\viewkeycomp{\predcolor{2}, \text{red}}} = 2$\\
    \end{tabular}
    \caption{\small \textsf{RevealView} for running example with wildcards $x_\predcolor{1} = \{\text{Interlake}\}$ and $x_\predcolor{2} = \{\text{red}\}$. \textsf{ViewGen} outputs $\viewkeyset{} = \{\viewkeycomp{\predcolor{1}, \text{Interlake}},\:\viewkeycomp{\predcolor{2}, \text{red}}\}$, where $\viewkeycomp{\predcolor{1}, \text{Interlake}} = \prf{\prf{\famkey{}}{\predcolor{1}}}{\text{Interlake}}$ and $\viewkeycomp{\predcolor{2}, \text{red}} = \prf{\prf{\famkey{}}{\predcolor{2}}}{\text{red}}$. Both $\counter{k}$ values start at 0.}
    \label{fig:example_revealview}
\end{figure*}

Given that the ciphertexts in the selection column are computed using a key derived from table data, key-privacy of the encryption scheme is crucial for \sys{}'s security.

It may seem tempting to not have predicate keys, and instead derive selection keys and view key members directly from the view family key as $\selkey{r, j} \definedas{} \prf{\famkey{}}{g_j(\mathsf{row})}$ and $\viewkeycomp{j, x} \definedas{} \prf{\famkey{}}{x}$.
This is insecure; it would allow a view key for $\sqlbg{\mathsf{bname} = \mathsf{``blue"}}$ to decrypt row \rowcolor{1}, for example.

An alternate design is to encrypt projection keys with $\selkey{r, j}$, as $\enc{\selkey{r, j}}{\projkey{r}}$.
We prefer encrypting with $\prf{\selkey{r, j}}{0}$, as it enables key-hiding tags (\secref{s:tagging_layer}) to use $\mathsf{Enc}$ as a black box.

\subsection{Tagging Layer}
\label{s:tagging_layer}

So far the user, in \textsf{RevealView}, must try decrypting each row with each key in $\viewkeyset{}$.
This can be slow.
The tagging layer addresses this with a \emph{tagging column} containing, for each row, a \emph{tag} for each selection key.
These tags let the user identify, with high probability, \emph{without any cryptographic operations for those rows}, rows that they cannot decrypt.

The challenge is that tags may leak information.
For example, computing the tag by cryptographically hashing $\selkey{r, j}$ provides the desired functionality, but is insecure---if two rows have the same tag, a user can deduce that they have the same selection key, and therefore the same predicate value.

To solve this, we develop \emph{key-hiding tags}.
We identify BlindBox Detect~\cite{sherry2015blindbox}, a protocol for network middleboxes, as a starting point.
The idea is to generate a \emph{different} tag each time a key is used, by applying a PRF to the key and a \emph{count} of how many previous rows use that key in the same predicate.
Using this in \sys{} requires a \emph{stateful} scan of a table, as \sys{} must maintain a counter for each selection key in \textsf{AddFamily} and for each key in $\viewkeyset{}$ in \textsf{RevealView}.
Alas, this complicates parallel execution, as counters for a row are not known until all previous rows are processed.

Our solution is to generate tags using a \emph{different key} for each partition. 
This way, partitions can be processed in parallel---although counters in different partitions may collide, the tags will appear independent.
Specifically, we generate a \emph{tagging key} as $\tau_{\selkey{r, j}} \definedas{} \prf{\selkey{r, j}}{p}$ for each of the row's selection keys.
For each selection key $\selkey{r, j}$ the corresponding tag is $\prf{\tau_{\selkey{r, j}}}{\counter{\selkey{r, j}}}$, where $\counter{k}$ is the number of previous rows for which $k$ is a selection key.
In $\mathsf{RevealView}$, the user calculates the \emph{next expected tag (NET)} for each key $\viewkeycomp{j, x}$ as $\prf{\tau_{\viewkeycomp{j, x}}}{\counter{\viewkeycomp{j, x}}}$ and maintains the NETs in a data structure with efficient lookup, like a hash set.
For each row, the user checks if a tag in the tagging column matches a NET in the data structure.
If there is a match, then the user decrypts the row; otherwise, the user skips the row without performing cryptographic operations.
Then, for \emph{each} key $k$ in $\viewkeyset{}$ that can decrypt that row, the user increments $\counter{k}$, recalculates $k$'s NET, and updates the data structure.

To save space, we truncate tags, as BlindBox does with \textsf{RS}.
This allows false positives---truncated tags may match where full tags do not---but false positives will be caught at the projection layer, when checking $\enc{\projkey{r}}{0}$.

An alternative design to key-hiding tags could be to adapt an ESS' index structure to \sys{}'s setting.

\figref{fig:example_revealview} shows \textsf{RevealView}, including the tagging layer.

\section{Implementation}
\label{sec:implementation}

We defined \sys{}-canonical form in Protobuf~\cite{protobuf}.
While canonical views, in principle, can have arbitrary predicate functions $g_j$, our implementation supports selecting a field, bits of a field, and concatenations of those results.
This is enough to support equalities and inequalities on fields, but not {\color{blue}\texttt{UPPER}} or {\color{blue}\texttt{LOWER}}.
Not-equal ($\neq$) queries on strings are supported by first hashing strings to integers with SHA-256.

\subsection{\sys{}'s Backend}

We wrote the backend in C++, using AES-128 block cipher as a PRF (more details in \appref{s:formalism_preliminaries}).
For efficiency, our implementation applies it on batches of input blocks, accelerated with AES-NI instructions.
We instantiate \textsf{Enc} using CPA-secure counter-mode encryption; to optimize storage costs, we choose nonces deterministically based on the cell position, while ensuring they are far enough apart to avoid overlap.
For \textsf{OTE}, we use a one-time pad for short inputs, and counter-mode encryption with a zero nonce for longer inputs.

\sys{}'s backend provides a C++ API that can perform \textsf{Encrypt}, \textsf{AddFamily}, \textsf{ViewGen}, and \textsf{RevealView} operations based on the canonical view for a view family or view.
Given a batch of partitions, provided as files (e.g., in an in-memory file system), it can parallelize \textsf{Encrypt}, \textsf{AddFamily}, and \textsf{RevealView} operations by processing partitions on different CPU cores.
We use Apache Arrow~\cite{arrow} to manipulate relations in \sys{}'s backend because of its ability directly interface a wide range of data analytics tools, including Spark~\cite{zaharia2012rdds, armbrust2015sparksql}, Pandas~\cite{pandas}, and DuckDB~\cite{raasveldt2019duckdb}, and various relational file formats like Parquet~\cite{parquet}, ORC~\cite{orc}, and CSV.

We also implemented an optimization to \textsf{AddFamily} that we call the \emph{selection cache}.
In each row, for each predicate, \sys{}'s backend must use the selection key $\selkey{r, j}$ to compute the encryption key $\prf{\selkey{r, j}}{0}$ in the selection column and a tagging key $\tau_{\selkey{r, j}}$ in the tagging column.
Our insight is that the same value often appears in the same column in multiple rows---for example, a \textsf{State} column may have multiple rows containing \textsf{CA}.
Depending on the predicate (e.g., a predicate like $\textsf{State} = ?x$), such repetitions may cause multiple rows to use the same selection key $\selkey{r, j}$ for a predicate.
The selection cache is a mapping from $\selkey{r, j}$ to the prepared AES key schedules for the selection column encryption key and tagging key, to save the work of re-computing these keys when the selection key appears more than once.
The selection cache for a predicate has limited capacity; we use an LRU eviction policy, re-computing the keys on a miss and using the precomputed keys on a hit.
This makes the cache effective when the selection key repeats in nearby rows, without consuming excessive memory for predicates for which the selection key does not repeat (or repeats rarely).

\subsection{\sys{}'s Orchestrator}

We wrote the orchestrator in Python.
It uses cloud-provided tools (e.g., \texttt{azcopy}) to efficiently transfer data between memory and storage (Azure Data Lake Storage).
It invokes \texttt{azcopy} and \sys{}'s backend by spawning them as separate processes.
\sys{}'s backend operates on files; batches are passed between the processes via an \emph{in-memory} file system (e.g., \texttt{/dev/shm}).
This lets the orchestrator use cloud-provided tools (e.g., \texttt{azcopy}) to efficiently transfer data between memory and storage.

\subsection{\sys{}'s Planner}
\label{s:implementation_extra_planner}

We implemented \sys{}'s planner in Rust.
It transforms SQL statements into \sys{}-canonical form via a series of AST transformations, which we describe below.

Immediately after parsing a SQL statement, the AST has $\sqlnsand{}$s, $\sqlnsor{}$s, and $\sqlnsnot{}$s as internal nodes, and inequalities, equalities, and $\sqlnsin$s as leaf nodes.
First, the planner applies De Morgan's Laws to push $\sqlnsnot{}$s down into the leaves.
Second, it transforms all leaves into ranges using the above rules.
Third, as an optimization, it combines multiple ranges on the same field into as few ranges as possible.
Fourth, it converts ranges into $\sqlnsin{}$s using the above rules.
Fifth, it applies the distributed law to transform the AST into disjunctive normal form (DNF).
Sixth, it eliminates $\sqlnsand{}$ internal nodes using the above rules.
DNF is necessary because our technique for $\sqlnsand{}$s only works on $\sqlnsand{}$s where all children are leaves.

The planner optimizes the AST to produce an output with fewer predicates.
The key optimization is to \emph{consolidate} siblings operating on the same field.
For example, while a SQL statement may specify $\sqlbg{x\texttt{ = "hello"}\sqlor{}x\texttt{ = "world"}}$, the planner will consolidate these two clauses into a single predicate $x$ with two wildcard values, \texttt{"hello"} and \texttt{"world"}, instead of na\"ively producing two predicates each with a single wildcard value.
Similarly, siblings in the tree that are range predicates on the same field can be consolidated into a single range predicate with multiple range values.
Thus, conjunctions of inequalities on the \emph{same} field are simplified by the optimizer into a single array of bit selection predicates.
Conjunctions of inequalities on \emph{different} fields cannot be so optimized; each such inequality is converted to a disjunction of bit-selection, and the conjunction of disjunctions is ``multiplied out'' when transforming the AST into DNF.
This results in a canonical form with many predicates.

\section{Evaluation}\label{sec:evaluation}

\begin{table*}[ht]
    \footnotesize
    \centering
    \setlength\tabcolsep{2pt}
    \begin{tabular}{|c|p{3.8in}|p{1.5in}|c|c|}\hline
        Workload & View Family SQL & View Wildcard Values & $s_1$ (\%) & $s_2$ (\%) \\\hline\hline
        \rwestate{} & $\sqlbg{\sqlselect{}\sqlwildcard{}\sqlfrom{}\mathtt{rwe}\sqlwhere{}\mathtt{PATIENT\_STATE}=\ ?x\sqlend{}}$ & [Alaska, California, Hawaii, Oregon, Washington] & 14.16 & 15.17 \\\hline
        \medication{} & $\sqlbg{\sqlselect{}\sqlwildcard{}\sqlfrom{}\mathtt{medications}\sqlwhere{}\mathtt{CODE}=\ ?x\sqlend{}}$ & [212033, 243670, 2563431] & 0.34 & 0.38 \\\hline
        \diagnosis{} & $\sqlbg{\sqlselect{}\sqlwildcard{}\sqlfrom{}\mathtt{conditions}\sqlwhere{}\mathtt{CODE}=\ ?x\sqlend{}}$ & 414545008 & 0.32 & 0.32\\\hline
        \rweobsstate{} & 
        $\sqlbg{\sqlselect{}\mathtt{OBSERVATION\_DATE}, \mathtt{OBSERVATION\_CATEGORY}, \mathtt{OBSERVATION\_CODE}}$ $\sqlbg{\mathtt{OBSERVATION\_DESCRIPTION}, \mathtt{OBSERVATION\_VALUE}, \mathtt{OBSERVATION\_UNITS}}$ $\sqlbg{\mathtt{OBSERVATION\_TYPE}, \mathtt{PATIENT\_STATE}\sqlfrom{}\mathtt{rwe}\sqlwhere{}\mathtt{PATIENT\_STATE}=\ ?x\sqlend{}}$ & [Alaska, California, Hawaii, Oregon, Washington] & 14.16 & 15.17 \\\hline
        \dropoffpickup{} & $\sqlbg{\sqlselect\sqlwildcard{}\sqlfrom{}\mathtt{yellow}\sqlwhere{}\mathtt{doLocationId}=\ ?x \sqland{} \mathtt{puLocationId} =\ ?y\sqlend{}}$ & [(236, 236), (236, 237), (237, 236), (237, 237)] & 2.12 & 0.40\\\hline
        \salesstore{} & $\sqlbg{\sqlselect{}\sqlwildcard{}\sqlfrom{}\mathtt{store\_sales}\sqlwhere{}\mathtt{ss\_store\_sk} =\ ?x\sqlend{}}$ & 7 & 15.80 & 0.14 \\\hline
        \salesdate{} & $\sqlbg{\sqlselect{}\sqlwildcard{}\sqlfrom{}\mathtt{store\_sales}\sqlwhere{}\mathtt{ss\_sold\_date\_sk} >\ ?x\sqlend{}}$ $\sqlbg{\sqland{} \mathtt{ss\_sold\_date\_sk} <\ ?y}$ & (2452411, 2452642) & 14.89 & 14.94 \\\hline\hline
        \rweineqobs{} & $\sqlbg{\sqlselect{}\sqlwildcard{}\sqlfrom{}\mathtt{rwe}\sqlwhere{}\mathtt{OBSERVATION\_DATE}\geq\ ?x\sqlend{}}$ & "2022-01-01 00:00:00+00:00" & 14.64 & 14.17 \\\hline
        \rweineqstate{} & $\sqlbg{\sqlselect{}\sqlwildcard{}\sqlfrom{}\mathtt{rwe}\sqlwhere{}\mathtt{PATIENT\_STATE}\neq\ ?x\sqlend{}}$ & California & 90.51 & 89.06 \\\hline
        \rweineqor{} & $\sqlbg{\sqlselect{}\sqlwildcard{}\sqlfrom{}\mathtt{rwe}\sqlwhere{}\mathtt{PATIENT\_DEATHDATE}\neq\ ?x\sqlor{}}$ $\sqlbg{\mathtt{OBSERVATION\_DATE}\geq\ ?y\sqlend{}}$ & (NULL, \newline "2022-01-01T00:00:00+00:00") & 43.05 & 47.30 \\\hline
        \rweineqand{} & $\sqlbg{\sqlselect{}\sqlwildcard{}\sqlfrom{}\mathtt{rwe}\sqlwhere{}\mathtt{PATIENT\_DEATHDATE}\neq\ ?x\sqland{}}$ $\sqlbg{\mathtt{OBSERVATION\_DATE}\geq\ ?y\sqlend{}}$ & (NULL, \newline "2022-01-01T00:00:00+00:00") & 0.43 & 0.26 \\\hline
    \end{tabular}
    \caption{Workloads to evaluate \sys{}. Selectivity for the ``small'' version of the datasets is $s_1$, and selectivity for the ``large'' version is $s_2$.}
    \label{tab:workloads}
\end{table*}

We use three datasets: 
(1) a synthetic medical dataset generated using Synthea~\cite{synthea-dataset}, (2) New York City yellow taxi trip records ~\cite{yellow-cab-dataset}, and (3) LHBench, a TPC-DS-based dataset for benchmarking data lakes~\cite{jain2023lhbench}. 
We use two versions of each---a small version for testing single-core performance (max table size $\approx 2$ GiB, uncompressed), and a large version to test scalability.
In the large version, \rwe{} (``real-world medical evidence'' table created using Synthea\footnote{Synthea does not directly output \rwe{}. To produce an \rwe{} table similar to Eisai's demo, we join the \textsf{Patient}, \textsf{Observation}, and \textsf{Encounter} tables. All Synthea tables were produced by running Synthea separately for each US State, with sizes proportional to their populations, and combining the data.}) was $\approx 200$ GiB uncompressed, \yellow{} (table from NYC dataset) was $\approx 250$ GiB uncompressed, and \storesales{} (table from LHBench) was $\approx 1$ TiB uncompressed.
In the large version, \textsf{medications} and \textsf{conditions} tables (from Synthea) were only tens of GiB.

\parhead{Views based on realistic use cases}
The first \sqlview{}, \rwestate{}, is the Eisai example~\cite{esaisummittalk2021} from \secref{sec:intro}, granting access to rows from \rwe{} for certain US States.
The Eisai demo also discusses the importance of hiding entire columns; this inspires \rweobsstate{}, which is the same as \rwestate{} except that it only $\sqlnsselect{}$s 9 columns.
SMCQL~\cite{bater2017smcql} describes an analytical query counting patients prescribed aspirin and then diagnosed with heart disease.
This inspires our next two \sqlviews{}: \medication{} grants access to data for certain medication codes (aspirin) in Synthea's \textsf{medications} table, and \diagnosis{} grants access to diagnoses of certain condition codes (heart disease) in Synthea's \textsf{conditions} table.
Location privacy is an important concern for datasets like NYC Taxi Cab~\cite{lecuyer2019sage}, inspiring our next \sqlview{}: \dropoffpickup{} grants access to taxi rides for particular combinations of dropoff and pickup zones, using \sys{}'s support for \sqlnsand{}.
LHBench includes sales and customer data.
Like the Eisai example, \salesstore{} grants access to data in \storesales{} for a particular set of stores.
As historical data is often restricted, \salesdate{} grants access to data in \storesales{} for a particular time range.

\parhead{Views to stress \sys{}}
These \sqlviews{} all use \rwe{}; we vary only the SQL.
\rweineqobs{} grants access to patient observations in a time range.
\rweineqstate{} does an inequality check on \emph{strings}, requiring hashing strings to integers and forming a tall tree over 256-bit integers.
\rweineqor{} is an \sqlnsor{} of the inequality in \rweineqobs{} and an inequality checking that a patient's death date is not \textsf{NULL}.
\rweineqand{} is like \rweineqor{}, but is an \sqlnsand{} of the inequalities instead of an an \sqlnsor{}.
This requires ``multiplying out'' the conjunction into DNF (\secref{s:planner}), creating many predicates.

SQL for these views is in \tabref{tab:workloads}.

\subsection{\sys{}'s Cryptographic Protocol}
\label{s:eval_protocol}

\begin{figure}[t]
    \centering
    \begin{subfigure}[t]{0.31\linewidth}
        \centering
        \includegraphics[width=\linewidth, trim=6pt 0pt 6pt 0pt, clip]{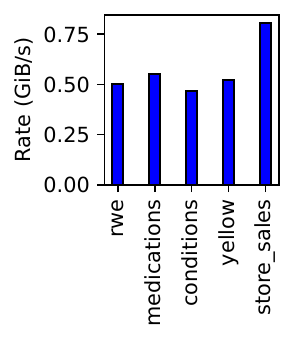}
    \end{subfigure}
    \begin{subfigure}[t]{0.68\linewidth}
        \centering
        \includegraphics[width=\linewidth, trim=6pt 0pt 6pt 0pt, clip]{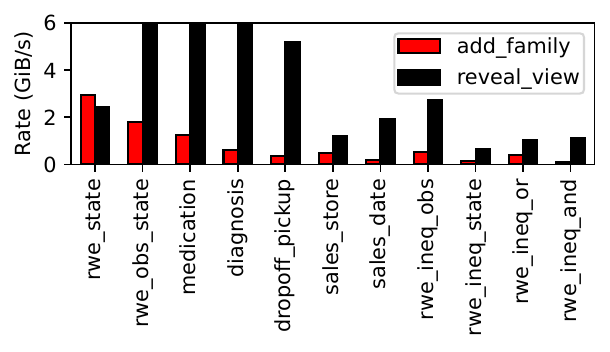}
    \end{subfigure}
    \caption{Throughput of \sys{}'s backend on in-memory data.}
    \label{fig:rate}
\end{figure}

We measure \sys{}'s backend's performance on a single core using the small version of the datasets.
To measure backend performance, we divide the time to process each table in memory (excluding reading/writing the input/output) by the \emph{uncompressed} plaintext size (\figref{fig:rate}).
\textbf{\sys{}'s backend runs at hundreds of megabytes to gigabytes per second \emph{on a single core}}, showing that our design based on hardware-accelerated, symmetric-key cryptography is performant.

\textsf{EncryptTable} is fastest for \textsf{store\_sales} because all of its cells are at most 16 bytes, so \textsf{OTE} can use the fast one-time pad for all cells.
\textsf{AddFamily} is faster for many-column tables (e.g., \rwe{}) because \textsf{AddFamily} only computes on columns in the \sqlnswhere{} clause (a small fraction of a many-column table).

\textsf{RevealView} is much faster for \sqlviews{} that match fewer rows (i.e., have low selectivity) because, with key-hiding tags (\secref{s:tagging_layer}), it performs no cryptography for non-matching rows.
\figref{fig:selectivity_rwe_state} varies the State in the \sqlnswhere{} clause for \rwestate{}, showing that \textsf{RevealView} performance is linear in selectivity.

\Viewfamilies{} whose canonical form has \emph{many predicates} are generally slower to process.
This affects inequalities, which are rewritten to many predicates (\secref{s:ranges}), and particularly conjunctions of ranges, which must be ``multiplied out'' to DNF.
For example, \rweineqand{} has low \textsf{AddFamily} throughput, and similar \textsf{RevealView} throughput as \rweineqor{}, despite matching fewer rows.

\textsf{ViewGen} (not graphed) is most expensive for many-predicate \sqlviews{}.
For a branching factor of 256 (our default) for inequality trees (\secref{s:ranges}), its overheads are modest: $\approx 84$ ms runtime and $\approx 438$ MB memory for \rweineqand{}, and $\approx 30$ ms runtime and $\approx 31$ MB memory for \rweineqor{}.

\rweobsstate{} \sqlnsselect{}s only some columns.
This speeds up \textsf{RevealView}, as \textsf{RevealView} decrypts \sqlnsselect{}ed columns only.
It slows \textsf{AddFamily} because it precludes the optimization that omits encrypted cell keys and takes $\projkey{r} = k_r$ (\secref{s:single_table_projection_layer}).

\rweineqstate{} involves string inequality.
This slows \textsf{AddFamily} because it computes a SHA-256 hash per row.

\begin{figure}[t]
    \centering
    \begin{minipage}{0.43\linewidth}
        \includegraphics[width=1.1\linewidth]{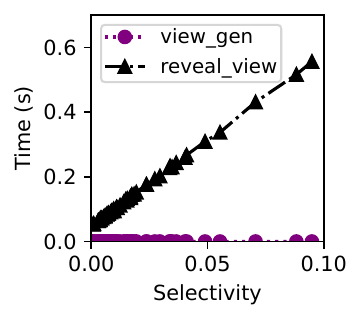}
        \caption{Selectivity.}
        \label{fig:selectivity_rwe_state}
    \end{minipage}
    \begin{minipage}{0.56\linewidth}
        \centering
        \includegraphics[width=1.02\linewidth, trim=0pt 8pt 0pt 0pt, clip]{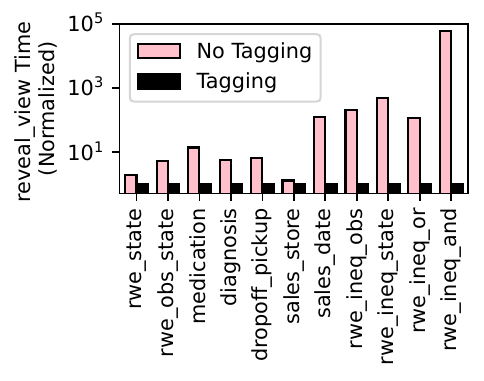}
        \caption{Tagging in \textsf{RevealView}.}
        \label{fig:features_revealview}
    \end{minipage}
\end{figure}

\subsection{Key-Hiding Tags and Selection Cache}
As shown in \figref{fig:features_revealview}, \textbf{key-hiding tags speed up \textsf{RevealView} by 1.3$\times$ to over 50,000$\times$}.
For inequality-based views in particular, key-hiding tags are essential to achieving ``big data'' speeds.
This is because the planner rewrites inequalities into disjunctions of many predicates; for a branching factor of~256, $\viewkeyset{}$ can contain \emph{hundreds} of keys per predicate.
With key-hiding tags, the client uses a hash table to quickly find the key to use in $\viewkeyset{}$; without them the client must try decrypting each row with \emph{each key in $\viewkeyset{}$}, which is slow when $\viewkeyset{}$ is large.
The gains are especially significant for view containing inequalities.
\rweineqand{} shows an extreme performance gain for two reasons.
First, conjunctions of inequalities are ``multiplied out'' to DNF, and $\viewkeyset{}$ also grows multiplicatively due to the \emph{combination effect} (\secref{s:planner}).
Second, its selectivity is $<1\%$, so \textsf{RevealView}, with key-hiding tags, handles $>99\%$ of rows without any cryptographic operations.

\begin{figure}[t]
    \centering
    \includegraphics[width=\linewidth]
    {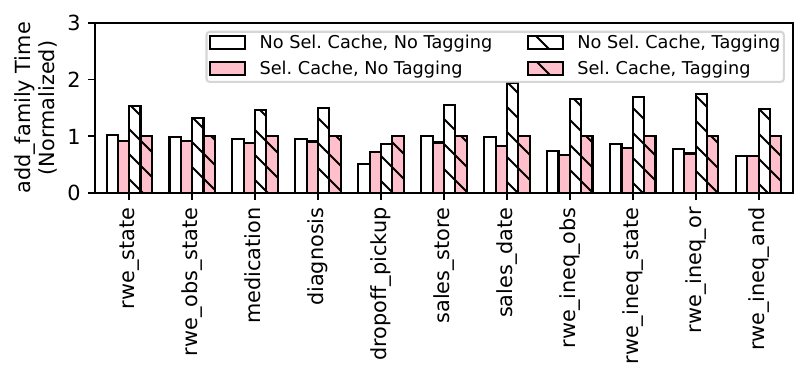}
    \caption{Feature impact (\textsf{AddFamily}).}
    \label{fig:features_addfamily}
\end{figure}

\begin{figure}[t]
\centering
    \includegraphics[width=0.6\linewidth]{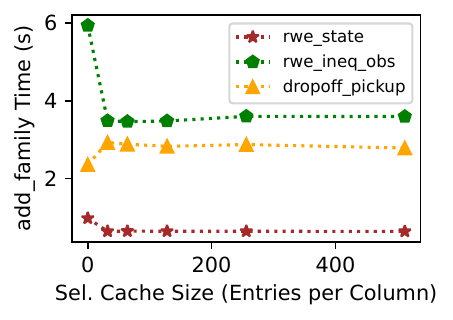}
    \caption{Selection cache size.}
    \label{fig:selcache_time_addfamily}
\end{figure}

\figref{fig:features_addfamily} shows how tagging and the selection cache impact \textsf{AddFamily} performance.
The selection cache mitigates the tagging overhead in \textsf{AddFamily} because it caches the generated key schedule for $\tau_{\selkey{r, j}}$, reducing the overhead of tag generation.
Without the selection cache, computing tags increases \textsf{AddFamily} latency by up to $2\times$.

The selection cache can bring performance gains even at small sizes (\figref{fig:selcache_time_addfamily}).
The reason is that, even if not all unique values of the predicate can fit in the selection cache simultaneously, datasets may be distributed in a way such that only a few unique values constitute most of the occurrences.
Our LRU replacement policy results in the most commonly occurring values usually being represented in the cache.

The best selection cache size varies depending on the data distribution.
For example, at our default selection cache size of 512, the selection cache actually reduces \textsf{AddFamily} performance for the \dropoffpickup{} workload.
At a larger selection cache size of 8192, however, performance gains are realized (not shown in \figref{fig:selcache_time_addfamily}).

\subsection{Compression and Size Overheads}
\label{s:compression_size_overheads}

\begin{figure}[t]
    \centering
    \begin{subfigure}[t]{0.415\linewidth}
        \centering
        \includegraphics[width=\linewidth]{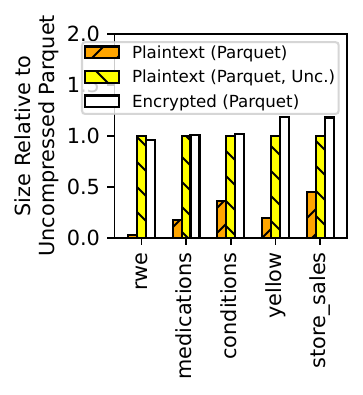}
        \caption{Output of \textsf{EncryptTable}.}
        \label{fig:size_encrypt_normalized}
    \end{subfigure}
    \hspace{-2ex}
    \begin{subfigure}[t]{0.60\linewidth}
        \centering
        \includegraphics[width=\linewidth, trim=0pt 8pt 0pt 0pt, clip]{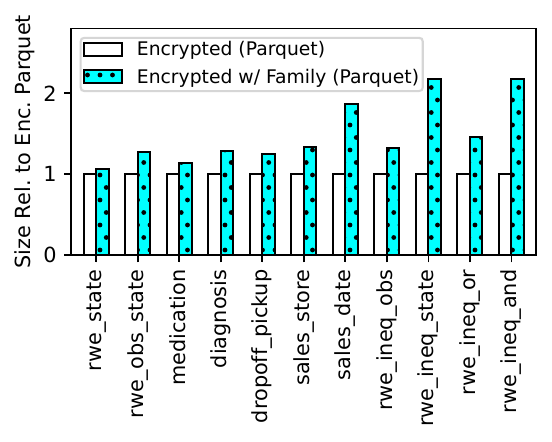}
        \caption{Output of \textsf{AddFamily}.}
        \label{fig:size_addfamily_normalized}
    \end{subfigure}
    \caption{\sys{}'s size overhead.}
    \label{fig:size_normalized}
\end{figure}

\begin{figure}[t]
    \centering
    \includegraphics[width=0.49\linewidth]{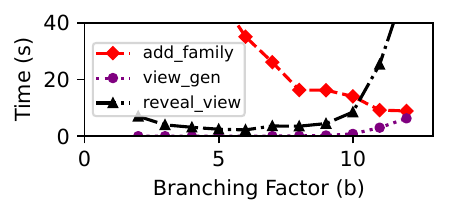}
    \includegraphics[width=0.49\linewidth]{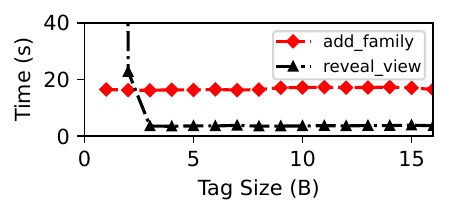}
    \includegraphics[width=0.49\linewidth]{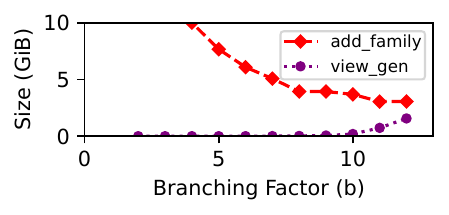}
    \includegraphics[width=0.49\linewidth]{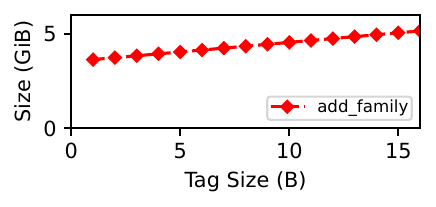}\label{fig:tag_size_rwe_ineq_and}
    \caption{Varying branching factor and tag size, \rweineqand{}. Branching factors are in \emph{bits}; $b=8$ is a branching factor of $2^8 = 256$.}
    \label{fig:trade_offs_rwe_ineq_and}
\end{figure}

\begin{figure}[t]
    \centering
    \includegraphics[width=0.49\linewidth]{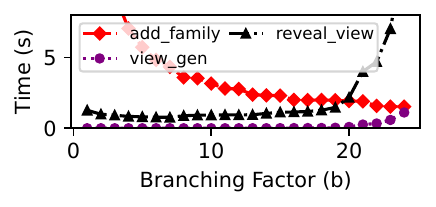}
    \includegraphics[width=0.49\linewidth]{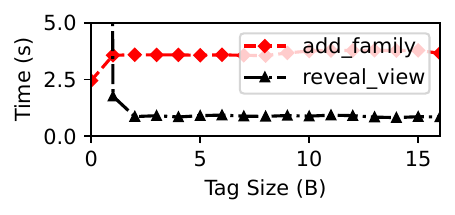}
    \includegraphics[width=0.49\linewidth]{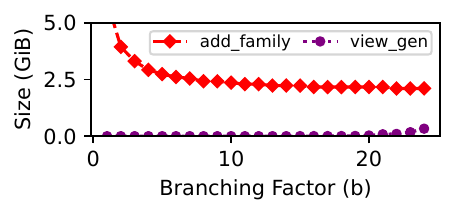}
    \includegraphics[width=0.49\linewidth]{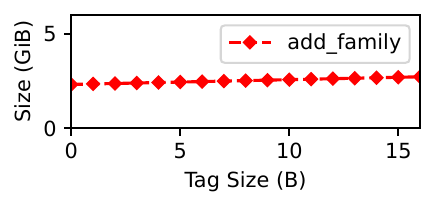}\label{fig:tag_size_rwe_ineq_obs}
    \caption{Varying branching factor and tag size, \rweineqobs{}. Branching factors are in \emph{bits}; $b=8$ is a branching factor of $2^8 = 256$.}
    \label{fig:trade_offs_rwe_ineq_obs}
\end{figure}

We always encrypt data in \emph{uncompressed} form; because encrypted data cannot be compressed, \sys{}'s encryption leads to size overhead.
The output of \textsf{EncryptTable} is similar in size to the \emph{uncompressed} plaintext data, but an order of magnitude bigger than the \emph{compressed} Parquet input (\figref{fig:size_encrypt_normalized}).
The size overhead of \textsf{AddFamily} relative to \textsf{EncryptTable} (due to proj./sel./tagging columns) is typically $\approx$~$1.5\times$ or less, but up to $2\times$ (\figref{fig:size_addfamily_normalized}).
Even including \textsf{AddFamily}'s overheads, loss of compression dominates size overheads.

Our implementation uses compressed Parquet to encode plaintext data (input to \textsf{EncryptTable} and output of \textsf{RevealView}).
For encrypted data (which do not benefit from compression), we use Arrow's serialization format (\textsf{ipc}), which does not have compression but is faster to (de)serialize than Parquet.
To show the benefit of using \textsf{ipc}, \figref{fig:format_addfamily_revealview} compares three serialization formats (Parquet with compression, Parquet without compression, and \textsf{ipc}), with the input and output stored on the local SSD in the same format.
Note that the time to read/write the file (including serialization) is significant compared to the cryptographic processing time, and that \textsf{ipc} tends to be faster than Parquet.

\begin{figure}[t]
    \begin{subfigure}[t]{\linewidth}
        \centering
        \includegraphics[width=0.7\linewidth]{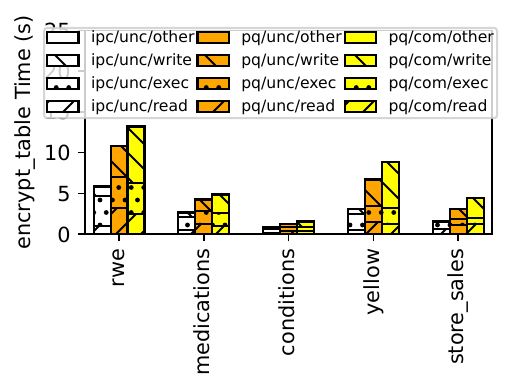}
        \caption{\textsf{EncryptTable}.}
        \label{fig:format_encrypttable}
    \end{subfigure}
    \begin{subfigure}[t]{\linewidth}
        \centering
        \includegraphics[width=\linewidth]{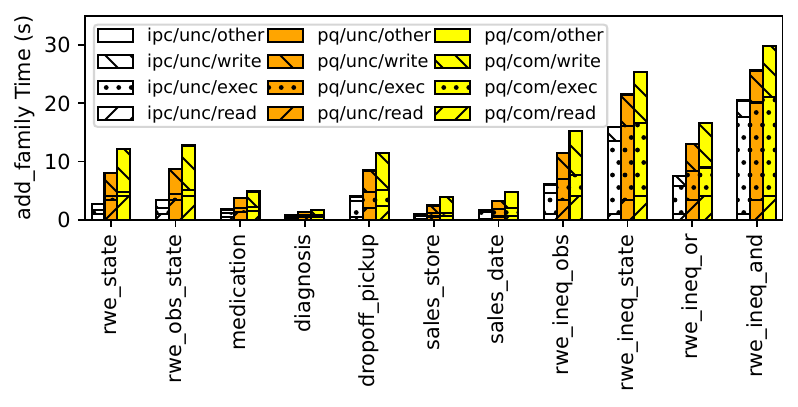}
        \caption{\textsf{AddFamily}.}
        \label{fig:format_addfamily}
    \end{subfigure}
    \begin{subfigure}[t]{\linewidth}
        \centering
        \includegraphics[width=\linewidth]{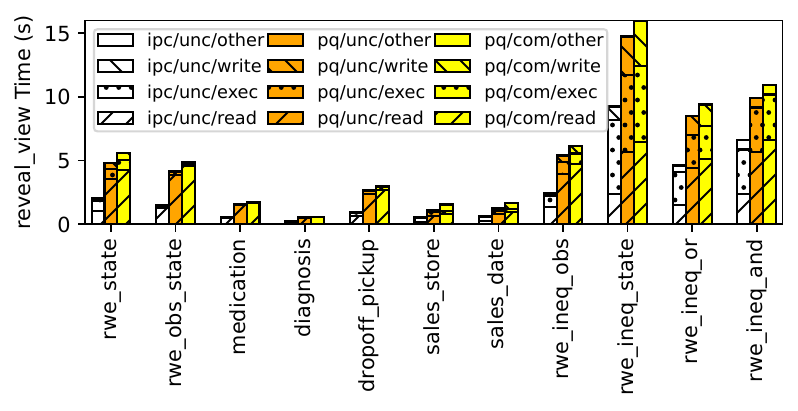}
        \caption{\textsf{RevealView}.}
        \label{fig:format_revealview}
    \end{subfigure}
    \caption{\sys{} total runtime in single-core setting, including reading/writing files from local temp SSD, for different data formats.}
    \label{fig:format_addfamily_revealview}
\end{figure}

\subsection{Space/Time Trade-Offs}

Two factors present a space/time trade-off: (1) truncating tags and (2) choosing the branching factor.
\figref{fig:trade_offs_rwe_ineq_and} and \figref{fig:trade_offs_rwe_ineq_obs} measure these trade-offs for \textsf{rwe\_ineq\_obs} and \textsf{rwe\_ineq\_and}.
We choose these workloads because they use inequalities and have canonical forms with many predicates, so they are most sensitive to the branching factor and tag length.

Decreasing the tag length makes the output of \textsf{AddFamily} smaller (because tags are smaller), but decreasing it too much makes \textsf{RevealView} slower due to frequent false-positive tag matches.
Increasing the branching factor makes the \textsf{AddFamily} faster and its output smaller because it reduces the number of predicates in canonical form, but increasing it too much makes \textsf{RevealView} much slower since it must check more keys for each row.
Importantly, intermediate values (e.g., branching factor of $2^8 = 256$, 4-byte tags) appropriately balance this trade-off and result in all-around good performance.

\subsection{Scalability to Multiple CPU Cores}
\label{s:eval_multicore}

\begin{figure*}[t]
    \begin{minipage}[b]{0.8\linewidth}
        \begin{subfigure}[t]{0.33\linewidth}
            \centering
            \includegraphics[width=0.92\linewidth, trim=0pt 4pt 0pt 0pt, clip]{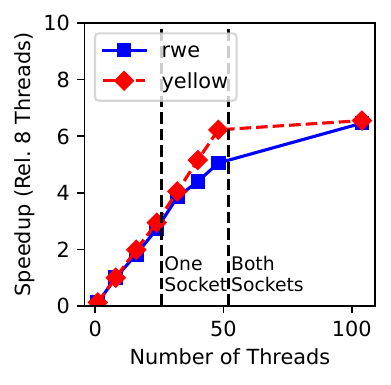}
            \caption{\textsf{EncryptTable}.}
            \label{fig:speedup_multiprocess_encrypttable}
        \end{subfigure}
        \begin{subfigure}[t]{0.33\linewidth}
            \centering
            \includegraphics[width=\linewidth, trim=0pt 4pt 0pt 0pt, clip]{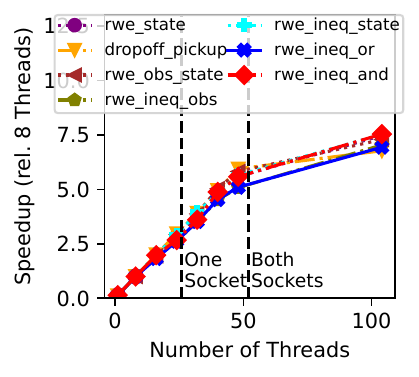}
            \caption{\textsf{AddFamily}.}
            \label{fig:speedup_multiprocess_addfamily}
        \end{subfigure}
        \begin{subfigure}[t]{0.33\linewidth}
            \centering
            \includegraphics[width=\linewidth, trim=0pt 4pt 0pt 0pt, clip]{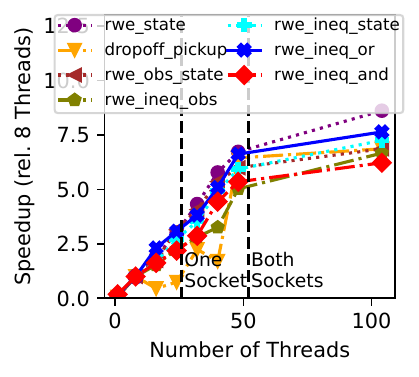}
            \caption{\textsf{RevealView}.}
            \label{fig:speedup_multiprocess_revealview}
        \end{subfigure}
        \caption{Multi-core scalability. Speedup is relative to 8 threads (e.g., linear scaling at 48 cores would be $6\times$).}
        \label{fig:multiprocess}
    \end{minipage}
    \begin{minipage}[b]{0.18\linewidth}
        \centering
        \includegraphics[width=0.9\linewidth, trim=0pt 8pt 2pt 2pt, clip]{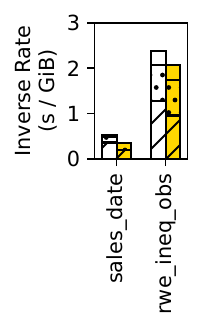}
        \caption{Effect of \textsf{fil}.}
        \label{fig:cloud_filter_revealview}
    \end{minipage}
\end{figure*}

We scale \sys{}'s backend to multiple cores using large, multi-partition datasets.
We use a \texttt{Standard\_E104ids\_v5} instance and omit the \storesales{} table, which does not fit in memory.
As in \secref{s:eval_protocol}, we measure the time for \sys{}'s backend to process in-memory deserialized data, and exclude the time to read the input or write the output.
Because \sys{} uses a separate \emph{process} per core, partitions are \emph{statically} assigned to threads/cores, for this benchmark only.

See \figref{fig:multiprocess}.
\sys{}'s protocols scale linearly across multiple physical cores and sockets.
While our system had 52 cores, we also measure performance with 104 threads, to use logical CPU cores (hyperthreading); as expected their benefit is less than physical cores.
Results for \textsf{RevealView} are noisier than for \textsf{AddFamily}, possibly due to static partitioning, variable runtime for partitions, and NUMA effects.

\subsection{End-to-End Performance}
\label{sec:eval_e2e}

\begin{figure}[t]
    \centering

    \begin{minipage}[b]{\linewidth}
        \centering
        \includegraphics[width=\linewidth, trim=0pt 8pt 0pt 0pt, clip]{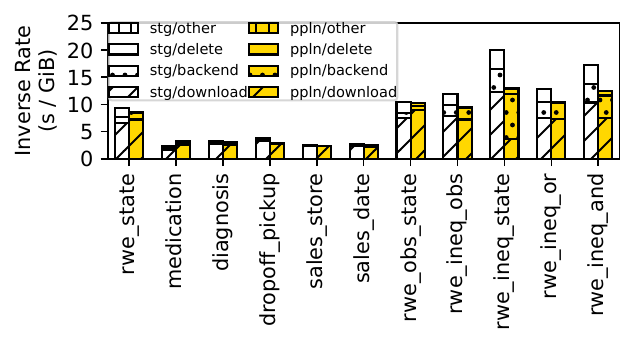}
        \caption{End-to-end breakdown for \textsf{RevealView}.}
        \label{fig:cloud_revealview}
    \end{minipage}
\end{figure}

\begin{figure*}[t!]
    \begin{subfigure}{0.38\linewidth}
        \centering
        \includegraphics[width=\linewidth]{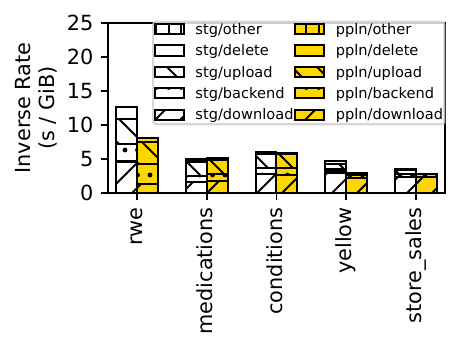}
        \caption{\textsf{EncryptTable}.}
        \label{fig:cloud_encrypttable}
    \end{subfigure}
    \begin{subfigure}{0.61\linewidth}
        \centering
        \includegraphics[width=\linewidth]{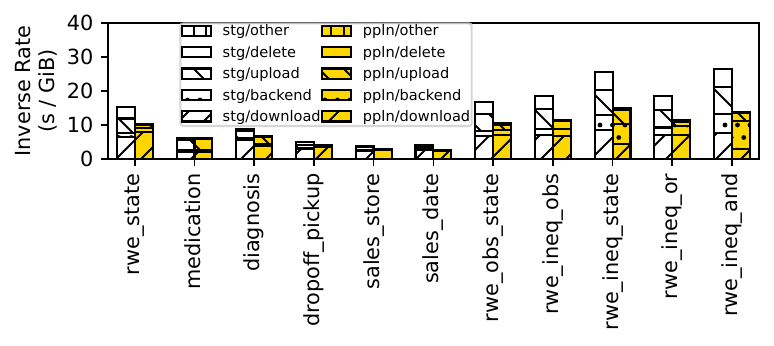}
        \caption{\textsf{AddFamily}.}
        \label{fig:cloud_addfamily}
    \end{subfigure}
    \caption{End-to-end breakdown of total runtime.}
    \label{fig:cloud_extra}
\end{figure*}

\begin{figure}[t]
    \centering
    \begin{subfigure}[t]{0.49\linewidth}
        \centering
        \includegraphics[width=\linewidth, trim=0pt 8pt 0pt 0pt, clip]{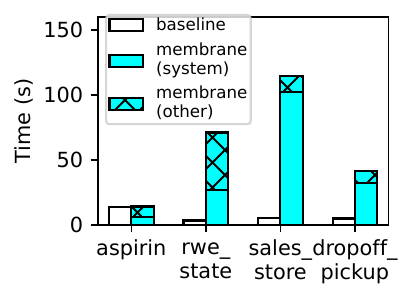}
        \caption{First query in a session.}
        \label{fig:databricks_first_query}
    \end{subfigure}
    \begin{subfigure}[t]{0.49\linewidth}
        \centering
        \includegraphics[width=\linewidth, trim=0pt 8pt 0pt 0pt, clip]{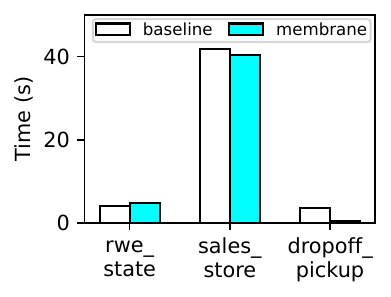}
        \caption{Additional queries.}
        \label{fig:databricks_additional_queries}
    \end{subfigure}
    \caption{\sys{}'s overhead in an interactive session.}
    \label{fig:databricks}
\end{figure}

\parhead{Planning}
With a branching factor of 256, planning typically takes less than 10 ms---the longest time is for \rweineqand{}, at $\approx 300$ ms---and consumes less than 100 MiB of memory.

\parhead{Interacting with cloud storage}
We now measure \textsf{RevealView} when using \sys{}'s orchestrator, both with (``ppln'') and without (``stg'') the orchestrator's pipelining.
We use \texttt{Standard\_E104ids\_v5} and materialize decrypted views on the local SSD.
\figref{fig:cloud_revealview} shows results when processing \emph{all} partitions of a table to hide access patterns (\secref{s:linear_scan}).
While pipelining is generally faster, disabling it shows a clearer breakdown into individual components.
\sys{}'s backend is a significant fraction of overall runtime for some workloads (e.g., \rweineqstate{}), but fetching partitions from storage usually dominates.
\figref{fig:cloud_extra} shows results for \textsf{EncryptTable} and \textsf{AddFamily}---compared to \figref{fig:cloud_revealview}, this graph has an additional component, \emph{upload time}, since \textsf{EncryptTable} and \textsf{AddFamily} upload their output to storage.
\figref{fig:cloud_filter_revealview} shows results when using \textsf{fil} to only process partitions that contain view contents.
We use \rweineqobs{} and \salesdate{}, as their data are in a contiguous range of partitions due to how the tables are sorted.
Using \textsf{fil} is much faster (compare y-axes of \figref{fig:cloud_revealview} \& \figref{fig:cloud_filter_revealview}) and reduces the relative overhead of fetching partitions, at the cost of revealing access patterns.
Both figures show time normalized by the full \emph{compressed} table size.

\parhead{Interactive analytics}
We ran \sys{} in a Databricks-hosted Spark cluster in an interactive notebook.
The cluster has 16 machines, each with 8 CPUs and 64 GiB RAM, similar to LHBench's evaluation setup~\cite{jain2023lhbench}.
Our baseline is to run PySpark-SQL in the standard way, providing it with the dataset's cloud storage URL.
To run \sys{}, we call \texttt{map} on Spark RDDs to run the orchestrator to have workers process disjoint sets of partitions, and use internal Spark APIs to convert output partition files at workers into a Spark dataframe.

For each dataset (Synthea, NYC Taxi, and LHBench), we obtained analytical SQL queries (\appref{app:analytical_queries}).
We separately measure the time to the result of the first SQL query, and the time to run the remaining SQL queries.
After decrypting \asqlview{} with \sys{}, we call \texttt{persist} so that Spark caches it in memory for subsequent queries.
We did not call \texttt{persist} in the baseline, so Spark's query optimizer can co-optimize the \sqlview{} and query.

The results are in \figref{fig:databricks}.
On the first query, the baseline fetches plaintext data from cloud storage.
\sys{} fetches the result of \textsf{AddFamily}, which is larger due to loss of compression, and then runs \textsf{RevealView} on the data.
Thus, \sys{} increases the time until the first query result by up to $20\times$.
Subsequent analytical queries execute on the cached result of \textsf{RevealView} and perform similarly to the baseline.
For \dropoffpickup{}, they were actually \emph{faster} with \sys{}.
This may be because \sys{} locally materializes the view.

\subsection{Comparison to Other Systems}

There is no existing cryptographic system with the same functionality as \sys{}.
The closest is CryptDB's multi-principal design~\cite[\S{}4]{popa2011cryptdb}, but (1) the design does not support \emph{private, encrypted} access control attributes, and (2) the public CryptDB code does not even support multiple principals.
Thus, we are forced to consider baselines that do not exactly match \sys{}'s functionality and/or security.

\parhead{Existing Cryptographic Schemes}
First, we consider Inner Product Encryption (IPE)~\cite{katz2008ipe}, capable of conjunctions and disjunctions, to encrypt each row.
IPE removes the restriction that the \sqlnsselect{} clause must include fields used in the \sqlnswhere{} clause (\secref{s:supported_sql}), as IPE decryption hides which $\sqlnsor{}$ clauses match.
Second, we consider Identity-Based Encryption (IBE)~\cite{boneh2001ibe}; an ID-hiding IBE scheme can be used instead of \sys{}'s selection keys for each predicate.
These designs still rely on \sys{}'s view families, canonical form, and planner.

\begin{table}[t]
    \small
    \centering
    \begin{tabular}{|c|c|c|} \hline
        Protocol & Dec. Latency & Dec. Thrpt (rows/s/core)\\ \hline\hline
        IPE & $\approx 4$ ms & $\approx 250$ \\ \hline
        IBE & $\approx 1$ ms & $\approx 1,000$ \\ \hline
        \sys{} & $\approx 0.002$ ms & $\approx 500,000$ \\ \hline
    \end{tabular}
    \caption{Estimates for single-predicate view (e.g., \rwestate{}).}
    \label{tab:crypto_baselines}
\end{table}

We consider schemes~\cite{kim2016ipe, libert2005ibe} based on prime-order bilinear groups.
We estimate their decryption time by counting the number of bilinear group operations and multiplying by the measured cost of those operations for an efficient bilinear group implementation~\cite[Table 1]{kumar2019jedi}.
For \sys{}, we use results from \figref{fig:selectivity_rwe_state}, dividing the total decryption time for the most populous state by the number of matching rows for that state.
See \tabref{tab:crypto_baselines}.
Our design based on symmetric-key primitives allows \sys{}'s backend to decrypt rows up to three orders of magnitude faster than using off-the-shelf IPE/IBE.
This is separate from \sys{}'s key-hiding tags, which allow skipping rows that cannot be decrypted.

\figref{fig:cloud_revealview} shows that transferring data over the network can dominate the overall time.
\tabref{tab:crypto_baselines} clarifies that this is because \sys{} uses only fast symmetric-key cryptography.
In an IPE- or IBE-based design, decryption time at the client would dominate runtime, particularly for views with high selectivity.

\parhead{Trusted AC Server}
We consider the Trusted AC Server strawman (\secref{sec:background}).
\sys{}'s benefits relative to this strawman are in security---the AC server is an online, central point of attack.
We expect the strawman to outperform \sys{}.

Building an efficient AC Server required forgoing API compatibility with Azure Data Lake Storage.
For example, \texttt{azcopy} scans files ahead of time to determine their lengths before downloading them in chunks; supporting this at the AC service would require downloading the file and computing its view just to know the length of the resulting file, and doing so again to provide the file contents.
Thus, existing tools like \texttt{azcopy} cannot directly interface with our AC server, requiring us to implement a new client to fetch files.

\begin{figure}
    \centering
    \includegraphics[width=\linewidth]{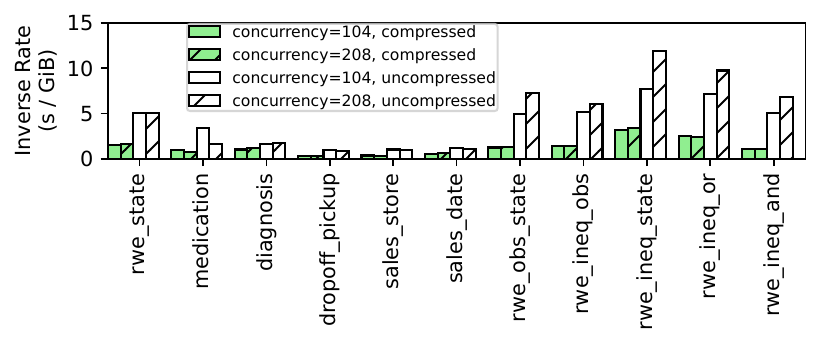}
    \vspace*{-4ex}
    \caption{AC server performance (see also \figref{fig:cloud_revealview}).}
    \label{fig:ac_server}
\end{figure}

We ran the client and AC server on \texttt{E104ids\_v5} instances; results are in \figref{fig:ac_server} (see \figref{fig:cloud_revealview} for comparison).
First, the AC service performs better than \sys{}, as expected (green bars).
But, for uncompressed datasets, its performance is comparable to \sys{} with pipelining; this suggests that the AC server's performance benefits are mostly explained by the fact that it can work with compressed data.
Second, the CPU time at the AC servers was significant, highlighting the need to provision large amounts of compute to deploy an AC server.
Third, 104 concurrent requests (one per logical core) were generally sufficient to achieve the best performance.
Finally, for \sys{} and the AC server, performance is better for more selective views.
For \sys{}, this is due to key-hiding tags; for the AC server, this may be because more selective views are smaller to transfer from AC server to client.

Our AC server uses TLS, but data are not encrypted in cloud storage.
As noted in \secref{sec:model}, one could encrypt cloud storage and have the AC server use a symmetric key to decrypt files on demand.
We expect the performance impact to be small.

\section{Related Work}
\label{sec:related}

As discussed in \secref{sec:intro},
EFSes~\cite{\encfs} enforce access policies based on \emph{public} file paths, at \emph{file}-granularity.
In contrast, \sys{} enforces access policies based on \emph{private}, encrypted, data values, at \emph{cell}-granularity.

As discussed in \secref{sec:intro}, EDBs~\cite{\encdbsys} and ESSes~\cite{\encsearchsys} solve a different problem than \sys{}.
Whereas EDBs (respectively, ESSes) compute the \emph{encrypted} result of a SQL query (respectively, keyword search query) without a decryption key, \sys{} produces the \emph{plaintext} result of \asqlview{} using a decryption key for that view.
Specifically, EDBs and ESSes have two important shortcomings compared to \sys{}.
First, they restrict clients to only issuing certain kinds of queries---those that can be executed cryptographically at the server.
Second, they usually do not support multi-client access control.
We discuss the few exceptions below.

The few EDBs that support access control generally do so based on public, unencrypted attributes~\cite{\cryptacsys}.
For \queryref{query:intro_example} (\secref{sec:intro}), these techniques would require exposing each row's \textsf{State}.
In contrast, \sys{} encrypts all cells and enforces access control based on encrypted cell contents.

CryptDB considers multi-principal access control~\cite[\S{}4]{popa2011cryptdb}.
Compared to \sys{}, it has limitations: (1) it is designed for access policies based on \emph{public data}, requiring principals with access to a row to be listed in that row in \emph{plaintext}, (2) it does not support inequalities or \sqlnsand{}s, and (3) it does not protect users logged in during a compromise.

\textsc{Monomi}~\cite{tu2013monomi} is designed to support any SQL query.
It works by offloading operations that it cannot support cryptographically to the client.
The client, like \sys{}'s compute, is trusted to decrypt the data and see them in plaintext.
Unlike \sys{}, \textsc{Monomi} does not support access control.

Blind Seer~\cite{pappas2014blindseer} supports access control, but only for a single client---a single access policy is applied to \emph{all} queries---and splits the server into multiple trust domains (\textsf{S} and \textsf{IS}).

Curtmola et al.~\cite{curtmola2006improved} propose an ESS, but it is limited to single-word queries and lacks access control.
Protocols based on \textsf{OXT}~\cite{cash2013highlyscalable, jutla2022efficient} or structured encryption~\cite{chase2010structured, kamara2018sql, kamara2020optimal} support more complex queries, but still not access control.
\textsf{MC-OXT}~\cite{jarecki2013ospir} extends \textsf{OXT} to multiple clients, but in a weaker threat model than \sys{}---an adversary who has compromised \emph{both} the storage server \emph{and} some clients can  bypass access control.
\textsf{OXT}-based schemes also use more complex and slower cryptography than \sys{}, whose symmetric-key, high-throughput design is key to data lakes.

A line of work provides cryptographic access control for XML~\cite{bertino-xml-access,suciu-xml,abadi-xml,bertino-survey}.
Unlike \sys{}, these works do not support relational data or data-dependent inequalities.
They may also leak information about documents' tree structure.

Parquet modular encryption~\cite{gershinsky2018efficient, parquetmodularencryption} provides coarse-grained, \emph{column-level} encryption and access control.
In contrast, \sys{} provides data-dependent, \emph{row/cell-level} encryption.

DJoin~\cite[\S{}5.2]{narayan2012djoin} rewrites queries to an intermediate language (IL).
DJoin provides DP, not access control, so its IL has a different structure than \sys{}-canonical form.

Some EDBs~\cite{bajaj2011trusteddb, arasu2015cipherbase, zheng2017opaque, vinayagamurthy2019stealthdb} use a Trusted Execution Environment (TEE) like Intel SGX.
However, TEEs are complex hardware artifacts that are difficult to secure and are vulnerable to side-channel attacks~\cite{bulck2018foreshadow,parno2011memoir,brasser2017cacheattack,schaik2020sgaxe}.
\sys{} moves storage outside of the TCB by placing trust in cryptography.

\section{Discussion and Conclusion}
\label{sec:conclusion}

Data lakes departed from DBMSes by separating compute and storage.
This enables independent scaling of compute and storage, and flexible use of data analysis frameworks (e.g., Spark, Pandas) instead of SQL.

This paper shows that the data lake architecture, originally motivated by scalability and flexibility, actually has positive implications for security.
Specifically, data lakes organize compute and storage into \emph{separate trust domains}.
This allows \sys{} to focus on removing storage entirely from the TCB via fine-grained, data-dependent, cryptographically-enforced access control, without also having to remove compute entirely from the TCB.
Thus, \sys{} can process analytical queries in plaintext, thereby retaining the flexibility of unencrypted data lakes and supporting off-the-shelf frameworks.

This does not, by itself, render EDBs and ESSes insufficient for \sys{}'s purpose.
The actual reason why existing EDBs and ESSes are not viable alternatives to \sys{} is \textbf{access control}; they either do not provide access control at all, or provide weaker access control than \sys{} (\secref{sec:related}).
If an EDB or ESS were to provide access control, it could be adapted to data lakes by applying \sys{}'s system model---specifically, by having data scientists query the EDB/ESS for their \sqlviews{} and then analyze the results in plaintext (\secref{sec:intro}).

That said, \sys{}'s techniques can be applied to existing single-client EDBs/ESSes to enable access control.
This is possible because \sys{}'s goal, namely \emph{cryptographic access control}, is orthogonal to EDBs' and ESSes' goal, namely \emph{query processing over encrypted data}.

As a concrete example, consider CryptDB~\cite{popa2011cryptdb}.
CryptDB includes a multi-user access control design~\cite[\S{}4]{popa2011cryptdb}, in which a user's data is decrypted at the proxy/server when a user logs in, and the proxy/server is trusted to securely delete the users' key and decrypted data when they log out.
As discussed in \secref{sec:related}, CryptDB's design cannot support access policies based on private, encrypted data.
To remedy this, we can apply \sys{}'s techniques, by using \sys{}'s cryptographic protocol as the outermost layer of onion encryption.
Specifically, each cell is encrypted with CryptDB's scheme as usual, and then \sys{} is used to encrypt the resulting ciphertexts, while ensuring that $g(\mathsf{row})$ is evaluated on \emph{plaintext row data}.
When a user logs in, she sends her \sys{} view keys to the proxy/server, which decrypts the matching cells to obtain the CryptDB ciphertexts.
CryptDB can then process queries over these ciphertexts as it is designed to do.
When the user logs out, the CryptDB proxy/server is trusted to delete the users' view keys and the decrypted CryptDB ciphertexts, analogous to CryptDB's current access control.

Similarly to our construction for CryptDB in the previous paragraph, \sys{}'s backend can encrypt data in an ESS for access control.
But \sys{} has even deeper connections to ESSes.
The $\sqlbg{g(\mathsf{row})\sqlin{}?x}$ clauses in \sys{}-canonical form (\secref{sec:planner}) represent a generalized form of \emph{keyword search}, the functionality that ESSes provide.
Rows in \sys{} correspond directly to documents in an ESS, and the expression $g(\mathsf{row})$ corresponds to a generalization of keywords by which documents can be queried.
Thus, \sys{}'s planner, which rewrites complex queries into keyword searches, can actually enable ESSes to process more complex queries.

Given that \sys{}'s backend and planner also apply to EDBs and ESSes, why did we focus on data lakes in this paper?
First, it shows that \sys{}'s techniques are useful independently of the encrypted query processing in EDBs and ESSes and results in a simpler system design for \sys{}.
Second, it shows that \sys{}'s techniques are broadly applicable, as they allow data scientists to use off-the-shelf analytics frameworks and do not restrict the analytical query set as EDBs/ESSes do.
Third, it shows that \sys{}'s design is \emph{synergistic} with current trends in data analytics systems, specifically the separation between compute and storage that has come to define the data lake paradigm.

Finally, \sys{}'s amortized performance overhead can be small, as it only requires decrypting data at the \emph{start} of an interactive session.
Thus, we are hopeful that \sys{}'s techniques can help protect sensitive data used for analytics.

\section*{Acknowledgments}
We would like to thank Joey Gonzalez and Matei Zaharia for formative conversations in the early stages of this research project.
This work is supported by NSF CISE Expeditions
\#CCF-1730628, NSF GRFP \#DGE-1752814, and gifts from Accenture, AMD, Anyscale, Cisco, Google, IBM, Intel, Intesa Sanpaolo, Lambda, Lightspeed, Mibura, Microsoft, NVIDIA, Samsung SDS, and SAP.
Any opinions, findings, and conclusions or recommendations expressed in this material are those of the authors and do not necessarily reflect the views of the National Science Foundation.

\bibliographystyle{plain}
\bibliography{references}

\begin{thebibliography}{100}

\bibitem{abadi2000twoviews}
Mart\'{i}n Abadi and Phillip Rogaway.
\newblock Reconciling two views of cryptography (the computation soundness of
  formal encryption).
\newblock In {\em TCS}. Springer-Verlag Berlin Heidelberg, 2000.

\bibitem{abadi-xml}
Mart\'{i}n Abadi and Bogdan Warinschi.
\newblock Security analysis of cryptographically controlled access to {XML}
  documents.
\newblock {\em Journal of the ACM (JACM)}, 55(2):1--29, 2008.

\bibitem{parquet}
{Apache Parquet}.
\newblock {Apache Parquet}.
\newblock \url{https://parquet.apache.org/}. Accessed: March 17, 2023.

\bibitem{parquetmodularencryption}
{Apache Parquet}.
\newblock {Parquet} modular encryption.
\newblock
  \url{https://parquet.apache.org/docs/file-format/data-pages/encryption/}.
  Accessed: April 2, 2024.

\bibitem{arrow}
{Apache Software Foundation}.
\newblock {Apache Arrow} | {Apache Arrow}.
\newblock \url{https://arrow.apache.org/}. Accessed: April 10, 2023.

\bibitem{orc}
{Apache Software Foundation}.
\newblock {Apache ORC} \textbullet{} high-performance columnar storage for
  {Hadoop}.
\newblock \url{https://orc.apache.org/}. Accessed: April 10, 2023.

\bibitem{icloud_security}
{Apple Inc.}
\newblock {iCloud} data security overview - {Apple Support}.
\newblock \url{https://support.apple.com/en-us/HT202303}. Accessed: April 16,
  2023.

\bibitem{arasu2015cipherbase}
Arvind Arasu, Ken Eguro, Manas Joglekar, Raghav Kaushik, Donald Kossmann, and
  Ravi Ramamurthy.
\newblock Transaction processing on confidential data using {Cipherbase}.
\newblock In {\em ICDE}. IEEE, 2015.

\bibitem{armbrust2021lakehouse}
Michael Armbrust, Ali Ghodsi, Reynold Xin, and Matei Zaharia.
\newblock {Lakehouse}: A new generation of open platforms that unify data
  warehousing and advanced analytics.
\newblock In {\em CIDR}. CIDR, 2021.

\bibitem{armbrust2015sparksql}
Michael Armbrust, Reynold~S. Xin, Cheng Lian, Yin Huai, Davies Liu, Joseph~K.
  Bradley, Xiangrui Meng, Romer Kaftan, Michael~J. Franklin, Ali Ghodsi, and
  Matei Zaharia.
\newblock {Spark SQL}: Relational data processing in {Spark}.
\newblock In {\em SIGMOD}. ACM, 2015.

\bibitem{yellow-cab-dataset}
Microsoft Azure.
\newblock {NYC} taxi \& limousine commission - yellow taxi trip records.
\newblock
  \url{https://learn.microsoft.com/en-us/azure/open-datasets/dataset-taxi-yellow?tabs=azureml-opendatasets},
  2022.

\bibitem{bajaj2011trusteddb}
Sumeet Bajaj and Radu Sion.
\newblock {TrustedDB}: A trusted hardware based database with privacy and data
  confidentiality.
\newblock In {\em SIGMOD}. ACM, 2011.

\bibitem{bater2017smcql}
Johes Bater, Gregory Elliott, Craig Eggen, Satyender Goel, Abel Kho, and Jennie
  Rogers.
\newblock {SMCQL}: Secure querying for federated databases.
\newblock {\em VLDB}, 10(6), 2017.

\bibitem{bellare2001keyprivacy}
Mihir Bellare, Alexandra Boldyreva, Anand Desai, and David Pointcheval.
\newblock Key-privacy in public-key encryption.
\newblock In {\em ASIACRYPT}. Springer-Verlag Berlin Heidelberg, 2001.

\bibitem{bertino-xml-access}
Elisa Bertino and Elena Ferrari.
\newblock Secure and selective dissemination of xml documents.
\newblock {\em TISSEC}, 5(3), 2002.

\bibitem{bertino-survey}
Elisa Bertino, Gabriel Ghinita, Ashish Kamra, et~al.
\newblock Access control for databases: Concepts and systems.
\newblock {\em Foundations and Trends{\textregistered} in Databases},
  3(1--2):1--148, 2011.

\bibitem{blaze1993cfs}
Matt Blaze.
\newblock A cryptographic file system for {Unix}.
\newblock In {\em CCS}. ACM, 1993.

\bibitem{boneh2005hibe}
Dan Boneh, Xavier Boyen, and Eu-Jin Goh.
\newblock Hierarchical identity based encryption with constant size ciphertext.
\newblock In {\em EUROCRYPT}. Springer, 2005.

\bibitem{boneh2001ibe}
Dan Boneh and Matt Franklin.
\newblock Identity-based encryption from the {Weil} pairing.
\newblock In {\em CRYPTO}. Springer, Berlin, Heidelberg, 2001.

\bibitem{boneh2011functional}
Dan Boneh, Amit Sahai, and Brent Waters.
\newblock Functional encryption: Definitions and challenges.
\newblock In {\em TCC}. Springer, Berlin, Heidelberg, 2011.

\bibitem{boneh2020cpa}
Dan Boneh and Victor Shoup.
\newblock {\em A Graduate Course in Applied Cryptography}, chapter~5.
\newblock \url{cryptobook.us}, 2020.

\bibitem{boneh2023blockciphers}
Dan Boneh and Victor Shoup.
\newblock {\em A Graduate Course in Applied Cryptography}, chapter~4.
\newblock \url{cryptobook.us}, 2023.

\bibitem{boneh2023integrity}
Dan Boneh and Victor Shoup.
\newblock {\em A Graduate Course in Applied Cryptography}, chapter~6.
\newblock \url{cryptobook.us}, 2023.

\bibitem{boneh2023cpa}
Dan Boneh and Victor Shoup.
\newblock {\em A Graduate Course in Applied Cryptography}, chapter~5.
\newblock \url{cryptobook.us}, 2023.

\bibitem{boneh2023symmetrickey}
Dan Boneh and Victor Shoup.
\newblock {\em A Graduate Course in Applied Cryptography}, chapter~2.
\newblock \url{cryptobook.us}, 2023.

\bibitem{brasser2017cacheattack}
Ferdinand Brasser, Urs M{\"u}ller, Alexandra Dmitrienko, Kari Kostiainen,
  Srdjan Capkun, and Ahmad-Reza Sadeghi.
\newblock Software grand exposure: {SGX} cache attacks are practical.
\newblock In {\em WOOT}. USENIX, 2017.

\bibitem{bulck2018foreshadow}
Jo~Van Bulck, Marina Minkin, Ofir Weisse, Daniel Genkin, Baris Kasikci, Frank
  Piessens, Mark Silberstein, Thomas~F. Wenisch, Yuval Yarom, and Raoul
  Strackx.
\newblock {Foreshadow}: Extracting the keys to the {Intel SGX} kingdom with
  transient out-of-order execution.
\newblock In {\em USENIX Security}. USENIX, 2018.

\bibitem{burkhalter2020timecrypt}
Lukas Burkhalter, Anwar Hithnawi, Alexander Viand, Hossein Shafagh, and Sylvia
  Ratnasamy.
\newblock {TimeCrypt}: Encrypted data stream processing at scale with
  cryptographic access control.
\newblock In {\em NSDI}. USENIX, 2020.

\bibitem{cash2015leakage}
David Cash, Paul Grubbs, Jason Perry, and Thomas Ristenpart.
\newblock Leakage-abuse attacks against searchable encryption.
\newblock In {\em CCS}. ACM, 2015.

\bibitem{cash2014dynamic}
David Cash, Joseph Jaeger, Stanislaw Jarecki, Charanjit Jutla, Hugo Krawczyk,
  Marcel-Catalin Ros, and Michael Steiner.
\newblock Dynamic searchable encryption in very-large databases: Data
  structures and implementation.
\newblock In {\em NDSS}. Internet Society, 2014.

\bibitem{cash2013highlyscalable}
David Cash, Stanislaw Jarecki, Charanjit Jutla, Huga Krawczyk,
  Marcel-C\u{a}t\u{a}lin Ro\c{s}u, and Michael Steiner.
\newblock Highly-scalable searchable symmetric encryption with support for
  boolean queries.
\newblock In {\em CRYPTO}. Springer, Berlin, Heidelberg, 2013.

\bibitem{cash2021improved}
David Cash, Ruth Ng, and Adam Rivkin.
\newblock Improved structured encryption for {SQL} databases via hybrid
  indexing.
\newblock In {\em Applied Cryptography and Network Security}. Springer
  International Publishing, 2021.

\bibitem{chase2010structured}
Melissa Chase and Seny Kamara.
\newblock Structured encryption and controlled disclosure.
\newblock In {\em ASIACRYPT}. Springer Berlin Heidelberg, 2010.

\bibitem{s3hacks2017}
Ericka Chickowski.
\newblock Leaky buckets: 10 worst {Amazon S3} breaches.
\newblock {\em Bitdefender}, 2018.
\newblock \url{https://businessinsights.bitdefender.com/worst-amazon-breaches}.
  Accessed: March 17, 2023.

\bibitem{curtmola2006improved}
Reza Curtmola, Juan Garay, Seny Kamara, and Rafail Ostrovsky.
\newblock Searchable symmetric encryption: Improved definitions and efficient
  constructions.
\newblock In {\em CCS}. ACM, 2006.

\bibitem{databricksAccessControl}
Databricks.
\newblock Create a dynamic view.
\newblock
  \url{https://docs.databricks.com/data-governance/unity-catalog/create-views.html#dynamic-view}.
  Accessed: April 17, 2023.

\bibitem{deltalake}
Databricks.
\newblock {Delta Lake} on {Databricks}.
\newblock \url{https://www.databricks.com/product/delta-lake-on-databricks}.
  Accessed: March 18, 2023.

\bibitem{immutadatabricks}
Databricks.
\newblock {Immuta} - {Databricks}.
\newblock \url{https://www.databricks.com/partners/immuta}. Accessed: March 18,
  2023.

\bibitem{privaceradatabricks}
Databricks.
\newblock {Privacera} - {Databricks}.
\newblock \url{https://www.databricks.com/partners/privacera}. Accessed: March
  18, 2023.

\bibitem{dauterman2020dory}
Emma Dauterman, Eric Feng, Ellen Luo, Raluca~Ada Popa, and Ion Stoica.
\newblock {DORY}: An encrypted search system with distributed trust.
\newblock In {\em OSDI}. USENIX, 2020.

\bibitem{dauterman2022waldo}
Emma Dauterman, Mayank Rathee, Raluca~Ada Popa, and Ion Stoica.
\newblock {Waldo}: A private time-series database from function secret sharing.
\newblock In {\em S\&P}. IEEE, 2022.

\bibitem{dwork2006differentialprivacy}
Cynthia Dwork.
\newblock Differential privacy.
\newblock In {\em ICALP}. Springer, Berlin, Heidelberg, 2006.

\bibitem{immutasummittalk2021}
Zachary Friedman.
\newblock Considerations for data access in the {Lakehouse}, 2021.
\newblock \url{https://youtu.be/aUD4Z_UnWmM}. Accessed: March 18, 2023.

\bibitem{securitysummittalk2021}
Abhinav Garg and Tianyi Huang.
\newblock {Databricks Lakehouse} platform governance and security fundamentals.
\newblock
  \url{https://www.databricks.com/session_na21/databricks-lakehouse-platform-governance-and-security-fundamentals},
  Accessed: March 19, 2023.

\bibitem{gershinsky2018efficient}
Gidon Gershinsky.
\newblock Efficient analytics on encrypted data.
\newblock In {\em SYSTOR}. ACM, 2018.

\bibitem{goh2003sirius}
Eu-Jin Goh, Hovav Shacham, Nagendra Modadugu, and Dan Boneh.
\newblock {SiRiUS}: Securing remote untrusted storage.
\newblock In {\em NDSS}. Internet Society, 2003.

\bibitem{biglake}
Google.
\newblock {BigLake}: Unify data lakes \& data warehouses | {Google Cloud}.
\newblock \url{https://cloud.google.com/biglake}. Accessed: March 18, 2023.

\bibitem{protobuf}
{Google}.
\newblock Protocol buffers documentation.
\newblock \url{https://protobuf.dev/}. Accessed: April 10, 2023.

\bibitem{goyal2006kpabe}
Vipul Goyal, Omkant Pandey, Amit Sahai, and Brent Waters.
\newblock {Attribute-Based Encryption} for fine-grained access control of
  encrypted data.
\newblock In {\em CCS}. ACM, 2006.

\bibitem{grubbs2016breaking}
Paul Grubbs, Richard McPherson, Muhammad Naveed, Thomas Ristenpart, and Vitaly
  Shmatikov.
\newblock Breaking web applications built on top of encrypted data.
\newblock In {\em CCS}. ACM, 2016.

\bibitem{grubbs2017why}
Paul Grubbs, Thomas Ristenpart, and Vitaly Shmatikov.
\newblock Why your encrypted database is not secure.
\newblock In {\em HotOS}. ACM, 2017.

\bibitem{grubbs2018breaking}
Paul~Allen Grubbs.
\newblock {\em Breaking and Building Encrypted Databases}.
\newblock PhD thesis, Cornell University, 2019.

\bibitem{hacigumus2002executing}
Hakan Hacig\"{u}m\"{u}\c{s}, Bala Iyer, Chen Li, and Sharad Mehrotra.
\newblock Executing {SQL} over encrypted data in the database-service-provider
  model.
\newblock In {\em SIGMOD}. ACM, 2002.

\bibitem{hang2015enki}
Isabelle Hang, Florian Kerschbaum, and Ernesto Damiani.
\newblock {ENKI}: Access control for encrypted query processing.
\newblock In {\em SIGMOD}. ACM, 2015.

\bibitem{synthea-dataset}
Synthetic Health.
\newblock Synthea.
\newblock \url{https://github.com/synthetichealth/synthea}, 2023.

\bibitem{deltapipeline}
Brenner Heintz and Denny Lee.
\newblock Productionizing machine learning with {Delta Lake} - {Databricks}
  blog.
\newblock
  \url{https://www.databricks.com/blog/2019/08/14/productionizing-machine-learning-with-delta-lake.html}.
  Accessed: March 22, 2023.

\bibitem{hu2020ghostor}
Yuncong Hu, Sam Kumar, and Raluca~Ada Popa.
\newblock {Ghostor}: Toward a secure data-sharing system from decentralized
  trust.
\newblock In {\em NSDI}. USENIX, 2020.

\bibitem{esaisummittalk2021}
Sean Jacobs and Matt Vogt.
\newblock {Eisai}'s secret to data access control in {Databricks} lakehouse
  with {SQL} analytics.
\newblock \url{https://youtu.be/FOfRGg41RVI}. Accessed: March 18, 2023.

\bibitem{jain2023lhbench}
Paras Jain, Peter Kraft, Conor Power, Tathagata Das, Ion Stoica, and Matei
  Zaharia.
\newblock Analyzing and comparing lakehouse storage systems.
\newblock In {\em CIDR}. CIDR, 2023.

\bibitem{jarecki2013ospir}
Stanislaw Jarecki, Charanjit Jutla, Huga Krawczyk, Marcel Rosu, and Michael
  Steiner.
\newblock Outsourced symmetric private information retrieval.
\newblock In {\em CCS}. ACM, 2013.

\bibitem{jutla2022efficient}
Charanjit Jutla and Sikhar Patranabis.
\newblock Efficient searchable symmetric encryption for join queries.
\newblock In {\em ASIACRYPT}. Springer-Verlag, 2022.

\bibitem{kallahalla2003plutus}
Mahesh Kallahalla, Erik Riedel, Ram Swaminathan, Qian Wang, and Kevin Fu.
\newblock {Plutus}: Scalable secure file sharing on untrusted storage.
\newblock In {\em FAST}. USENIX, 2003.

\bibitem{kamara2018sql}
Seny Kamara and Tarik Moataz.
\newblock {SQL} on structurally-encrypted databases.
\newblock In {\em ASIACRYPT}. Springer-Verlag, 2018.

\bibitem{kamara2020optimal}
Seny Kamara, Tarik Moataz, Stan Zdonik, and Zheguang Zhao.
\newblock An optimal relational database encryption scheme.
\newblock Cryptology ePrint Archive, Paper 2020/274, 2020.
\newblock \url{https://eprint.iacr.org/2020/274}.

\bibitem{katz2008ipe}
Jonathan Katz, Amit Sahai, and Brent Waters.
\newblock Predicate encryption supporting disjunctions, polynomial equations,
  and inner products.
\newblock In {\em EUROCRYPT}. Springer, Berlin, Heidelberg, 2008.

\bibitem{keybase}
{Keybase}.
\newblock {Keybase}.
\newblock \url{https://keybase.io/}. Accessed: April 16, 2023.

\bibitem{kim2016ipe}
Intae Kim, Seong~Oun Hwang, Jong~Hwan Park, and Chanil Park.
\newblock An efficient predicate encryption with constant pairing computations
  and minimum costs.
\newblock {\em Transactions on Computers}, 65(10), 2016.

\bibitem{verizonbreach2017}
Jeremy Kirk.
\newblock {Verizon} breach: 6 million customer accounts exposed.
\newblock {\em Bank Info Security}, 2017.
\newblock
  \url{https://www.bankinfosecurity.com/verizon-breach-6-million-customer-accounts-exposed-a-10107}.
  Accessed: March 19, 2023.

\bibitem{kumar2019jedi}
Sam Kumar, Yuncong Hu, Michael~P Andersen, Raluca~Ada Popa, and David~E.
  Culler.
\newblock {JEDI}: Many-to-many end-to-end encryption and key delegation for
  {IoT}.
\newblock In {\em USENIX Security}. USENIX, 2019.

\bibitem{lecuyer2019sage}
Matthias L\`{e}cuyer, Riley Spahn, Kiran Vodrahalli, Roxana Geambasu, and
  Daniel Hsu.
\newblock Privacy accounting and quality control in the {Sage} differentially
  private {ML} platform.
\newblock In {\em SOSP}. ACM, 2019.

\bibitem{li2004sundr}
Jinyuan Li, Maxwell Krohn, David Mazi\`{e}res, and Dennis Shasha.
\newblock Secure untrusted data repository ({SUNDR}).
\newblock In {\em OSDI}. USENIX, 2004.

\bibitem{libert2005ibe}
Beno{\^\i}t Libert and Jean-Jacques Quisquater.
\newblock Identity based encryption without redundancy.
\newblock In {\em ACNS}. Springer, Berlin, Heidelberg, 2005.

\bibitem{mahajan2010depot}
Prince Mahajan, Srinath Setty, Sangmin Lee, Allen Clement, Lorenzo Alvisi, Mike
  Dahlin, and Michael Walfish.
\newblock {Depot}: Cloud storage with minimal trust.
\newblock In {\em OSDI}. USENIX, 2010.

\bibitem{melnik2010dremel}
Sergey Melnik, Andrey Gubarev, Jing~Jing Long, Geoffrey Romer, Shiva
  Shivakumar, Matt Tolton, and Theo Vassilakis.
\newblock {Dremel}: Interactive analysis of web-scale datasets.
\newblock {\em VLDB}, 3(1), 2010.

\bibitem{azuredatalake}
Microsoft.
\newblock {Data Lake} | {Microsoft Azure}.
\newblock \url{https://azure.microsoft.com/en-us/solutions/data-lake/}.
  Accessed: March 18, 2023.

\bibitem{azure_synapse_parquet}
Microsoft.
\newblock {Parquet} format in {Azure Data Factory} and {Azure Synapse
  Analytics}.
\newblock
  \url{https://learn.microsoft.com/en-us/azure/data-factory/format-parquet}.
  Accessed: December 10, 2024.

\bibitem{suciu-xml}
Gerome Miklau and Dan Suciu.
\newblock Controlling access to published data using cryptography.
\newblock In {\em Proceedings 2003 VLDB Conference}, pages 898--909. Elsevier,
  2003.

\bibitem{financialsummittalks2022}
Junta Nakai and Anna Cuisia.
\newblock Guide to financial services sessions at {Data + AI Summit 2022}.
\newblock
  \url{https://www.databricks.com/blog/2022/05/31/guide-to-financial-services-sessions-at-data-ai-summit-2022.html}.
  Accessed: March 19, 2023.

\bibitem{narayan2012djoin}
Arjun Narayan and Andreas Haeberlen.
\newblock {DJoin}: Differentially private join queries over distributed
  databases.
\newblock In {\em OSDI}. USENIX, 2012.

\bibitem{naveed2015inference}
Muhammad Naveed, Seny Kamara, and Charles~V. Wright.
\newblock Inference attacks on property-preserving encrypted databases.
\newblock In {\em CCS}. ACM, 2015.

\bibitem{healthcaresummittalks2022}
Michael Ortega and Michael Sanky.
\newblock Guide to healthcare \& life sciences sessions at {Data + AI Summit
  2022}.
\newblock
  \url{https://www.databricks.com/blog/2022/06/14/guide-to-healthcare-life-sciences-sessions-at-data-ai-summit-2022.html}.
  Accessed: March 19, 2023.

\bibitem{pandas}
Pandas.
\newblock {pandas} - {Python} data analysis library.
\newblock \url{https://pandas.pydata.org/}. Accessed: April 10, 2023.

\bibitem{papadimitriou2016seabed}
Antonis Papadimitriou, Ranjita Bhagwan, Nishanth Chandran, Ramachandran Ramjee,
  Andreas Haeberlen, Harmeet Singh, Abishek Modi, and Saikrishna
  Badrinarayanan.
\newblock Big data analytics over encrypted datasets with {Seabed}.
\newblock In {\em OSDI}. USENIX, 2016.

\bibitem{pappas2014blindseer}
Vasilis Pappas, Fernando Krell, Binh Vo, Vladimir Kolesnikov, Tal Malkin,
  Seung~Geol Choi, Wesley George, Angelos Keromytis, and Steven Bellovin.
\newblock {Blind Seer}: A scalable private dbms.
\newblock In {\em S\&P}. IEEE, 2014.

\bibitem{parno2011memoir}
Bryan Parno, Jay Lorch, John~(JD) Douceur, James Mickens, and Jonathan~M.
  McCune.
\newblock Memoir: Practical state continuity for protected modules.
\newblock In {\em S\&P}. IEEE, 2011.

\bibitem{poddar2019arx}
Rishabh Poddar, Tobias Boelter, and Raluca~Ada Popa.
\newblock {Arx}: An encrypted database using semantically secure encryption.
\newblock {\em VLDB}, 12(11), 2019.

\bibitem{cloudproof}
Raluca~Ada Popa, Jacob~R. Lorch, David Molnar, Helen~J. Wang, and Li~Zhuang.
\newblock Enabling security in cloud storage {SLA}s with {CloudProof}.
\newblock In {\em USENIX ATC}. USENIX, 2011.

\bibitem{popa2011cryptdb}
Raluca~Ada Popa, Catherine M.~S. Redfield, Nickolai Zeldovich, and Hari
  Balakrishnan.
\newblock {CryptDB}: Protecting confidentiality with encrypted query
  processing.
\newblock In {\em SOSP}. ACM, 2011.

\bibitem{power2021cosmos}
Conor Power, Hiren Patel, Alekh Jindal, Jyoti Leeka, Bob Jenkins, Michael Rys,
  Ed~Triou, Dexin Zhu, Lucky Katahanas, Chakrapani~Bhat Talapady, Joshua Rowe,
  Fan Zhang, Rich Draves, Marc Friedman, Ivan Santa~Maria Filho, and Amrish
  Kumar.
\newblock The {Cosmos} big data platform at {Microsoft}: Over a decade of
  progress and a decade to look forward.
\newblock {\em VLDB}, 14(12), 2021.

\bibitem{raasveldt2019duckdb}
Mark Raasveldt and Hannes M\"uhleisen.
\newblock {DuckDB}: an embeddable analytical database.
\newblock In {\em SIGMOD}. ACM, 2019.

\bibitem{ramakrishnan2003relational}
Raghu Ramakrishnan and Johannes Gehrke.
\newblock {\em Database Management Systems}, chapter~4.
\newblock McGraw-Hill, 3 edition, 2003.

\bibitem{airportbreach2022}
Rodman Ramezanian.
\newblock It's plane to see---unsecured servers can put lives at stake: How an
  exposed {S3} bucket exposed {3TB} worth of sensitive airport data, 2022.
\newblock
  \url{https://www.skyhighsecurity.com/en-us/about/resources/intelligence-digest/unsecured-servers-can-put-lives-at-stake.html}.
  Accessed: March 17, 2023.

\bibitem{sarfraz2015dbmask}
Muhammad~I. Sarfraz, Mohamed Nabeel, Jianneng Cao, and Elisa Bertino.
\newblock {DBMask}: Fine-grained access control on encrypted relational
  databases.
\newblock In {\em CODASPY}. ACM, 2015.

\bibitem{shafagh2017pilatus}
Hossein Shafagh, Anwar Hithnawi, Lukas Burkhalter, Pascal Fischli, and Simon
  Duquennoy.
\newblock Secure sharing of partially homomorphic encrypted {IoT} data.
\newblock In {\em SenSys}. ACM, 2017.

\bibitem{shafagh2015talos}
Hossein Shafagh, Anwar Hithnawi, Andreas Dr\"{o}scher, Simon Duquennoy, and Wen
  Hu.
\newblock {Talos}: Encrypted query processing for the {Internet of Things}.
\newblock In {\em SenSys}. ACM, 2015.

\bibitem{sherry2015blindbox}
Justine Sherry, Chang Lan, Raluca~Ada Popa, and Sylvia Ratnasamy.
\newblock {BlindBox}: Deep packet inspection over encrypted traffic.
\newblock In {\em SIGCOMM}. ACM, 2015.

\bibitem{song2000searchable}
Dawn~Xiaodong Song, David Wagner, and Adrian Perrig.
\newblock Practical techniques for searches on encrypted data.
\newblock In {\em S\&P}. IEEE, 2000.

\bibitem{spideroak}
{SpiderOak}.
\newblock Space cybersecurity solutions for hybrid space | {SpiderOak}.
\newblock \url{https://spideroak.com/}. Accessed: April 16, 2023.

\bibitem{stefanov2014dsse}
Emil Stefanov, Charalampos Papamanthou, and Elaine Shi.
\newblock Practical dynamic searchable encryption with small leakage.
\newblock In {\em NDSS}. Internet Society, 2014.

\bibitem{sync}
{Sync.com, Inc.}
\newblock {Sync} | secure cloud storage, file sharing and document
  collaboration.
\newblock \url{https://www.sync.com/}. Accessed: April 16, 2023.

\bibitem{tresorit}
{Tresorit}.
\newblock End-to-end encrypted cloud storage for businesses | tresorit.
\newblock \url{https://tresorit.com/}. Accessed: April 16, 2023.

\bibitem{tu2013monomi}
Stephen Tu, M.~Frans Kaashoek, Samuel Madden, and Nickolai Zeldovich.
\newblock Processing analytical queries over encrypted data.
\newblock {\em VLDB}, 6(5), 2013.

\bibitem{schaik2020sgaxe}
Stephan van Schaik, Andrew Kwong, Daniel Genkin, and Yuval Yarom.
\newblock {SGAxe}: How {SGX} fails in practice.
\newblock \url{https://sgaxeattack.com/}, 2020.

\bibitem{vinayagamurthy2019stealthdb}
Dhinakaran Vinayagamurthy, Alexey Gribov, and Sergey Gorbunov.
\newblock {StealthDB}: a scalable encrypted database with full {SQL} support.
\newblock {\em Privacy Enhancing Technologies}, 2019(3), 2019.

\bibitem{wang2016sieve}
Frank Wang, James Mickens, Nickolai Zeldovich, and Vinod Vaikuntanathan.
\newblock {Sieve}: Cryptographically enforced access control for user data in
  untrusted clouds.
\newblock In {\em NSDI}. USENIX, 2016.

\bibitem{zaharia2012rdds}
Matei Zaharia, Mosharaf Chowdhury, Tathagata Das, Ankur Dave, Justin Ma, Murphy
  McCauley, Michael~J. Franklin, Scott Shenker, and Ion Stoica.
\newblock {Resilient Distributed Datasets}: A fault-tolerant abstraction for
  in-memory cluster computing.
\newblock In {\em NSDI}. USENIX, 2012.

\bibitem{zhang2016fileinjection}
Yupeng Zhang, Jonathan Katz, and Charalampos Papamanthou.
\newblock All your queries are belong to us: The power of file-injection
  attacks on searchable encryption.
\newblock In {\em USENIX Security}. USENIX, 2016.

\bibitem{zheng2017opaque}
Wenting Zheng, Ankur Dave, Jethro~G. Beekman, Raluca~Ada Popa, Joseph~E.
  Gonzalez, and Ion Stoica.
\newblock Opaque: An oblivious and encrypted distributed analytics platform.
\newblock In {\em NSDI}. USENIX, 2017.

\bibitem{capitalonebreach2019}
{Zscaler}.
\newblock Anatomy of a cloud breach: How 100 million credit card numbers were
  exposed.
\newblock {\em {Zscaler}}, 2021.
\newblock
  \url{https://www.zscaler.com/resources/white-papers/capital-one-data-breach.pdf}.
  Accessed: March 17, 2023.

\bibitem{fansummittalk2022}
Corey Zwart, Itai Weiss, and Steve Touw.
\newblock Turning fan data into an asset.
\newblock \url{https://youtu.be/DYtEmdr3kOc}. Accessed: April 5, 2023.

\end{thebibliography}

\clearpage
\appendix

\section{Full Backend Protocol Description}\label{app:full_protocol_description}

\subsection{EncryptTable}

\noindent
Input:
\begin{itemize}[leftmargin=*, topsep=0ex, noitemsep]
    \item Partition (index $p$) of table $t$
    \begin{itemize}[leftmargin=*, topsep=0ex, noitemsep]
        \item $m_{r, c}$ denotes value of cell at row $r$ and column $c$
    \end{itemize}
    \item Table key $\tabkey{}$ (chosen randomly; same for all partitions)
\end{itemize}
Outputs: Partition (index $p$) of encrypted table $t'$\\
Algorithm:
Sample $\tabkey{} \stackrel{\$}{\leftarrow} \{0, 1\}^\lambda$.
For $p$ in $1 \ldots n^\mathsf{part}$:
\begin{itemize}[leftmargin=*, topsep=0ex, noitemsep]
    \item For each row index $r$ in $1 \ldots n_{p}^{\mathsf{row}}$:
    \begin{itemize}[leftmargin=*, topsep=0ex, noitemsep]
        \item Derive row key $k_r \definedas{} \prf{\tabkey{}}{p \concat r}$
        \item For each column $c$ in $1 \ldots n^\mathsf{col}$:
        \begin{itemize}[leftmargin=*, topsep=0ex, noitemsep]
            \item Derive cell key $k_{r, c} \definedas{} \prf{k_{r}}{c}$
            \item The value of $t'$ at $r, c$ (in partition $p$) is $\mathsf{OTE}(k_{r, c}, m_{r, c})$
        \end{itemize}
    \end{itemize}
\end{itemize}

\subsection{AddFamily}

\noindent
Inputs:
\begin{itemize}[leftmargin=*, topsep=0ex, noitemsep]
    \item Encrypted partition (index $p$) of table $t$
    \item Table key $\tabkey{}$
    \item Family key $k^{\mathsf{fam}}$ (chosen randomly; same for all partitions)
    \item Indices $c_1, \ldots, c_{n^{\mathsf{proj}}}$ of columns projected from $t$
    \item Predicate functions $g_1 \ldots g_{n^{\mathsf{pred}}}$ used for selection
\end{itemize}
Outputs: Projection/selection/tagging columns for partition\\
Algorithm:
For each row index $r$ in $1 \ldots n_{p}^{\mathsf{row}}$:
\begin{itemize}[leftmargin=*, topsep=0ex, noitemsep]
    \item Compute the value of the projection column.
    \begin{itemize}[leftmargin=*, topsep=0ex, noitemsep]
        \item Define $k_r \definedas{} \prf{\tabkey{}}{p \concat r}$, as in EncryptTable.
        \item Compute $\projkey{r}$:
        \begin{enumerate}[leftmargin=*, topsep=0ex, noitemsep]
            \item If $n^{\mathsf{proj}} = 1$, then  $\projkey{r} = \prf{k_r}{c_1}$
            \item If $n^{\mathsf{proj}} = n^{\mathsf{col}}$, then $\projkey{r} = k_r$
            \item Else, sample $\projkey{r} \stackrel{\$}{\leftarrow} \{0, 1\}^\lambda$
        \end{enumerate}
        \item The value of the projection column at row $r$ is:
        \begin{enumerate}[leftmargin=*, topsep=0ex, noitemsep]
            \item If $n^{\mathsf{proj}} = 1$ or $n^{\mathsf{proj}} = n^{\mathsf{col}}$: $\prf{\projkey{r}}{0}$
            \item Else: $\enc{\projkey{r}}{k_{r, c_1} \concat \ldots \concat k_{r, c_{n^\mathsf{proj}}}} \concat \enc{\projkey{r}}{0}$
        \end{enumerate}
    \end{itemize}
    \item Compute the value of the selection column.
    \begin{itemize}[leftmargin=*, topsep=0ex, noitemsep]
        \item For each predicate $j$ in $1 \ldots n^{\mathsf{pred}}$:
        \begin{itemize}[leftmargin=*, topsep=0ex, noitemsep]
            \item Derive $\predkey{j} \definedas{} \prf{\famkey{}}{j}$
            \item Derive $\selkey{r, j} \definedas{} \prf{\predkey{j}}{g_j(m_{r, 1} \concat \ldots \concat m_{r, n^{\mathsf{col}}})}$
        \end{itemize}
        \item The value of the selection column at row $r$ is $\enc{\prf{\selkey{r, 1}}{0}}{\projkey{r}} \concat \ldots \concat \enc{\prf{\selkey{r, n^{\mathsf{pred}}}}{0}}{\projkey{r}}$
    \end{itemize}
    \item Compute the value of the tagging column.
    \begin{itemize}[leftmargin=*, topsep=0ex, noitemsep]
        \item For each predicate $j$ in $1 \ldots n^{\mathsf{pred}}$:
        \begin{itemize}[leftmargin=*, topsep=0ex, noitemsep]
            \item Derive $\tau_{\selkey{r, j}} \definedas{} \prf{\selkey{r, j}}{p}$.
            \item Derive $\mathsf{tag}_{\selkey{r, j}} \definedas{} \prf{\tau_{\selkey{r, j}}}{\mathsf{count}_{\selkey{r, j}}}$, and truncate it to the desired length.
        \end{itemize}
        \item Tagging column at row $r$ contains $\mathsf{tag}_{\selkey{r, 1}} \concat \ldots \concat \mathsf{tag}_{\selkey{r, n^{\mathsf{pred}}}}$
    \end{itemize}
\end{itemize}

\subsection{ViewGen}

\noindent
Inputs: 
\begin{itemize}[leftmargin=*, topsep=0ex, noitemsep]
    \item View family key $\famkey{}$
    \item List of wildcard values $X^j$ for each predicate $g_j$ in the view
\end{itemize}
Outputs: View key set $\viewkeyset{}$ \\
Algorithm: 
\begin{itemize}[leftmargin=*, topsep=0ex, noitemsep]
    \item Define $\viewkeyset{} \definedas{} []$
    \item For $j$ in $1 \ldots n^{pred}$:
    \begin{itemize}[leftmargin=*, topsep=0ex, noitemsep]
        \item Define $K^\mathsf{pred} \definedas{} []$ and define $\predkey{j} \definedas{} \prf{\famkey{}}{j}$
        \item For $x_i^j$ in $X^j$:
        \begin{itemize}[leftmargin=*, topsep=0ex, noitemsep]
            \item Let $k^{\mathsf{view}}_{j, i} \definedas{} \prf{\predkey{j}}{x_i^j}$ and append $k^{\mathsf{view}}_{j, i}$ to $K^\mathsf{pred}$
        \end{itemize}
        \item Append $K^\mathsf{pred}$ to $\viewkeyset{}$
    \end{itemize}
\end{itemize}

\subsection{RevealView}

\noindent
Inputs:
\begin{itemize}[leftmargin=*, topsep=0ex, noitemsep]
    \item $t^\mathsf{fam}$, the output of AddFamily on table $t$. The partition index is denoted $p$.
    \item View key set $\viewkeyset{}$
\end{itemize}
Outputs: Decrypted view over $t$\\
Algorithm: For each partition $p$ in $t^\mathsf{fam}$:
\begin{itemize}[leftmargin=*, topsep=0ex, noitemsep]
    \item For each key $k^{\mathsf{view}}_{j, i}$, define $\mathsf{count}_{k^{\mathsf{view}}_{j, i}}$ and initialize it to 0.
    \item For each $k^{\mathsf{view}}_{j, i}$ in $\viewkeyset{}$: Define $\tau_{k^{\mathsf{view}}_{j, i}} \definedas{} \prf{k^{\mathsf{view}}_{j, i}}{p}$.
    \item For each key $k^{\mathsf{view}}_{j, i}$, compute its NET as $\prf{\tau_{k^{\mathsf{view}}_{j, i}}}{0}$, truncated to the desired length.
    \item Define a map $N'$ from each NET to a tuple of the corresponding key $k^{\mathsf{view}}_{j, i}$ and the predicate index $j$.
    \item For each row $r$ in $t^\mathsf{fam}$:
    \begin{itemize}[leftmargin=*, topsep=0ex, noitemsep]
        \item For each tag in the tagging column:
        \begin{itemize}[leftmargin=*, topsep=0ex, noitemsep]
            \item Look up the decryption key $k^{\mathsf{view}}_{j, i}$ in $N'$. If not found, continue (i.e., skip to the next tag in the inner loop).
            \item Do $\mathsf{DecryptRow}(t^\mathsf{fam}[r], k^{\mathsf{view}}_{j, i}, j)$. If unsuccessful, continue (i.e., skip to the next tag in the inner loop).
            \item Increment $\mathsf{count}_{k^{\mathsf{view}}_{j, i}}$.
            \item Compute new NET for $k^{\mathsf{view}}_{j, i}$: $\prf{\tau_{k^{\mathsf{view}}_{j, i}}}{\mathsf{count}_{k^{\mathsf{view}}_{j, i}}}$, truncated to the desired length.
            \item Update $N'$ to reflect the new NET (i.e., remove mapping for old NET and add mapping for new NET).
        \end{itemize}
    \end{itemize}
\end{itemize}
\ul{\textsf{DecryptRow}}\\
Inputs:
\begin{itemize}[leftmargin=*, topsep=0ex, noitemsep]
    \item Encrypted row (index $r$) of $t^\mathsf{fam}$
    \item Candidate key $\selkey{}$ and matched predicate index $j$
\end{itemize}
Outputs: Decrypted row of the view $t'$, $t'[r]$\\
Algorithm: 
\begin{itemize}[leftmargin=*, topsep=0ex, noitemsep]
    \item Decrypt the $j$th entry of the selection column to reveal $\projkey{}$: $\projkey{} \definedas{} \dec{\prf{\selkey{}}{0}}{\mathsf{SelCol}[j]}$.
    \item Decrypt projection column with $\projkey{}$ to reveal cell keys:
    \begin{itemize}[leftmargin=*, topsep=0ex, noitemsep]
        \item If the projection column has one element, check that $\prf{\projkey{}}{0} = \mathsf{ProjCol}[1]$. Otherwise, check that $\dec{\projkey{}}{\mathsf{ProjCol}[2]} = 0$. If the check fails, abort.
        \item If the view includes only one column $c_1$ (so $n^\mathsf{proj} = 1$), then
        decrypt the value of $t'[r]$ as $\dec{\projkey{}}{t^\mathsf{fam}[r][c_1]}$.
        \item Else, if the view includes all columns (so $n^\mathsf{proj} = n^\mathsf{col}$), then
        re-derive the cell keys from $\projkey{}$. For each non-family column $c$ included in the view:
        \begin{itemize}[leftmargin=*, topsep=0ex, noitemsep]
            \item Derive the cell key $k_{r, c} \definedas{} \prf{\projkey{}}{c}$.
            \item Decrypt the value of $t'[r][c]$ as $\dec{k_{r, c}}{t^\mathsf{fam}[r][c]}$.
        \end{itemize}
        \item Else, decrypt the cell keys $k_{r, c_1} \concat \ldots \concat k_{r, c_{n^\mathsf{proj}}} = \dec{\projkey{}}{\mathsf{ProjCol}[2]}$. For each column $c$ in the view:
        \begin{itemize}[leftmargin=*, topsep=0ex, noitemsep]
            \item Decrypt the value of $t'[r][c]$ as $\dec{k_{r, c}}{t^\mathsf{fam}[r][c]}$.
        \end{itemize}
    \end{itemize}
\end{itemize}

\section{Cryptographic Treatment of \sys{}}
\label{app:formalism}

We now provide a cryptographic treatment of \sys{} and its security guarantees (described informally in \secref{s:informal_security}).

\subsection{System Model}

We define an EDL scheme as follows.

\begin{definition}
\label{def:edl}
An EDL (``encrypted data lake'') scheme is a tuple of four algorithms:
\begin{itemize}[leftmargin=*,topsep=0ex,noitemsep]
    \item $\mathsf{EncryptTable}(t, \tabkey{}) \rightarrow t'$
    \item $\mathsf{AddFamily}(t, \tabkey{}, \family{}, \famkey{}) \rightarrow t'$
    \item $\mathsf{ViewGen}(\view{}, \famkey{}) \rightarrow \viewkey{}$
    \item $\mathsf{RevealView}(t, \viewkey{}) \rightarrow t'$
\end{itemize}
\end{definition}

The syntax matches \figref{fig:api}, except that $\tabkey{}$ and $\famkey{}$ are \emph{inputs}, not \emph{outputs} (as in \secref{sec:backend}).
With the above syntax, the caller is assumed to sample $\tabkey{}$ uniformly at random before calling \textsf{EncryptTable}, and to sample $\famkey{}$ uniformly at random before calling \textsf{AddFamily}.
This is analogous to the caller sampling a symmetric key uniformly at random before invoking symmetric-key encryption.

Using this syntax allows our cryptographic formalism to model the case where \textsf{ViewGen} is called for a view family before \textsf{EncryptTable} or \textsf{AddFamily} are called (i.e., where a view key is generated for a view family before that view family is instantiated in a table).
In that sense, this syntax is more general than the one given in \figref{fig:api}.
\figref{fig:api} presents the API as it does because it is more suggestive of \sys{}'s intended use case, and is more similar to the API actually provided by our implementation.

We require the intuitive notion of ``completeness'' that \textsf{RevealView} indeed produces the same result as materializing the view in plaintext.
We define completeness as follows.

\begin{definition}
\label{def:edl_completeness}
An EDL scheme is said to be complete if for any table $t$, \viewfamily{} $f$, and \sqlview{} $v$ where $v \in f$, the following holds:\\

\textbf{If} we run $\mathsf{EncryptTable}(t, k^t) \rightarrow t'$ \textbf{and} $\mathsf{AddFamily}(t', k^t, f, k^f) \rightarrow t''$ (and possibly other \textsf{AddFamily} operations on the table) \textbf{and} $\mathsf{ViewGen}(v, k^f) \rightarrow k^v$ \textbf{then} $\mathsf{RevealView}(t'', k^v) = v(t)$ \\

where $v(t)$ denotes the view $v$ applied to the table $t$, and where $k^t$ and $k^f$ are sampled uniformly at random.
\end{definition}

\subsection{Security Definition}

In defining security, we consider \viewfamilies{} with the constraint that \aviewfamily{} must $\sqlnsselect$ all columns referenced in its $\sqlnswhere$ clause.
This means that the set of cell positions described by \asqlview{} includes both the cell positions that that the view reveals and the cell positions that the $\sqlnswhere$ clause references for rows where cells are revealed (as discussed in \secref{s:supported_sql}).

Our main security definition is for a notion that we call \emph{selective security}.

\begin{definition}[Selective Security EDL Game]
\label{def:selective_security_game}
Selective security for an EDL scheme is defined in terms of the following game between an adversary $\adv{}$ and challenger $\chl{}$.

\noindent
\textbf{Initialization.}
$\adv{}$ chooses the schemas of $u$ relations and the size $n_i$ for each relation (for $1 \leq i \leq u$), where $n_i$ includes the number of partitions and size of each partition.
$\adv{}$ also chooses a set $P$ of cell positions.
Each cell position is a tuple $(i, p, r, c)$ where $i$ is the index of a relation, $p$ is a partition ID, $r$ is the row index, and $c$ is the column index.
$\adv{}$ ``declares'' $u$, its chosen schemas, $n_i$ ($\forall 1 \leq i \leq u$), and $P$, by sending them to $\chl{}$.\\
\textbf{Phase 1.}
$\adv{}$ repeatedly issues queries to $\chl{}$.
There are two types of queries that $\adv{}$ may make:
\begin{itemize}[leftmargin=*,noitemsep,topsep=0ex]
    \item $\adv{}$ asks $\chl{}$ to instantiate a view family in one of the relations.
    $\chl{}$ samples the corresponding $\famkey{}$ uniformly at random but does not give $\famkey{}$ to $\adv{}$.
    \item $\adv{}$ asks $\chl{}$ to generate a key for a view $v$ belonging to a view family specified in a previous query (identified by the ID of the earlier query).
    $\chl{}$ generates the view key by calling \textsf{ViewGen} on the $\famkey{}$ generated in the earlier query and gives the resulting view key to $\adv{}$.
\end{itemize}
\textbf{Challenge.} $\adv{}$ chooses two $u$-length tuples of relations, \reltup{0} and \reltup{1}.
The choice is subject to the following restrictions:
\begin{enumerate}[leftmargin=*,noitemsep,topsep=0ex]
    \item \label{constraint:sel_attack_set} A cell in \reltup{0} and a cell in \reltup{1}, both at position $(i, p, r, c)$, must contain identical data if $(i, p, r, c) \notin P$, and must contain data of the same length if $(i, p, r, c) \in P$.
    \item \label{constraint:sel_views} Each requested view $v$ must describe the same set of cell positions, denoted $v^\mathcal{R}$, whether applied to \reltup{0} or \reltup{1}, and those cell positions must be disjoint from $P$ (i.e., it must hold that $v^\mathcal{R} \cap P = \varnothing$).
\end{enumerate}
$\adv{}$ sends \reltup{0} and \reltup{1} to $\chl{}$.
$\chl{}$ chooses a random bit $b\stackrel{\$}{\leftarrow}\{0, 1\}$.
Then, it encrypts \reltup{b} according to \sys{}'s protocol, by calling $\textsf{EncryptTable}$ on each table (with a randomly sampled $\tabkey{}$ for each table).
For each view family specified in the queries in Phase 1, it calls $\textsf{AddFamily}$ to instantiate the view family in the encrypted \reltup{b}, using the $\tabkey{}$ for the specified table and the $\famkey{}$ for that view family.
Finally, it sends the resulting encrypted relations, $t_1, \ldots, t_u$, to $\adv{}$.\\
\textbf{Phase 2.} $\adv{}$ can issue additional queries of the same form as those in Phase 1, subject to Constraint \#\ref{constraint:sel_views} in the Challenge phase.
When $\adv{}$ requests a view family, $\chl{}$ instantiates the view family by calling $\textsf{AddFamily}$ and responds to $\adv{}$ with the updated table $t'$ right away.
As in Phase 1, $\adv{}$ does not reveal $\famkey{}$ to $\chl{}$.\\
\textbf{Guess.} $\adv{}$ outputs $b' \in \{0, 1\}$, and wins the game if $b = b'$.
The advantage of the adversary $\adv{}$ is defined as $\left|\Pr[\adv{}\text{ wins}] - \frac{1}{2}\right|$.
\end{definition}

Now, we define selective security in terms of that game.

\begin{definition}
\label{def:selective_security}
An EDL scheme is \emph{selectively secure} if, for any non-uniform probabilistic polynomial-time adversary $\adv$, it holds that $\adv$'s advantage in the Selective Security EDL Game (Definition~\ref{def:selective_security_game}) is negligible.
\end{definition}

\subsubsection{Discussion of Our Security Definition}

We refer to our notion of security as \emph{selective security} because we require $\adv$ to \emph{select}, ahead of time, which cells to attack.
The cells that the adversary chooses to attack are those in $P$, and the adversary is not allowed to query cells that they are choosing to attack (i.e., any set $P_v$ of queried cells positions must be disjoint from $P$).
This is analogous to \emph{selective} security definitions used in the context of Identity-Based Encryption~\cite{boneh2005hibe} and Attribute-Based Encryption~\cite{goyal2006kpabe}, where the adversary must select, ahead of time, which ID or set of attributes to attack, and may not query the private key corresponding to that ID or those attributes.

In a fully \emph{adaptive} notion of security, the adversary would not have to declare the set $P$ of cell positions up front, and would not be restricted by $P$ in the Queries phase.
Because our selective security definition requires the adversary to declare $P$ during the Initialization phase, with the analogous restrictions in the Queries phase, it is weaker (i.e., protects against a weaker adversary) than an adaptive security definition would be.
We use a selective notion of security rather than a fully adaptive one because adaptive notions of security are difficult to achieve in practice.

\subsubsection{Comparison to Static Security}

A commonly used weaker alternative to adaptive security is static security.
In static security definitions, the adversary $\adv$ commits up front to a sequence of queries that she will make to $\chl{}$.
While our selective security definition may be weaker than fully adaptive security, our selective security definition is at least as strong as a static security definition.
The intuition for this is that, given a sequence of queries declared up front by a static adversary, one can compute the set $P$ of cell positions to declare up front in the Selective Security EDL Game.

To present this argument more formally, we first provide a formal definition of static security.
This allows us to prove via a reduction that selective security is at least as strong as static security.

\begin{definition}[Static Security EDL Game]
\label{def:static_security_game}
Static security for an EDL scheme is defined in terms of the following game.

\noindent
\textbf{Initialization.}
$\adv{}$ chooses the schemas of $u$ relations and the size $n_i$ for each relation (for $1 \leq i \leq u$), where $n_i$ includes the number of partitions and size of each partition.
$\adv{}$ chooses $F$ and $V$, defined as follows:
\begin{itemize}[leftmargin=*,noitemsep,topsep=0ex]
    \item $F$ is a list of view families, including tables (specified by index) in which $\chl{}$ must instantiate each of them.
    \item $V$ maps each view family in $F$ to a list of views in that view family, whose keys $\chl{}$ must generate and reveal to $\adv{}$.
\end{itemize}
Each cell position is a tuple $(i, p, r, c)$ where $i$ is the index of a relation, $p$ is a partition ID, $r$ is the row index, and $c$ is the column index.
$\adv{}$ ``declares'' $u$, its chosen schemas, $n_i$ ($\forall 1 \leq i \leq u$), $F$, and $V$, by sending them to $\chl{}$.\\
\textbf{Challenge.} $\adv{}$ chooses two $u$-length tuples of relations, \reltup{0} and \reltup{1}, subject to the following restrictions:
\begin{enumerate}[leftmargin=*,noitemsep,topsep=0ex]
    \item \label{constraint:static_attack_set} A cell in \reltup{0} and a cell in \reltup{1}, both at position $(i, p, r, c)$, must contain identical data if any view $v \in V$ describes $(i, p, r, c)$, and must contain data of the same length if no view in $V$ describes $(i, p, r, c)$.
    \item \label{constraint:static_views} Each view $v \in V$ must describe the same set of cell positions, denoted $v^\mathcal{R}$, whether applied to \reltup{0} or \reltup{1}.
\end{enumerate}
$\adv{}$ sends \reltup{0} and \reltup{1} to $\chl{}$.
$\chl{}$ chooses a random bit $b\stackrel{\$}{\leftarrow}\{0, 1\}$.
Then, it encrypts \reltup{b} according to \sys{}'s protocol, by calling $\textsf{EncryptTable}$ on each table (with a randomly sampled $\tabkey{}$ for each table).
For each view family in $F$, it invokes \textsf{AddFamily} to instantiate the view family in the encrypted \reltup{b}, using the $\tabkey{}$ for the specified table and a randomly sampled $\famkey{}$ for each view family.
Finally, it sends the resulting encrypted relations, $t_1, \ldots, t_u$, to $\adv{}$.
$\chl{}$ does not send the family keys $\famkey{}$ for the view families in $F$ to $\adv{}$.
$\chl{}$ invokes \textsf{ViewGen} for each view in $V$ (using the corresponding family key $\famkey{}$) and sends the resulting view keys to $\adv{}$.\\
\textbf{Guess.} $\adv{}$ outputs $b' \in \{0, 1\}$, and wins the game if $b = b'$.
The advantage of the adversary $\adv{}$ is defined as $\left|\Pr[\adv{}\text{ wins}] - \frac{1}{2}\right|$.
\end{definition}

Now, we define static security in terms of that game.

\begin{definition}
\label{def:static_security}
An EDL scheme is \emph{statically secure} if, for any non-uniform probabilistic polynomial-time adversary $\adv$, it holds that $\adv$'s advantage in the Static Security EDL Game (Definition~\ref{def:static_security_game}) is negligible.
\end{definition}

Using these definitions, we state and prove, in formal terms, that selective security is at least as strong as static security.

\begin{theorem}
If an EDL scheme is selectively secure (Definition~\ref{def:selective_security}), then it is statically secure (Definition~\ref{def:static_security}).
\end{theorem}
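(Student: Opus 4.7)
The plan is a direct reduction: given any PPT static adversary $\advstatic$, I will construct a PPT selective adversary $\advsel$ whose advantage in the Selective Security EDL Game equals the advantage of $\advstatic$ in the Static Security EDL Game. If the EDL scheme is selectively secure, $\advsel$'s advantage is negligible, so $\advstatic$'s is too, which gives static security. The enabling observation is that in the static game, $\advstatic$ receives no challenger message before the Challenge response; thus its Initialization output $(u,\mathrm{schemas},n_i,F,V)$ and its challenge pair $(\mathcal{R}_0,\mathcal{R}_1)$ are deterministic functions of its own random tape, which $\advsel$ can simulate offline before ever interacting with its own challenger.

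Concretely, $\advsel$ will proceed as follows. First, sample a random tape $\rho$ and run $\advstatic(\rho)$ internally with no oracle access, intercepting the declared $(u,\mathrm{schemas},n_i,F,V)$ and the produced $(\mathcal{R}_0,\mathcal{R}_1)$. Second, for each $v \in V$, compute the set $v^{\mathcal R}$ of cell positions described by $v$ on $\mathcal{R}_0$; by static Constraint~2 the same set is described on $\mathcal{R}_1$. Set $P \definedas \mathcal{U} \setminus \bigcup_{v \in V} v^{\mathcal R}$, where $\mathcal{U}$ is the universe of cell positions determined by the declared schemas and sizes. Third, enter the Selective Security game, declaring $(u,\mathrm{schemas},n_i,P)$ in Initialization. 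Fourth, in Phase~1 issue one instantiate-family query per view family in $F$ and one view-key query per $v \in V$, caching the returned view keys. Fifth, submit the cached $(\mathcal{R}_0,\mathcal{R}_1)$ as the Challenge pair, receive the encrypted tables, and forward them together with the cached view keys to $\advstatic$ as its simulated Challenge response. Finally, output $\advstatic$'s guess.

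Two verifications complete the argument. Constraint compliance: every queried view $v$ satisfies $v^{\mathcal R} \cap P = \varnothing$ by construction of $P$, matching selective Constraint~2; and the partition of cell positions into ``must-be-identical'' (those in $\bigcup_{v \in V} v^{\mathcal R}$) versus ``must-have-equal-length'' (those in $P$) coincides across the two games under this choice, so static Constraint~1 implies selective Constraint~1. Distribution equivalence: in both games the tables are produced by \textsf{EncryptTable} under a fresh $\tabkey{}$ per table and then augmented by \textsf{AddFamily} under fresh independent $\famkey{}$'s, while view keys come from \textsf{ViewGen} applied to those $\famkey{}$'s; the only superficial difference is that $\advsel$ triggers \textsf{AddFamily} via Phase~1 queries whereas the static challenger performs the same operations at Challenge time, which cannot change the joint distribution of $(t_1,\ldots,t_u,\viewkeyset{})$ since all randomness is sampled independently in both orderings.

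I do not foresee a serious obstacle; the proof is essentially bookkeeping. The only subtle point worth spelling out is the ordering issue just mentioned, together with the fact that \textsf{ViewGen} depends only on $\famkey{}$ and the view definition and not on any table (immediate from Definition~\ref{def:edl}), so Phase~1 view-key queries are well-defined and distributed identically to their static-game counterparts. Once these are confirmed, the simulation is perfect, $\advsel$'s advantage equals $\advstatic$'s, and the theorem follows by contrapositive.
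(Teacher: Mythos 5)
Your proposal is correct and follows essentially the same route as the paper: a reduction that runs the static adversary offline to extract $(u,\mathrm{schemas},n_i,F,V)$ and the challenge pair, sets $P$ to the complement of $\bigcup_{v\in V} v^{\mathcal{R}}$, replays $F$ and $V$ as Phase~1 queries, forwards the challenge, and checks that the static constraints imply the selective ones. Your extra remarks---that the static adversary's outputs are a function of its random tape alone, and that \textsf{ViewGen}'s independence from the tables makes the Phase~1/Challenge reordering distribution-preserving---are just more explicit versions of points the paper leaves implicit.
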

\begin{proof}
We prove this statement using contraposition---we show that, if a non-uniform probabilistic polynomial-time adversary $\advstatic{}$ can win the Static Security EDL Game (Definition~\ref{def:static_security_game}) with non-negligible probability for the EDL scheme, then there exists a non-uniform probabilistic polynomial-time adversary $\advsel{}$ that can win the Selective Security EDL Game (Definition~\ref{def:selective_security_game}) with non-negligible probability for the EDL scheme.
We do so using a reduction, constructing $\advsel{}$ as an algorithm that uses $\advstatic{}$ as a black box.
The following paragraphs describe $\advsel{}$.

For the \textbf{Initialization} phase, $\advsel{}$ observes the output of $\advstatic{}$ in its Initialization and Challenge phases and ``declares'' the same values for $u$, the relations' schemas, and $n_i$ ($\forall 1 \leq i \leq u$), by sending them to $\chl{}$.
$\advsel{}$ also ``declares'' the set $P$, computed as follows.
Because \reltup{0} and \reltup{1} have the same number of relations, and corresponding relations have the same size, partitioning, and schema, the set of all possible cell positions is the same for both; let $T$ refer to this set of all possible cell positions.
For each view $v \in V$, $\advsel{}$ computes $v^{\mathcal{R}}$, the set of cell positions revealed by that view (which both games require to be the same in both \reltup{0} and \reltup{1}).
Then $\advsel{}$ computes $Q$, a set of cell positions, as
$$Q=\bigcup_{v \in V} v^{\mathcal{R}}$$
and computes $P$ as $P=T\setminus{}Q$.
In effect, $P$ is the set of cell positions \emph{not} covered by any of the views chosen by $\advstatic{}$.

In \textbf{Phase 1}, $\advsel{}$ issues queries corresponding to the values $F$ and $V$ output by $\advstatic{}$ during its Initialization phase.

In the \textbf{Challenge} phase, $\advsel{}$ sends $\chl{}$ the same values for \reltup{0} and \reltup{1} as $\advstatic{}$ would send to its challenger.
If $\advstatic{}$ outputs valid \reltup{0} and \reltup{1} that satisfy the requirements of the Challenge phase in the Static Security EDL Game, then the same \reltup{0} and \reltup{1}, output by $\advsel{}$, will satisfy the requirements of the Challenge phase in the Selective Security EDL Game:
\begin{enumerate}[leftmargin=*,noitemsep,topsep=0ex]
    \item $P$ is chosen such that the cell positions in $P$ are exactly those that are not described by any view $v \in V$. Therefore, ``any view $v \in V$ describes $(i, p, r, c)$'' is equivalent to ``$(i, p, r, c) \notin P$''. Thus, if \reltup{0} and \reltup{1} satisfy Constraint \#\ref{constraint:static_attack_set} in the Static Security EDL Game, then they satisfy Constraint \#\ref{constraint:sel_attack_set} in the Selective Security EDL Game.
    \item Constraint \#\ref{constraint:sel_views} of the Selective Security EDL Game has two conditions.
    The first condition is identical to Constraint \#\ref{constraint:static_views} in the Static Security EDL Game.
    The second condition holds because of how we constructed $P$---by definition, $Q \not\subseteq P$, and $\forall v \in V$ $v^\mathcal{R} \subseteq Q$, so $\forall v \in V$ $v^\mathcal{R} \cap P = \varnothing$.
    Therefore, if \reltup{0} and \reltup{1} satisfy Constraint \#\ref{constraint:static_views} in the Static Security EDL Game, then they satisfy Constraint \#\ref{constraint:sel_views} in the Selective Security EDL Game.
\end{enumerate}
$\advsel{}$ receives the encrypted relations from $\chl{}$ and sends them to $\advstatic{}$.
$\advsel{}$ also gives $\advstatic{}$ the view keys it obtained from $\chl{}$ in Phase 1.

In \textbf{Phase 2}, $\advsel{}$ does not issue any queries.

As its \textbf{Guess}, $\advsel{}$ outputs the same bit $b'$ as $\advstatic{}$.
The information that $\advsel{}$ gave $\advstatic{}$ is distributed identically to what $\advstatic{}$ would receive from a challenger in the Static Security EDL Game who chose the same value of $b$ as $\chl{}$ did.
Therefore, $\advsel{}$ has the same advantage in the Selective Security EDL Game as $\advstatic{}$ does in the Static Security EDL Game.
\end{proof}

\subsection{Preliminaries}
\label{s:formalism_preliminaries}

\sys{}'s protocol depends on three cryptographic primitives: a pseudorandom function (PRF), an encryption scheme $\encsymb{}$, and a one-time encryption scheme $\onetimeencsymb{}$.
Here, we discuss these cryptographic primitives, their security guarantees, and how we instantiate them in our \sys{} implementation.

\subsubsection{Pseudorandom Functions}

A \emph{pseudorandom function} (PRF) is a deterministic function that accepts as input a key $k$ and a message $x$; we denote its application as $\prf{k}{x}$.
A formal treatment of PRFs is given by Boneh and Shoup~\cite[Definition 4.2]{boneh2023blockciphers}.

In our \sys{} implementation, we instantiate PRFs in two different ways.

The first way is to simply use the AES block cipher as a PRF.
This is a valid approach because the Switching Lemma~\cite[Theorem 4.4]{boneh2023blockciphers} guarantees that if AES is a secure block cipher (pseudorandom permutation), then it is also a secure PRF.
This technique is very efficient; in particular, when using AES-128 as a PRF, the key size is the same as the block size, allowing the PRF's output to be used directly as the key to another PRF invocation.
Unfortunately, this approach limits the size of the PRF input to the block size (e.g., 16 bytes in the case of AES-128), so we cannot instantiate every PRF in \sys{}'s protocol in this way.

The second way, which supports arbitrary-size inputs, is to use CBC-MAC, instantiated with the AES block cipher, together with prepending the input length to the input.
This approach is valid because the CBC-MAC construction, alone, produces a prefix-free PRF~\cite[Theorem 6.3]{boneh2023integrity}, so when it is used with a prefix-free encoding of the input, as is obtained when prepending the input's length, it produces a fully secure PRF~\cite[Theorem 6.8]{boneh2023integrity}.

\subsubsection{Symmetric-Key Encryption}
\sys{} requires a symmetric-key encryption scheme $\encsymb$ with two properties: CPA-security and key-privacy.
A formal treatment of CPA-security is given by Boneh and Shoup~\cite[Theorem 5.2]{boneh2023cpa}.
Bellare et al.~\cite{bellare2001keyprivacy} coin the term \emph{key-privacy} in the public-key setting; here we require an analogous property in the symmetric-key setting.
Abadi and Rogaway~\cite[Definition 2]{abadi2000twoviews} provide a formal security definition for a symmetric-key encryption scheme that is both CPA-secure and key-private; they refer to key-private encryption as ``which-key concealing'' and to such CPA-secure and key-private symmetric-key encryption schemes as ``type-1 secure.''

In our \sys{} implementation, we instantiate the symmetric-key encryption scheme $\encsymb{}$ by using the AES block cipher in CTR mode.
Abadi and Rogaway~\cite[Section 4.4]{abadi2000twoviews} explain that CTR-mode encryption is indeed ``type-1 secure.''

\subsubsection{One-Time Symmetric-Key Encryption}
\sys{} makes use of a one-time symmetric-key encryption scheme $\onetimeencsymb$.
It does so for efficiency; it would be correct and secure to instantiate $\onetimeencsymb$ in exactly the same way as we instantiated $\encsymb$, but the idea is that $\onetimeencsymb$ can be instantiated in a more efficient way.
This is possible because, unlike $\encsymb$, $\onetimeencsymb$ need not support key reuse.
Specifically, $\onetimeencsymb$ is semantically secure, as defined by Boneh and Shoup~\cite[Definition 2.2]{boneh2023symmetrickey}, but not necessarily CPA-secure.

In our \sys{} implementation, we instantiate $\onetimeencsymb$ by using AES in CTR mode for messages longer than the key (just as in $\encsymb{}$), but using the one-time pad scheme~\cite[Example 2.2]{boneh2023symmetrickey}, which is more efficient, for short messages.

\subsection{Security Guarantee and Proof of Security}
Now, we state and sketch a proof of a theorem describing \sys{}'s security.

\begin{theorem}[\sys{}'s Security Guarantee]
If \sys{} is instantiated with a secure PRF, a CPA-secure and key-private encryption scheme $\encsymb{}$, and a semantically secure one-time encryption scheme $\onetimeencsymb{}$, then \sys{} is a selectively secure EDL scheme under Definition \ref{def:selective_security}.
\end{theorem}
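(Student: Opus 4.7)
The plan is to prove selective security via a sequence of hybrid games $H_0, H_1, \ldots, H_n$, where $H_0$ is the Selective Security EDL Game with $b = 0$ and $H_n$ is the same game with $b = 1$. Each consecutive pair will differ in a single cryptographic operation, enabling a reduction to PRF pseudorandomness, CPA security or key-privacy of $\encsymb{}$, or semantic security of $\onetimeencsymb{}$. In $H_n$, the challenger's output will depend only on cells outside $P$---which coincide in $\reltup{0}$ and $\reltup{1}$ by Constraint~1---so the adversary's advantage is bounded by a polynomial sum of negligible quantities.

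The first block of hybrids replaces PRF outputs keyed by secrets never leaked to the adversary with fresh random strings. Since each table key $\tabkey{}$ is sampled freshly and never revealed, row keys $k_r = \prf{\tabkey{}}{p \concat r}$ and cell keys $k_{r, c} = \prf{k_r}{c}$ become independent uniform strings. Since each family key $\famkey{}$ is likewise never revealed, predicate keys $\predkey{j}$ become random, and then selection keys $\selkey{r, j} = \prf{\predkey{j}}{g_j(\mathsf{row})}$ and view-key components $\viewkeycomp{j, x} = \prf{\predkey{j}}{x}$ become random subject only to the consistency that $\selkey{r, j} = \viewkeycomp{j, x}$ whenever $g_j(\mathsf{row}) = x$. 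A further PRF hop randomizes tagging keys $\tau_{\selkey{r, j}} = \prf{\selkey{r, j}}{p}$ and the associated tags for every selection key the adversary cannot derive.

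The second block eliminates all information about challenge cells in $P$. The enabling invariant is that, by Constraint~2 together with \sys{}'s rule that every column referenced by $g_j$ appears in the $\sqlnsselect{}$ clause, no queried view $v$ can both match a row $r$ and project a column $c$ with $(i, p, r, c) \in P$, and the row-membership of every queried view depends only on cells outside $P$ and is therefore identical across $\reltup{0}$ and $\reltup{1}$. This implies, for every in-$P$ cell, that the cell key $k_{r, c}$ is used only in its own $\onetimeencsymb{}$ ciphertext; that the projection key $\projkey{r}$---whether fresh random or, under the $\sqlbg{\sqlselect{}\sqlwildcard{}}$ or single-column optimizations, equal to $k_r$ or $k_{r, c_1}$---is never decryptable by any queried view; and that the encryption key $\prf{\selkey{r, j}}{0}$ behind the $j$th selection-column entry is an independent uniform string. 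I would then sequentially (a)~replace $\onetimeenc{k_{r, c}}{m_{r, c}}$ with $\onetimeenc{k_{r, c}}{0}$ by semantic security of $\onetimeencsymb{}$; (b)~replace the corresponding projection-column ciphertexts with encryptions of zero by CPA security of $\encsymb{}$; and (c)~replace the corresponding selection-column ciphertexts with encryptions of zero under independent random keys, using CPA security and key-privacy of $\encsymb{}$ in tandem.

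The hard part will be the structural invariant underlying the second block: rigorously verifying that the adversary can neither directly query nor indirectly derive $\selkey{r, j}$, $\projkey{r}$, or $k_{r, c}$ for any in-$P$ cell. This demands a careful case analysis on the projection optimizations (the most dangerous case is $\projkey{r} = k_r$, which would cascade to every cell key in row $r$ and is ruled out precisely because a $\sqlbg{\sqlselect{}\sqlwildcard{}}$ view matching $r$ would cover the in-$P$ cell, contradicting $v^{\mathcal{R}} \cap P = \varnothing$), on cross-predicate interactions within a row, and on the tagging layer, where truncated-tag false positives arise identically in both challenge worlds and are filtered by the $\prf{\projkey{r}}{0}$ or $\enc{\projkey{r}}{0}$ consistency check at the projection layer, so they cannot aid the distinguisher. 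Once this invariant is established, each hybrid transition becomes a routine reduction, and summing over the polynomially many hybrids yields a negligible total advantage.
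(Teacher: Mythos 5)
Your overall architecture matches the paper's: a hybrid argument that successively replaces PRF outputs with random values and ciphertexts with encryptions of zero, anchored by the same structural invariant (the paper calls a family/row pair a \emph{critical combination} when the family $\sqlnsselect{}$s an in-$P$ column of that row, and argues exactly as you do that no queried view key can decrypt such a row, since a matching view would place the in-$P$ cell in $v^{\mathcal{R}}$, contradicting $v^{\mathcal{R}} \cap P = \varnothing$). However, the ordering of your second block is backwards, and as written steps (a) and (b) are reductions that cannot be carried out. To replace $\onetimeenc{k_{r,c}}{m_{r,c}}$ with $\onetimeenc{k_{r,c}}{0}$ by semantic security, the reduction must simulate the challenger's entire response without knowing $k_{r,c}$; but at that point the projection column still contains $\enc{\projkey{r}}{k_{r,c_1} \concat \ldots \concat k_{r,c_{n}}}$, whose \emph{plaintext} includes $k_{r,c}$ (column $c$ is $\sqlnsselect{}$ed by the critical family---that is what makes the combination critical---so your claim that the cell key ``is used only in its own $\onetimeencsymb{}$ ciphertext'' is false precisely in the cases that matter). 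Likewise, your step (b) reduction to CPA security under $\projkey{r}$ must be run while the selection column still contains $\enc{\prf{\selkey{r,j}}{0}}{\projkey{r}}$, an encryption whose plaintext is the very key the reduction does not hold. Standard CPA/semantic security gives no guarantee when the key appears as a message in other ciphertexts the reduction must produce. This dependency forces the paper's opposite order: first zero out the selection column (CPA security, then key-privacy to detach the key), which hides $\projkey{r}$; then zero the projection column, which hides the cell keys (and the row key under the $\projkey{r}=k_r$ optimization); only then randomize the cell keys and finally invoke $\onetimeencsymb{}$ on the cells.

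A related overclaim sits in your first block: you randomize the cell keys $k_{r,c}=\prf{k_r}{c}$ there, justified only by the secrecy of $\tabkey{}$. That justifies randomizing the \emph{row} keys, but not the cell keys, because $k_r$ itself can reach the adversary---legitimately, via \textsf{RevealView}, whenever a granted all-columns view matches row $r$ (then $\projkey{r}=k_r$), and, for in-$P$ rows, $k_r$ still sits as a plaintext inside the selection column until that column has been zeroed. The paper accordingly defers cell-key randomization to its hybrid $\hyb{9}$, restricted to positions in $P$, after the selection- and projection-column hybrids have established that the relevant row keys are never revealed. Your endpoint framing ($H_0$ with $b=0$, $H_n$ with $b=1$, yet $H_n$ ``depends only on cells outside $P$'') is also internally inconsistent; you want either the paper's formulation (a final hybrid whose distribution is independent of $b$) or a symmetric down-and-up chain through a zeroed middle hybrid. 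None of this requires a new idea---your invariant is the right one and the set of hops is essentially the paper's---but the proof as proposed does not go through until the hybrids are reordered to respect the key-dependency chain.
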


\begin{proof}[Proof Sketch]
We make a hybrid argument, presenting a sequence of hybrid games that present the same interface to $\adv{}$ as the Selective Security EDL Game but in which $\chl{}$ replies with differently formed messages.
The first hybrid $\hyb{0}$ is identical to the Selective Security EDL Game, and the final hybrid $\hyb{*}$ is one where $\adv{}$'s advantage is $0$ by construction.
We argue that for any two adjacent hybrid games in the sequence $\hyb{j}$ and $\hyb{j + 1}$, the difference in $\adv{}$'s advantage is negligible.
Because the number of hybrid games is polynomial in the security parameter $\secp{}$, this implies that $\adv{}$'s advantage in the game $\hyb{0}$ is negligible, as desired.

We now present the sequence of hybrid games.
In each step except the last, we change only how $\chl{}$ responds to queries.
In some cases, these changes are localized to only how certain rows are processed in an \textsf{AddFamily} operation for certain \viewfamilies{}.
We refer to such combinations of rows and \viewfamilies{} as \textbf{\textit{critical combinations}}.
Specifically, \aviewfamily{} $f$ and a row $(p ,r)$ form a critical combination if $f$ $\sqlnsselect{}$s some column $(i, c)$ such that $(i, p, r, c) \in P$.
Our hybrid games in this proof sketch should be interpreted as key stages; between each pair of stages are multiple hybrid games, where only one instance of a cryptographic primitive is changed at a time.

\parhead{Hybrid $\hyb{0}$}
This game is exactly the Selective Security EDL Game (i.e., Definition~\ref{def:selective_security_game}).

\parhead{Hybrid $\hyb{1}$}
This is the same as $\hyb{0}$ except that we replace every row key with a truly random value.
The difference in $\adv$'s advantage between the previous stage and this stage is negligible because of the security of a PRF keyed by the truly random table key, together with the fact that each row key for a table is generated using a different input to the PRF (namely $r \concat c$).

\parhead{Hybrid $\hyb{2}$}
This is the same as $\hyb{1}$ except that we replace each predicate key $\predkey{j}$ with a truly random value.
The difference in $\adv$'s advantage between the previous stage and this stage is negligible because of the security of a PRF keyed by the truly random view family key $\famkey{}$, together with the fact that each predicate key $\predkey{j}$ for a given \viewfamily{} is generated using a different input to the PRF (namely $j$).

\parhead{Hybrid $\hyb{3}$}
This is the same as $\hyb{2}$ except that we replace PRFs keyed on each predicate key $\predkey{j}$ with truly random functions.
The difference in $\adv$'s advantage between the previous stage and this stage is negligible because of the security of a PRF keyed by the truly random predicate keys.
Note that, although the PRFs keyed on $\predkey{j}$ are now replaced with uniformly random functions and $\chl$ uses these random function to derive the selection keys $\selkey{r, j}$, this is \emph{not} equivalent to $\chl$ sampling each row's selection key uniformly at random.
Specifically, selection keys may still repeat across rows; this is because multiple rows may have the same value of $g_j(\mathsf{row})$, which is used as the input to the PRF.

\parhead{Hybrid $\hyb{4}$}
This is the same as $\hyb{3}$ except that, for critical combinations of \viewfamilies{} and rows, we replace PRFs keyed on selection keys $\selkey{r, j}$ with truly random functions when $\chl$ runs \textsf{AddFamily}.
Specifically, we associate each selection key with its own truly random function, and replace each PRF invocation keyed on that selection key with an invocation of its truly random function.
This impacts the derivation of encryption keys for the selection layer (used to encrypt the projection keys) and the derivation of each partition's tagging key, for those critical combinations.
The difference in $\adv$'s advantage between the previous stage and this stage is negligible because of the security of PRFs keyed by the selection keys, which are sampled randomly (due to $\hyb{3}$).
This requires that, for these critical combinations, the selection keys are not revealed to $\adv$; this holds because $\adv$ may only request view keys for which $P_v \cap P = \varnothing$, which implies that none of $\adv$'s requested view keys includes a $\viewkey_{j, x}$ matching a selection key for a critical combination as part of $\adv$'s requested view keys.
In order for this to hold, it is important that views must $\sqlnsselect{}$ columns referenced in their $\sqlnswhere{}$ clause (i.e., $P_v$ includes the cell positions that the $\sqlnswhere{}$ clause references for rows where cells are revealed), as mentioned in \secref{s:supported_sql}.

\parhead{Hybrid $\hyb{5}$}
This is the same as $\hyb{4}$ except that, for critical combinations of \viewfamilies{} and rows, we replace tags with random strings when running $\textsf{AddFamily}$.
The difference in $\adv$'s advantage between the previous stage and this stage is negligible because of the security of a PRF keyed on the tagging keys, which are the result of a random function due to $\hyb{4}$; while the tagging keys for a predicate may repeat among rows, the combination of tagging key and counter is always different when generating each tag (i.e., the counters guarantee that the PRF invocation to generate the tag is always performed with a different counter when the key is reused).

\parhead{Hybrid $\hyb{6}$}
This is the same as $\hyb{5}$ except that, for critical combinations of \viewfamilies{} and rows, we replace each encryption of the projection key in the selection column with an encryption of a ``zero string'' under the same key, when $\chl$ runs $\textsf{AddFamily}$.
The difference in $\adv$'s advantage between the previous stage and this stage is negligible because of the CPA-security of the encryption scheme $\encsymb{}$.

\parhead{Hybrid $\hyb{7}$}
This is the same as $\hyb{6}$ except that, for critical combinations of AC view families and rows, we replace encryptions of ``zero strings'' in the selection column with encryptions of ``zero strings'' \emph{under random keys}.
The difference in $\adv$'s advantage between the previous stage and this stage is negligible because of the key-privacy of the encryption scheme $\encsymb{}$.
Observe that, at this hybrid stage, the ciphertexts in the selection column, for critical combinations, are entirely independent of the data chosen by $\adv$ at cell positions in $P$.

\parhead{Hybrid $\hyb{8}$}
This is the same as $\hyb{7}$ except that, for critical combinations of \viewfamilies{} and rows, we replace the encryption of cell keys with an encryption of ``zero strings'' of the same length.
In some optimized cases, the encryption of cell keys is not present; for such view families, this hybrid step changes nothing compared to $\hyb{6}$.
The difference in $\adv$'s advantage between the previous stage and this stage is negligible because of the CPA security of the encryption scheme when using $\projkey{r}$ as the encryption key (since, for critical combinations, the encryption of $\projkey{r}$ has been replaced with an encryption of a ``zero string,'' meaning that $\projkey{r}$ is never revealed to $\adv$).

\parhead{Hybrid $\hyb{9}$}
This is the same as $\hyb{8}$ except that, for cell positions in $P$, we replace every cell key with a truly random value.
The difference in $\adv$'s advantage between the previous stage and this stage is negligible because of the security of a PRF keyed by the truly random row keys (after the $\hyb{1}$ step).
Importantly, after the $\hyb{7}$ step, row keys are no longer revealed to $\adv$ for critical combinations, even in the optimized case where $\projkey{r} = k_r$.
Note that, for this to work in the optimized case where the projection key is the row key, replacing the encryption of 0 with a PRF invocation at 0 is crucial.

\parhead{Hybrid $\hyb{10}$}
This is the same as $\hyb{9}$ except that cells at positions in $P$, which normally contain $\onetimeencsymb{}$ encryptions of cell data, are changed to instead contain $\onetimeencsymb{}$ encryptions of ``zero strings'' of the same length.
The difference in $\adv$'s advantage between the previous stage and this stage is negligible because of the semantic security of the one-time encryption scheme $\onetimeencsymb{}$, together with the fact that the cell keys are random (due to $\hyb{9}$) and never reused.

In $\hyb{10}$, the data chosen by $\adv$ at cell positions in $P$ have \emph{no influence} on the values that $\chl$ gives to $\adv$ in response to any query; this is because the encryptions of those data have been replaced with encryptions of zero strings, and any keys that are derived from them have been replaced with random values.
The distribution of data that $\chl$ gives to $\adv$ is identical whether $b = 0$ or $b = 1$; in particular, the data chosen by $\adv$ at cell positions outside of $P$ are identical in \reltup{0} and \reltup{1}.
Therefore, $\adv{}$'s advantage in the $\hyb{10}$ game is $0$.
We take $\hyb{*} = \hyb{10}$, completing the proof sketch.
\end{proof}

\subsection{Discussion}
The above covers the security of \sys{}'s cryptographic backend protocol.
Our implementation of \sys{} uses a collision-resistant hash to support inequality operations on strings.
It does so by hashing strings to integers, to leverage \sys{}'s support for inequalities over integer quantities.
This use of collision-resistant hashing is not covered by our formal definition above because collision-resistant hashing is merely used as a wrapper around \sys{}'s cryptographic protocol; \sys{}'s cryptographic backend protocol does not itself use collision-resistant hashing.
In particular, any mechanism to map each string in a column to a unique integer would be sufficient for use with \sys{}.

\section{Analytical Queries}
\label{app:analytical_queries}

\noindent
\begin{minipage}{\linewidth}
    \lstinputlisting[language=sql, breaklines=true, caption={Analytical queries over the Synthea dataset.}, label={lst:synthea-analytical}, captionpos=b]{queries/synthea.sql}
\end{minipage}

\noindent
\begin{minipage}{\linewidth}
    \lstinputlisting[language=sql, breaklines=true, caption={Analytical queries over the LHBench TPC-DS dataset.}, label={lst:lhbench-analytical}, captionpos=b]{queries/lhbench.sql}
\end{minipage}

\noindent
\lstinputlisting[language=sql, breaklines=true, caption={Complex analytical query over the LHBench TPC-DS dataset.}, label={lst:lhbench-complex-analytical}, captionpos=b]{queries/lhbench-longquery.sql}

\noindent
\begin{minipage}{\linewidth}
    \lstinputlisting[language=sql, breaklines=true, caption={Analytical queries over the NYC Yellow Taxi dataset.}, label={lst:nyctaxi-analytical}, captionpos=b]{queries/nyctaxi.sql}
\end{minipage}

\end{document}